\theoremstyle{plain}
\newtheorem{proposition}{Proposition}
\newtheorem{lemma}{Lemma}
\newtheorem{assumption}{Assumption}
\newtheorem{corollary}{Corollary}
\newtheorem{remark}{Remark}
\def\bmeta{{\bm \eta}}
\def\bmg{{\bm g}}
\def\bmsigma{{\bm \sigma}}
\def\ue{\underline{e}}
\def\uL{\underline{L}}
\def\bme{{\bm e}}
\def\bmue{\underline{\bme}}
\def\bmA{{\bm A}}
\newenvironment{IndRemark}{
  \begin{center}
  \begin{minipage}{0.95\textwidth}
    \begin{remark}
      \em
}
{
\end{remark}  
\end{minipage}
\end{center}
}
\begin{document}

\title{\textbf{Spin-0 fields and the NP-constants close to spatial infinity in Minkowski spacetime}}
\author[1]{Edgar Gasper\'in
  \footnote{E-mail address:{\tt edgar.gasperin@tecnico.ulisboa.pt}}}
\author[,2]{Rafael Pinto
   \footnote{E-mail address:{\tt rafael.pastor.pinto@tecnico.ulisboa.pt}}}
  \affil[1,2] {CENTRA, Departamento de F\'isica, Instituto Superior
    T\'ecnico IST, Universidade de Lisboa UL, Avenida Rovisco Pais 1,
    1049 Lisboa, Portugal}

\maketitle
  
\begin{abstract}

  The NP constants of massless spin-0 fields propagating in Minkowski
  spacetime are computed close to spatial and null infinity by means
  of Friedrich's \emph{$i^0$-cylinder}. Assuming certain regularity
  condition on the initial data ensuring that the field extends
  analytically to the critical sets,
  it is shown that the NP constants at future $\mathscr{I}^{+}$ and
  past null infinity $\mathscr{I}^{-}$ are independent of each
  other. In other words, the classical NP constants at
  $\mathscr{I}^{\pm}$ stem from different parts of the initial data
  given on a Cauchy hypersurface.  In contrast, it is shown that,
  using a slight generalisation of the classical NP constants, the
  associated quantities ($i^0$-cylinder NP constants) do not require
  the regularity condition being satisfied and give rise to conserved
  quantities at $\mathscr{I}^{\pm}$ that are determined by the same
  piece of initial data which, in turn, correspond to the terms controlling the
  regularity of the field. Additionally, it is shown how the
  conservation laws associated to the NP constants can be exploited
  to  construct, in flat space, heuristic asymptotic-system expansions
  which are sensitive to the
  logarithmic terms at the critical sets.
\end{abstract}

\textbf{Keywords:} Newman-Penrose constants, asymptotic system,
cylinder at spatial infinity.

\section{Introduction}\label{sec:Intro}

The Newman-Penrose (NP) constants are a set of conserved charges at
null infinity originally introduced in \cite{NewPen68}. These
quantities are 2-surface integrals computed at cuts $\mathcal{C}
\approx \mathbb{S}^2$ of null infinity ($\mathscr{I}$). In the linear
theory (Maxwell's equations and linearised gravity) there is an
infinite hierarchy of these conserved quantities while in the
non-linear theory (General Relativity) only 10 quantities remain
conserved ---see \cite{NewPen68}.  Although the general physical
interpretation of these charges is still a source of debate ---see
\cite{PenRin86, DaiVal02, Bac09}--- these quantities remain conserved
in general asymptotically flat spacetimes even with non-trivial
dynamics such as that of black hole collisions ---see \cite{DaiVal02}.

\medskip

Recently, the analysis of asymptotic quantities, in particular the
Bondi-Metzner-Sachs (BMS) charges, has gained some interest due to its
connection with the concept of black hole soft hair
\cite{HawPerStro16, HawPerStro17, HeLysMitStr15}.  A key element in
these discussions is the relation between asymptotic charges at past
and future null infinity.  The singular behaviour of the conformal
structure at spatial infinity makes the matching problem particularly
difficult.  Hence, the relation between conserved quantities at past
and future null infinity crucially depends on the (partial) resolution
of the singular nature of spatial infinity ($i^0$). In this context,
Friedrich's cylinder representation of spatial infinity is
particularly adapted for this problem. Unlike other conformal
representations adapted to spatial infinity where $i^0$ is mapped to
a point, in Friedrich's conformal representation, $i^0$ is blown-up to
a cylinder $I \approx (-1,1) \times\mathbb{S}^2$. The
regions of spacetime where future and past null infinity meet spatial
infinity correspond to the \emph{critical sets} $I^{\pm} = \{\pm 1\}
\times \mathbb{S}^2$, respectively.  One of the advantages of this
set-up is that
one can relate quantities at the critical sets with
data given on a Cauchy hypersurface $\mathcal{S}$ intersecting the
cylinder at $\{0\} \times \mathbb{S}^2$.  Although Friedrich's
cylinder representation can in principle be constructed for general
asymptotically flat spacetimes through a class of curves known as
\emph{conformal geodesics} (defining the $F$-gauge), in general, it is
hard to write in explicit and closed form the spacetime metric in such
gauge ---except for the case of the Minkowski, de Sitter and anti-de
Sitter spacetimes.  Nonetheless, the study of fields propagating in
the $i^0$-cylinder conformal representation of Minkowski spacetime is
still a valuable model as the degeneracy of the conformal structure at
$i^0$ still has an impact in the regularity of the solution even in
this simplified set-up.  In particular, it has been shown that even in
the case of linear fields (spin-1 and spin-2 fields) with prescribed
analytic data close to $i^0$, the solution develops logarithmic terms
at the critical sets leading to a polyhomogeneous peeling behaviour
---see \cite{Val07, GasVal20, MinMacVal22}.  More recently, the
analysis for the spin-1 and spin-2 fields propagating in this
background was used for the matching problem of the BMS charges at
past and future null infinity in \cite{ValAli22} where the charges are
computed in terms of initial data given on a Cauchy slice.  A result,
similar in spirit but for the NP constants was obtained in earlier in
\cite{GasVal20}.  The aim of this article is to fill the gap for the
case of the scalar wave equation (spin-0) by computing the classical
NP constants and a modified version of them ($i^0$-cylinder NP
constants) which we show correspond to the part of the initial data
that controls the regularity of the field at the critical sets.
Additionally, we show how the main identities behind the definition of
the NP constants can be used to obtain a generalisation of the
asymptotic system heuristics of \cite{DuaFenGasHil21} which is
sensitive to the $i^0$-cylinder logs. This represents the
``higher-order'' generalisation of \cite{DuaFenGasHil22b} in flat
spacetime. Further generalisations for asymptotically flat spacetimes
will be pursued in future work.

\section{Conformal methods and the $i^0$-cylinder representation}\label{sec:Conf}

In Penrose's conformal approach \cite{Pen63}, the fall-off of the
gravitational field and the asymptotic structure of spacetime is to be
studied not in terms of the \emph{physical} spacetime
$(\tilde{\mathcal{M}},\tilde{\bmg})$ ---satisfying the Einstein field
equations $\tilde{G}_{ab}=\tilde{T}_{ab}$--- but rather in terms of an
conformally related spacetime $(\mathcal{M},\bmg)$ ---which will be
called the \emph{unphysical} spacetime--- where
$\bmg=\Omega^2\tilde{\bmg}$.  The conformal factor $\Omega$ serves as
a boundary defining function in the sense that $\mathscr{I}$ is
defined as the set of points in the unphysical manifold for which
$\Omega=0$ but $\mathbf{d}\Omega \neq 0$. A distinguished point in the
conformal structure of asymptotically flat spacetimes is spatial
infinity ($i^0$) for which $\Omega=0$ and $\mathbf{d} \Omega=0$ ---see
\cite{Val16, Fra04, Fri02CEE} for an exhaustive discussion on the
conformal approach and the conformal Einstein field equations.
Following these naming conventions for the physical/unphysical
spacetime, in subsection \ref{sec:i0cylinder} the $i^0$-cylinder
conformal representation of the Minkowski spacetime is revisited and
in subsection \ref{sec:Physical_NP_F_frames} the relation between the
Newman-Penrose, Friedrich and physical null frames is given.

\subsection{The $i^0$-cylinder in  Minkowski spacetime}\label{sec:i0cylinder}

Let $(\tilde{t},\tilde{\rho},\vartheta^A)$ with $A=1,2$ denote
\emph{physical} spherical polar coordinates where $\vartheta^A$
represent some coordinates on $\mathbb{S}^2$. In these coordinates the
\emph{physical} Minkowski metric $\tilde{\bmeta}$ reads
\begin{align}\label{eq:MinkowskiMetricPhysicalPolar}
\tilde{\bmeta}=-\mathbf{d}\tilde{t}\otimes\mathbf{d}\tilde{t}
+\mathbf{d}\tilde{\rho}\otimes \mathbf{d}\tilde{\rho}+\tilde{\rho}^2
\mathbf{\bm\sigma},
\end{align}
with~$\tilde{t}\in(-\infty, \infty)$, $\tilde{\rho}\in [0,\infty)$
  where~$\bm\sigma$ denotes the standard metric on~$\mathbb{S}^2$.  As
  an intermediate step to obtain the conformal representation we are
  interested in, let us introduce \emph{unphysical} spherical polar
  coordinates $(t, \rho, \vartheta^A)$ given by
\begin{align}
 \label{eq:physicalToUnphysicaltrho}
   t=\frac{\tilde{t}}{\tilde{\rho}^2-\tilde{t}^2}, \qquad \rho =
   \frac{\tilde{\rho}}{\tilde{\rho}^2-\tilde{t}^2}.
\end{align}
Then, expressing the physical Minkowski metric $\tilde{\bmeta}$ in the
unphysical spherical polar coordinates $(t, \rho, \vartheta^A)$ one
can readily identify the \emph{inversion} conformal representation of
the Minkowski spacetime $(\mathbb{R}^4,\bmeta)$ where
\begin{align}
 \label{eq:InverseMinkowskiMetricDef}
 \bmeta=\Xi^2 \hspace{0.5mm}\tilde{\bmeta},
\end{align}
with
\begin{align}
\label{eq:InverseMinkowskiUnphysicaltrhocoords}
\bmeta=-\mathbf{d}t\otimes\mathbf{d}t +\mathbf{d}\rho\otimes
\mathbf{d}\rho+\rho^2 \mathbf{\bm\sigma}, \qquad \Xi=\rho^2 -t^2,
\end{align}
where~$t\in(-\infty,\infty)$ and~$\rho\in~[0,\infty)$.  In this
  conformal representation, spatial infinity and the origin are
  interchanged as $i^0$ is represented by the point
  $(t=0,\rho=0)$ in $(\mathbb{R}^4,\bmeta)$---see \cite{Ste91, Val16,
    GasVal20, FenGas23}.  Introducing coordinates
  $(\tau,\rho,\vartheta^A)$, where $t= \rho \tau$ and considering the
  conformal metric $\bmg= \rho^{-2}\bmeta$ one obtains
\begin{align*}
\bmg=- \mathbf{d}\tau\otimes \mathbf{d}\tau
+\frac{(1-\tau^2)}{\rho^2}\mathbf{d}\rho \otimes \mathbf{d}\rho -
\frac{\tau}{\rho} \mathbf{d}\rho\otimes \mathbf{d}\tau -
\frac{\tau}{\rho} \mathbf{d}\tau \otimes \mathbf{d}\rho + \bmsigma,
\end{align*}
where the \emph{unphysical metric} $\bmg$ is related to the physical
metric via
\begin{align}
  \bmg=\Theta^2\tilde{\bm\eta}, \qquad \text{where} \qquad
  \Theta := \frac{\Xi}{\rho}=\rho (1-\tau^2).
\end{align}
The unphysical metric $\bmg$ will be called the $i^0$-cylinder
metric for short and the coordinates $(\tau, \rho, \vartheta^A)$
referred to as the $F$-coordinate system ---see \cite{MinMacVal22, GasVal20, ValAli22}.
The relation between the physical coordinates and the $F$-coordinates is given by
\begin{align}\label{eq:Ftophys}
  \tau = \frac{\tilde{t}}{\tilde{\rho}}, \qquad \rho =
  \frac{\tilde{\rho}}{\tilde{\rho}^2-\tilde{t}^2}.
\end{align}
The inverse transformation can be succinctly written as
\begin{align}
  \tilde{t}=\frac{\tau}{\Theta}, \qquad \tilde{\rho}=
  \frac{1}{\Theta}.
\end{align}
Additionally, notice that the physical retarded and advanced times,
defined as $\tilde{u}:=\tilde{t}- \tilde{\rho}$ and $ \tilde{v}:= \tilde{t}+
\tilde{\rho}$ respectively,
are related to the unphysical advanced and retarded times via
\begin{align}\label{eq:UnphysPhysAdvRet}
v:=t-\rho=-\rho(1-\tau)=\tilde{v}^{-1}, \qquad u:=t+\rho=\rho(1+\tau)=-\tilde{u}^{-1}.
\end{align}
In this conformal representation of the Minkowski spacetime
future and past null infinity are located at
\begin{align}
 \mathscr{I}^{+} \equiv \{ p \in \mathcal{M} \; \rvert\; \tau(p) =1
 \}, \qquad \mathscr{I}^{-} \equiv \{ p \in \mathcal{M} \; \rvert \;
 \tau(p) =-1\}.
\end{align}
The name $i^0$-cylinder comes from the fact that spatial infinity
gets mapped to an extended set $I \approx \mathbb{R}\times \mathbb{S}^2$
\begin{align*}
 I \equiv \{ p \in \mathcal{M} \; \rvert \;\; |\tau(p)|<1, \;
 \rho(p)=0\}, \qquad I^{0} \equiv \{ p \in \mathcal{M}\; \rvert \;
 \tau(p)=0, \; \rho(p)=0\},
\end{align*}
and the region where spatial and null infinity meets is
represented by
\begin{align*}
 I^{+} \equiv \{ p\in \mathcal{M} \; \rvert \; \tau(p)=1, \; \rho(p)=0
 \}, \qquad I^{-} \equiv \{p \in \mathcal{M}\; \rvert \; \tau(p)=-1,
 \; \rho(p)=0\},
\end{align*}
which are called the critical sets.

\subsection{The physical, the NP and the F-frames}\label{sec:Physical_NP_F_frames}

As it will be used in the up-coming discussion, 
we introduce  the following adapted $\bmg$-null frame:
\begin{align}\label{eq:Fframe}
 \bme
  =(1+\tau)\bm\partial_{\tau}{} -
  \rho\bm\partial_{\rho}{},  \qquad
  \bmue =(1-\tau)\bm\partial_{\tau}{} +
  \rho\bm\partial_{\rho}{}, \qquad
  \bme_{\bmA } \qquad \text{with} \qquad  \bmA = \{\uparrow, \downarrow\},
\end{align}
where the $\bme_{\bmA}$ is a complex null frame on $\mathbb{S}^2$ with associated
coframe $\bm\omega^{\bmA}$ such that the standard metric on  $\mathbb{S}^2$
reads
\begin{align}
\bm\sigma=2(\bm\omega^{\uparrow}\otimes
\bm\omega^{\downarrow}+\bm\omega^{\downarrow}\otimes \bm\omega^{\uparrow}).
\end{align}
\begin{IndRemark}
  As NP-frames (hinged at $\mathscr{I}^{\pm}$, respectively) will be
  employed, the symbols $\pm$ will be used to identify such
  frames. Hence, to avoid confusion, we have labelled the elements of
  the frame on $\mathbb{S}^2$ with the symbols $\uparrow \downarrow$.
\end{IndRemark}
The tetrad  $\{\bme, \bmue, \bme_{\bmA}\}$  will be called the $F$-frame to
distinguish it from other frames that will be introduced later on.
Observe that, in terms of the $F$-frame,
the unphysical metric $\bmg$ can be expressed as
\begin{align}\label{eq:UnphysicalMetricNullTetrad}
g_{ab}=e_{(a}\ue_{b)} -  \omega^{\uparrow}_{(a}\omega^{\downarrow}_{b)}\,,
\end{align}
so that the normalisation of the tetrad
is~$e_a\ue^a=-\omega^{\uparrow}_ae_{\downarrow}^a=-2$ while all the other
contractions vanish.
Similarly, the $\tilde{\bmeta}$-null frame denoted as $\{L, \; \uL,\;
\tilde{\bme}_{\bmA}\}$ and given by
\begin{align}
L = \bm\partial_{\tilde{t}} +\bm\partial_{\tilde{\rho}}, \qquad \uL =
\bm\partial_{\tilde{t}}-\bm\partial_{\tilde{\rho}}, \qquad
\tilde{\bme}_{\bmA}=\tilde{\rho}^{-1}\bme_{\bmA},
\end{align}
will be referred to as the physical null frame.  With exception of the
physical null vectors $L,\uL$ all the quantities referring to the
physical spacetime will be decorated with a tilde over the main symbol. Hence,
quantities such as $\tilde{\phi}$ and $\phi$ correspond to
physical and unphysical (conformally rescaled) fields respectively.

\medskip
Another unphysical frame that will play an important role in the
definition of the NP constants is the NP-frame. The relation between
the F-frame and the NP-frame in Minkowski spacetime was derived and
discussed in \cite{GasVal20} ---see also \cite{ValAli22}.  Using the
results in \cite{GasVal20} and the expressions described in this
section, one obtains the following proposition giving the relation
between these three frames:

 \begin{proposition}\label{Prop:NPtoFgauge}
   The NP-frame (hinged at $\mathscr{I}^{\pm}$), the $F$-frame and the
   standard physical frame for the Minkowski spacetime are related via:
\begin{align*}
  \text{\emph{NP hinged at} $\mathscr{I}^{+}$}:& \quad\bme^{+} =
  4(\Lambda_{+})^{2} \bme = \Theta^{-2} L, && \bmue^{+}=
  \tfrac{1}{4}(\Lambda_{+})^{-2}\bmue = \uL, && \bme_{\bmA}^{+}=
  \bme_{\bmA}= \Theta^{-1}\tilde{\bme}_{\bmA}\\ \text{\emph{NP hinged
    at} $\mathscr{I}^{-}$}:& \quad\bme^{-} =
  \tfrac{1}{4}(\Lambda_{-})^{-2} \bme = L, && \bmue^{-}=
  4(\Lambda_{-})^{2}\bmue = \Theta^{-2} \uL, && \bme_{\bmA}^{-}=
  \bme_{\bmA}= \Theta^{-1}\tilde{\bme}_{\bmA}.
\end{align*}
where the conformal factor $\Theta$ and boost parameter $\kappa$
in~$F$-coordinates and physical coordinates read, respectively,
\begin{align}\label{eq:CF-thetaAndBoostParameter}
  \Theta := \rho (1-\tau^2) = \frac{1}{\tilde{\rho}}, \qquad \varkappa
  := \frac{1+\tau}{1-\tau} = -\frac{\tilde{v}}{\tilde{u}}.
\end{align}
Similarly, the Lorentz transformation relating the NP and F-frames is
encoded in the quantities
\begin{align}\label{eq:LorentzTransf}
  (\Lambda_{+})^{2}:= \Theta^{-1}\varkappa^{-1}=
  \rho^{-1}(1+\tau)^{-2}, && (\Lambda_{-})^{2}:= \Theta^{-1}\varkappa=
  \rho^{-1}(1-\tau)^{-2}.
\end{align}
 \end{proposition}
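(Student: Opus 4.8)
The plan is to recast the proposition as a list of explicit vector-field identities and to verify them by expressing all three frames in the single $F$-coordinate basis $\{\partial_\tau,\partial_\rho,\bme_{\bmA}\}$. The angular legs require no work: from $\tilde{\bme}_{\bmA}=\tilde\rho^{-1}\bme_{\bmA}$ together with $\Theta=\tilde\rho^{-1}$ in \eqref{eq:CF-thetaAndBoostParameter} one reads off $\Theta^{-1}\tilde{\bme}_{\bmA}=\bme_{\bmA}$, and since the NP boost acts only on the two real null legs this also equals $\bme_{\bmA}^{\pm}$. Thus the entire content lies in the $\bme$ and $\bmue$ directions, and it suffices to establish the chains $\bme^{+}=4(\Lambda_{+})^{2}\bme=\Theta^{-2}L$ and $\bmue^{+}=\tfrac14(\Lambda_{+})^{-2}\bmue=\uL$, the $\mathscr{I}^{-}$ statements then following by the reflection $\tau\to-\tau$.

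For the null legs I would pass to null coordinates. With $u=t+\rho$, $v=t-\rho$ and $\tilde u=\tilde t-\tilde\rho$, $\tilde v=\tilde t+\tilde\rho$, a short chain-rule calculation starting from \eqref{eq:Fframe} gives the clean identities $\bme=2\rho\,\partial_{v}$ and $\bmue=2\rho\,\partial_{u}$, while the definition of the physical null frame gives $L=2\partial_{\tilde v}$ and $\uL=2\partial_{\tilde u}$. The key input is \eqref{eq:UnphysPhysAdvRet}, which expresses $\tilde v$ as a function of $v$ alone and $\tilde u$ as a function of $u$ alone; the resulting one-dimensional chain rule yields $\partial_{\tilde v}=(\mathrm{d}v/\mathrm{d}\tilde v)\,\partial_{v}$ and $\partial_{\tilde u}=(\mathrm{d}u/\mathrm{d}\tilde u)\,\partial_{u}$, with Jacobians proportional to $v^{2}$ and $u^{2}$. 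Substituting $v^{2}=\rho^{2}(1-\tau)^{2}$ and $u^{2}=\rho^{2}(1+\tau)^{2}$ then writes $L$ as an explicit $\rho,\tau$-multiple of $\bme$ and $\uL$ as an explicit multiple of $\bmue$.

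It then remains to separate the two structures hidden in these multiples. Using $\Theta=\rho(1-\tau^{2})$ one recognises the conformal weight $\Theta^{-2}$ that relates the NP leg to the physical leg, while using $\varkappa=(1+\tau)/(1-\tau)$ and the definitions of $(\Lambda_{\pm})^{2}$ in \eqref{eq:LorentzTransf} one recognises the boost weight that relates the NP leg to the $F$-frame; at $\mathscr{I}^{+}$ the outgoing leg carries the $\Theta^{-2}$ rescaling while the ingoing leg $\uL$ is left unscaled, and the reflection $\tau\to-\tau$ (equivalently $\varkappa\to\varkappa^{-1}$) interchanges these roles to give the $\mathscr{I}^{-}$ frame. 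As a final consistency check I would verify that $\{\bme^{\pm},\bmue^{\pm}\}$ is a genuine boost of $\{\bme,\bmue\}$ and confront the contraction $\bmg(\bme^{\pm},\bmue^{\pm})$ with the normalisation $e_a\ue^a=-2$ recorded in \eqref{eq:UnphysicalMetricNullTetrad}. I expect the only real difficulty to be bookkeeping rather than ideas: the inversion in \eqref{eq:UnphysPhysAdvRet} carries minus signs that are easy to lose and that decide whether $L$ emerges parallel or anti-parallel to $\bme$ (both being future-pointing must be imposed), and the precise numerical prefactors multiplying $(\Lambda_{\pm})^{2}$ are fixed by the NP-tetrad normalisation convention inherited from \cite{GasVal20}. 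Reconciling that convention with the explicit $F$-frame \eqref{eq:Fframe}, so that the constants $4$ and $\tfrac14$ land in the correct places, is the step I would carry out most carefully.
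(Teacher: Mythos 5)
Your strategy---verifying the frame chains by passing to null coordinates and chaining through the inversion \eqref{eq:UnphysPhysAdvRet}---is sound, and the intermediate identities you state are correct: $\bme=2\rho\,\partial_{v}$, $\bmue=2\rho\,\partial_{u}$, $L=2\partial_{\tilde{v}}$, $\uL=2\partial_{\tilde{u}}$, with one-dimensional Jacobians $\partial_{\tilde{v}}=v^{2}\partial_{v}$ and $\partial_{\tilde{u}}=u^{2}\partial_{u}$, provided one first corrects the sign in \eqref{eq:UnphysPhysAdvRet} (the inversion actually gives $v=-\tilde{v}^{-1}$, $u=-\tilde{u}^{-1}$; you rightly note that future-pointing of $L$ forces this). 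Note also that the paper itself offers no proof of this Proposition---it imports the statement from \cite{GasVal20}---so a self-contained chain-rule verification like yours is the right thing to attempt, and is genuinely different from (and more explicit than) what the paper does.

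The gap is precisely the step you defer: making ``the constants $4$ and $\tfrac14$ land in the correct places'' is not bookkeeping that can be made to work out, because with the paper's own definitions it fails. Completing your computation gives $L=\rho(1-\tau)^{2}\bme$ and $\uL=\rho(1+\tau)^{2}\bmue$, hence
\begin{align*}
\Theta^{-2}L=(\Lambda_{+})^{2}\bme,\qquad \uL=(\Lambda_{+})^{-2}\bmue,\qquad
L=(\Lambda_{-})^{-2}\bme,\qquad \Theta^{-2}\uL=(\Lambda_{-})^{2}\bmue,
\end{align*}
with $(\Lambda_{\pm})^{2}=\rho^{-1}(1\pm\tau)^{-2}$ as in \eqref{eq:LorentzTransf}; i.e.\ every chain in the Proposition holds \emph{without} the factors $4$ and $\tfrac14$. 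As written, the three-way equalities are therefore mutually inconsistent with \eqref{eq:Fframe}, the definition of $L,\uL$, and \eqref{eq:LorentzTransf}: either the factors $4,\tfrac14$ must be dropped, or $(\Lambda_{\pm})^{2}$ must be redefined as $\tfrac14\Theta^{-1}\varkappa^{\mp 1}$ (presumably the convention inherited from \cite{GasVal20}). Your proposed consistency check cannot detect or repair this, since both the stated and the corrected chains preserve the normalisation $g(\bme^{+},\bmue^{+})=-2$---the discrepancy is a pure boost, which is invisible to that contraction. So your proof, carried to completion, establishes a corrected version of the statement rather than the statement as written; a finished write-up must either record this factor-of-$4$ discrepancy explicitly or adopt the rescaled $(\Lambda_{\pm})^{2}$. (Nothing downstream in the paper is damaged by this: the later sections only use the first equality of each chain, e.g.\ $\bme^{+}=4(\Lambda_{+})^{2}\bme$ in \eqref{eq:bmeplus1philmraw}, together with $\tilde{\rho}L=\varkappa^{-1}\bme$---which is exactly your factor-free relation---and never the composite $\bme^{+}=\Theta^{-2}L$.)
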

 \begin{IndRemark}
    Notice that the NP-frame is not a null tetrad for the physical
    metric $\tilde{\bmeta}$, but rather respect to a conformally
    related metric $\bmg'=\vartheta^2\tilde{\bmeta}$ for some
    conformal factor $\vartheta$. In the particular case of the
    Minkowski spacetime it turns out that such conformally related
    metric $\bmg'$ and the $i^0$-cylinder metric $\bmg$
    coincide.  Namely, in general $\bmg' =\kappa^2 \bmg$, with
    $\kappa=1$ in the particular case of the Minkowski
    spacetime. Formal asymptotic expansions for the conformal
    transformation $\kappa$ and the Lorentz transformation relating
    the NP and F-gauges have been computed for time-symmetric initial
    data of asymptotically flat spacetimes in \cite{FriKan00}.
 \end{IndRemark}

\section{Spin-2 and Spin-0 fields as models for (linearised) gravity}\label{sec:ModelEquations}

Curvature oriented formulations such as that of Newman \& Penrose's
\cite{NewPen62}, Christodoulou \& Klainerman's \cite{ChrKla93} or
H. Friedrich's conformal Einstein field equations \cite{Fri81,
  Fri02CEE}, have at its core equations for the Weyl tensor derived
from the Bianchi identities.  These type of formulations are in stark
contrast with purely metric formulations such as the ADM, Z4
and generalised harmonic gauge formulations.  Hence, it is natural
that when looking for suitable linear models for the field equations
of General Relativity that one encounters different equations
better adapted to each type of formulation.  For instance, in
Penrose's take \cite{PenRin86} on linearised gravity, (in vacuum and
around flat spacetime) the gravitational field is modelled by a spin-2
field satisfying
\begin{align}\label{eq:spin-2}
  \tilde{\nabla}^{AA'}\tilde{\phi}_{ABCD}=0,
\end{align}
where $\tilde{\phi}_{ABCD}$ is a totally symmetric spinor field which
represents the linear counterpart of the Weyl spinor ---see
\cite{Val03a}.  This equation has the advantage of being conformally
invariant and also serves as an ideal linear model for the Bianchi
sector of the conformal Einstein field equations.  A similar remark
holds true for the Maxwell's equations in flat spacetime which can be
described in terms of the spin-1 equation
\begin{align}\label{eq:spin-2}
  \tilde{\nabla}^{AA'}\tilde{\phi}_{AB}=0,
\end{align}
where $\tilde{\phi}_{AB}$ is a symmetric spinor (the Maxwell spinor)
encoding the electromagnetic field ---see \cite{ValAli22}.  In
\cite{NewPen68} the NP constants were defined in the linear set-up for
the spin-1 and spin-2 fields and in the non-linear case for the
Einstein field equations written in the Newman-Penrose formalism.  In
\cite{GasVal20} and \cite{ValAli22} the framework of the
$i^0$-cylinder was used to compute the NP constants and the BMS
charges, respectively, for these fields propagating in flat spacetime
close to null and spatial infinity.  Although, the spin-2 field is a
good model for curvature oriented formulations, for metric
formulations, wave equations are better suited.  For instance, the
standard formulation of linearised gravity ---see \cite{Mag07a,
  Wal84a}--- is usually presented in an way that mimics that of the
hyperbolic reduction of the Einstein field equations in harmonic
gauge, in which the final expression is a wave equation for the metric
components.  Following the previous naming conventions, one could call
scalar fields satisfying these wave equations as spin-0 fields. In
this section, the \emph{physical} wave equation in flat spacetime will
be written as a wave equation for an \emph{unphysical} field
propagating in the $i^0$-cylinder background. A way to solve the
resulting equation, in a similar approach as that of the spin-1 and
spin-2 fields in \cite{Val07, GasVal20}, has been described in
\cite{MinMacVal22}.  In the remainder of this section, the results of
\cite{MinMacVal22} for the spin-0 field will be revisited so that they
can be used in the next section to compute the NP constants.

\medskip

Recall that given two conformally related spacetimes
$(\tilde{\mathcal{M}},\tilde{\bmg})$ and $(\mathcal{M}, \bmg)$ where
$\bmg=\Omega^2\tilde{\bmg}$, the conformal transformation formula for
their D'Alembertian operators
is given by
\begin{align}
  \square \phi- \frac{1}{6} \phi R = \Omega ^{-3} \bigg(
  \tilde{\square }\tilde{\phi}- \frac{1}{6} \tilde{\phi}
  \tilde{R}\bigg), \label{eq:General_Wave_Conformal_Transformation}
\end{align}
where $\tilde{\square}=\tilde{g}^{ab}\tilde{\nabla}_a\tilde{\nabla}_b$
and $\square=g^{ab}\nabla_a\nabla_b$ with $\nabla$ and 
$\tilde{\nabla}$, ~$R$ and~$\tilde{R}$,
denoting the Levi-Civita connections and Ricci
scalars of~$\bmg$ and~$\tilde{\bmg}$,
respectively.  Let $\tilde{\phi}$ be a scalar propagating in
flat spacetime $(\mathbb{R}^4, \tilde{\bmeta})$ according to:
\begin{align} \label{eq:Physical_wave}
   \tilde{\square} \tilde{\phi} = 0.
\end{align}
If one chooses as conformal transformation that of the $i^0$-cylinder,
using equation \eqref{eq:General_Wave_Conformal_Transformation}, the
unphysical equation simply reads:
\begin{align}\label{eq:Unphysical_wave}
  \square \phi =0.
\end{align}
Observe that this is a consequence of the fact that for flat spacetime
and for its $i^0$-cylinder conformal representation, one has that
$R=\tilde{R}=0$ ---see Remark \ref{remark:singularvsregular}.

\begin{IndRemark}\label{remark:singularvsregular}
  Although for vacuum spacetimes (with vanishing cosmological
  constant) equation \eqref{eq:General_Wave_Conformal_Transformation}
  is \emph{formally regular}, in general, this is not the case. In
  other words, for generic spacetime backgrounds \emph{formally
  singular} coefficients such as $\Omega^{-3}$ would persist.  Despite
  that for equations describing the propagation of fields on fixed
  backgrounds one could potentially exploit the decay of $\tilde{R}$
  in an explicit way, in the non-linear case of the Einstein field
  equations most formulations that include $\mathscr{I}$ are formally
  singular (e.g., \cite{VanHus16}, \cite{DuaFenGasHil22a} and
  \cite{MonRin08}) ---except for H. Friedrich's formulation in
  \cite{Fri81}.
\end{IndRemark}

\begin{IndRemark}
  Since the conformal factor for the $i^0$-cylinder representation of
  the Minkowski spacetime expressed in physical coordinates is just $
  \Theta=\tilde{\rho}^{-1}$, in fact, the unphysical (conformal) field
  $\phi=\Theta^{-1}\tilde{\phi}$ simply corresponds to the
  \emph{radiation field} $\tilde{\rho}\tilde{\phi}$.  However, for a
  different conformal transformation this is not the case. For
  instance, in the standard textbook 
  compactification of Minkowski spacetime
  used to construct its Penrose diagram ---see \cite{GriPod09, Ste91}, the
  conformal factor expressed in physical coordinates is:
  $\Omega=(1+\tilde{u}^2)^{-1/2}(1+\tilde{v}^2)^{-1/2}$. Hence, the
  associated unphysical field $\Omega^{-1}\tilde{\phi}$ would not
  correspond to the radiation field.
\end{IndRemark}

\noindent Equation \eqref{eq:Physical_wave} expressed in $F$-coordinates
explicitly reads

\begin{align}\label{eq:UnphysicalWaveExplicit}
  (\tau ^2-1) \partial _{\tau}^2 \phi -2 \rho \tau \partial
  _{\tau}\partial _{\rho}\phi +  \rho ^2 \partial _{\rho}^2\phi
  + 2 \tau \partial _{\tau}\phi +
  \Delta _{\mathbb{S}^{2}{}}{}\phi   = 0,
\end{align}
where~$\Delta _{\mathbb{S}^{2}{}}{}$ is the Laplace operator on
$\mathbb{S}^2$.
Following the approach taken for the
analysis of the spin-1, spin-2 and spin-0
fields in \cite{ValAli22, MinMacVal22},
one takes the following
Ansatz:
\begin{align}\label{eq:Ansatz}
\phi=
\sum_{p=0}^{\infty}\sum_{\ell=0}^{p}\sum_{m=-\ell}^{m=\ell}
\frac{1}{p!}a_{p;\ell,m}(\tau)Y_{\ell
  m}\rho^{p}.
\end{align}
The use of this Ansatz, in particular,
restricts the initial data to be analytic
close to $i^0$ since
\begin{align}\label{eq:ID_field}
\phi|_{\mathcal{S}} =
\sum_{p=0}^{\infty}
\sum_{\ell=0}^{p}\sum_{m=-\ell}^{m=\ell}\frac{1}{p!}a_{p;\ell,m}(0)Y_{\ell
  m}\rho^p, \qquad \dot{\phi}|_{\mathcal{S}} =
\sum_{p=0}^{\infty}\sum_{\ell=0}^{p}\sum_{m=-\ell}^{m=\ell}
\frac{1}{p!}\dot{a}_{p;\ell,m}(0)Y_{\ell
  m}\rho^p,
\end{align}
where the over-dot denotes a derivative respect to $\partial_\tau$.
As we will see later, despite that the initial data is analytic, in
general, the solution will not.  A calculation ---see
\cite{MinMacVal22}--- shows that solving the wave equation
\eqref{eq:UnphysicalWaveExplicit} reduces to solve the following
ordinary differential equation (ODE) for every $p;\ell,m$

\begin{align}\label{eq:ODE_wave_JacobiPoly}
(1-\tau^2)\ddot{a}_{p;\ell,m} +
  2\tau(p-1)\dot{a}_{p,\ell,m}+(\ell+p)(\ell-p+1){a}_{p;\ell,m}=0.
\end{align}

\noindent The solution of this ODE is given in the following:

\begin{lemma}[wave equation on the $i^0$-cylinder
    background~\cite{MinMacVal22}]\label{Lemma:Sol_Jacobi_and_Logs}
  The solution to equation \eqref{eq:ODE_wave_JacobiPoly} is given
   by:
  \begin{enumerate}[label=(\roman*)]
  \item For $p\geq 1$   and $0\leq \ell \leq p-1$
    \begin{align}\label{eq:Sol_jac_poly}
      a(\tau)_{p;\ell,m} =A_{p,\ell,m}
      \bigg(\frac{1-\tau}{2}\bigg)^{p}
      J_{\ell}^{(p,-p)}(\tau) + B_{p,\ell,m}
      \bigg(\frac{1+\tau}{2}\bigg)^{p}J_{\ell}^{(-p,p)}(\tau)
    \end{align}
  \item For  $p\geq 0$   and $\ell=p$
    \begin{align}\label{eq:Sol_highestharmonic}
      {a}_{p;p,m}(\tau) = \bigg(\frac{1-\tau}{2}\bigg)^{p}
      \bigg(\frac{1+\tau}{2}\bigg)^{p}\Bigg(C_{p,p,m} +D_{p,p,m}
      \int_{0}^{\tau} \frac{ds}{(1-s^2)^{p+1}}\Bigg)
    \end{align}
    where~$J^{\alpha,\beta}_{\gamma}(\tau)$ are the Jacobi polynomials
    and~$A_{p,\ell,m}$, $B_{p,\ell,m}$, $C_{p,p,m} $ and $D_{p,p,m} $
    are constants determined (algebraically) in terms of the
    initial data~$a_{p;\ell,m}(0)$ and $\dot{a}_{p;\ell,m}(0)$.
  \end{enumerate}
\end{lemma}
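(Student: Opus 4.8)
\section*{Proof proposal}

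The plan is to treat \eqref{eq:ODE_wave_JacobiPoly} as a second-order Fuchsian ODE and to exploit its hypergeometric structure. First I would divide through by $(1-\tau^2)$ to reach the normal form $\ddot a + P(\tau)\dot a + Q(\tau)a = 0$ with $P = 2(p-1)\tau/(1-\tau^2)$ and $Q = (\ell+p)(\ell-p+1)/(1-\tau^2)$, and locate the regular singular points at $\tau = \pm 1$ (the loci of $\mathscr{I}^{\pm}$). A short indicial computation at each endpoint --- using that $P$ has residue $-(p-1)$ there and that $Q$ has only a simple pole --- gives the indicial equation $r(r-p)=0$, so the characteristic exponents are $\{0,p\}$ at both $\tau=1$ and $\tau=-1$. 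This is precisely what motivates the power-law prefactors $\big(\tfrac{1\mp\tau}{2}\big)^{p}$ appearing in \eqref{eq:Sol_jac_poly}.

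Second, I would peel off the exponent-$p$ behaviour at $\tau=1$ via the substitution $a = \big(\tfrac{1-\tau}{2}\big)^{p} w$. A direct (if slightly tedious) computation shows that the entire left-hand side of \eqref{eq:ODE_wave_JacobiPoly} factors as $(1-\tau)^{p}$ times the Jacobi operator acting on $w$; after cancelling this factor one is left with $(1-\tau^2)\ddot w - 2(p+\tau)\dot w + \ell(\ell+1)w = 0$. Matching the first-order coefficient fixes the parameters $(\alpha,\beta)=(p,-p)$, while matching the zeroth-order coefficient gives $n(n+1)=\ell(\ell+1)$, i.e.\ degree $n=\ell$; hence $w=J_\ell^{(p,-p)}(\tau)$ and the first solution in \eqref{eq:Sol_jac_poly} follows. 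The second solution comes for free from the manifest invariance of \eqref{eq:ODE_wave_JacobiPoly} under $\tau\mapsto-\tau$, which swaps $\alpha\leftrightarrow\beta$ and yields $\big(\tfrac{1+\tau}{2}\big)^{p}J_\ell^{(-p,p)}(\tau)$. For $0\le\ell\le p-1$ these two are genuinely independent: since $J_\ell^{(p,-p)}$ has degree $\ell<p$, the first solution vanishes to order exactly $p$ at $\tau=1$ but not at $\tau=-1$, while the second does the opposite, so a comparison of endpoint vanishing orders (equivalently, a Wronskian check) confirms linear independence and establishes item (i).

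The genuinely delicate point is the highest-harmonic case $\ell=p$ of item (ii), where $\beta=-p$ is a negative integer equal in magnitude to the degree. Here the two candidate Jacobi solutions degenerate --- one finds $J_p^{(p,-p)}\propto(1+\tau)^{p}$ and $J_p^{(-p,p)}\propto(1-\tau)^{p}$, so both collapse to the single solution $(1-\tau^2)^{p}$ and the representation \eqref{eq:Sol_jac_poly} no longer spans the solution space. I would instead verify directly that $a=\big(\tfrac{1-\tau^2}{4}\big)^{p}$ solves the $\ell=p$ equation (this is the $C_{p,p,m}$ term) and then construct the missing second solution by reduction of order: with $e^{-\int P\,d\tau}=(1-\tau^2)^{p-1}$ and $a_1=(1-\tau^2)^{p}$, the formula $a_2=a_1\int e^{-\int P\,d\tau}/a_1^{2}\,d\tau$ produces exactly $(1-\tau^2)^{p}\int^{\tau}(1-s^2)^{-p-1}\,ds$, matching the $D_{p,p,m}$ term of \eqref{eq:Sol_highestharmonic}. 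Expanding $(1-s^2)^{-p-1}$ in partial fractions then exhibits $\log\!\big(\tfrac{1+\tau}{1-\tau}\big)$-type contributions, identifying this branch as the source of the logarithmic (polyhomogeneous) behaviour at the critical sets. The constants are finally fixed algebraically by evaluating $a$ and $\dot a$ at $\tau=0$ and inverting the resulting $2\times2$ systems, which is routine. I expect the main obstacle to be handling the negative-integer-parameter degeneracy cleanly --- in particular recognising that $\ell=p$ is exactly the threshold at which \eqref{eq:Sol_jac_poly} fails and the reduction-of-order/logarithmic branch must be invoked.
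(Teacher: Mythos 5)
Your proposal is correct, and its key computations check out: the indicial exponents at $\tau=\pm1$ are indeed $\{0,p\}$; the substitution $a=\big(\tfrac{1-\tau}{2}\big)^{p}w$ turns \eqref{eq:ODE_wave_JacobiPoly} into $(1-\tau^{2})\ddot w-2(p+\tau)\dot w+\ell(\ell+1)w=0$ (using $(\ell+p)(\ell-p+1)=\ell(\ell+1)-p(p-1)$), which is the Jacobi equation with $(\alpha,\beta)=(p,-p)$ and degree $\ell$; the $\tau\mapsto-\tau$ symmetry supplies the second branch; and at $\ell=p$ the degeneration $J_{p}^{(p,-p)}\propto(1+\tau)^{p}$, $J_{p}^{(-p,p)}\propto(1-\tau)^{p}$ does collapse both branches onto $(1-\tau^{2})^{p}$, so that reduction of order with Wronskian factor $e^{-\int P\,d\tau}=(1-\tau^{2})^{p-1}$ correctly produces the $(1-\tau^{2})^{p}\int_{0}^{\tau}(1-s^{2})^{-p-1}ds$ solution. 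Be aware, though, that the paper itself contains no proof of this lemma to compare against: it is imported verbatim from \cite{MinMacVal22}, and the surrounding text (which speaks of ``inspecting the hypergeometric function'' in \eqref{eq:Sol_highestharmonic}) indicates that the cited derivation is phrased through hypergeometric-type special functions. Your route is more elementary and self-contained: Frobenius analysis fixes the exponents, the peeled equation is recognised as the Jacobi ODE, and the logarithmic branch emerges from reduction of order rather than from parameter degeneration of a hypergeometric series; this also explains structurally why $\ell=p$ is precisely the threshold at which analyticity at the critical sets fails, the fact later exploited in Remark \ref{Remark:logfreeRemark}. Two points worth tightening: the endpoint nonvanishing used for linear independence should be justified by $J_{\ell}^{(p,-p)}(1)=\binom{\ell+p}{\ell}\neq0$ and $J_{\ell}^{(p,-p)}(-1)=(-1)^{\ell}\binom{\ell-p}{\ell}\neq0$ for $\ell\leq p-1$ (the degree bound $\ell<p$ by itself is not the reason), and the claimed invertibility of the $2\times2$ system fixing the constants from $a(0)$, $\dot a(0)$ follows from Abel's formula, since the Wronskian is proportional to $(1-\tau^{2})^{p-1}$ and hence nonzero at $\tau=0$.
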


The interesting feature common to the study of evolution equations
using the $i^0$-cylinder framework is that logarithmic terms appear at
null infinity in the solution. This can be see from inspecting the
hypergeometric function appearing in equation
\eqref{eq:Sol_highestharmonic} for a few values of $p$. For instance
for $p=0$ and $p=1$ one has:
\begin{align}
  {a}_{0;0,0}(\tau) & = C_{000} + \tfrac{1}{2} D_{000} (\log(1 + \tau
  )- \log(1 - \tau )),
  \\ {a}_{1;1,m}(\tau) & = \tfrac{1}{4} (1 - \tau )
  (1 + \tau ) (C_{11m} + \tfrac{1}{4} D_{11m} ( \log(1 + \tau ) -
  \log(1 - \tau ) + 2\tau(1-\tau^2))).
\end{align}
These logarithmic terms impact the linear version of the associated
peeling property ---see \cite{Val07,MinMacVal22}.  In the full
non-linear case further obstructions to the smoothness of null
infinity arise ---see \cite{Val04}.  In the gravitational case one can
place conditions on the initial data condition written ensuring the
absence of these logarithmic terms in the development ---see
\cite{Fri98a}.  For the spin-0 case the analogous condition is given
by:

\begin{IndRemark}\label{Remark:logfreeRemark}(Regularity condition~\cite{MinMacVal22}).
  \emph{Lemma \ref{Lemma:Sol_Jacobi_and_Logs} implies that if initial
  data is chosen so that $\dot{a}_{p;p,m}(0) =0$ then $D_{p;p,m}=0$, and consequently, the
  solution extends analytically to the critical sets.}
\end{IndRemark}

\section{The NP-constants for the spin-0 field close to the $i^0$-cylinder}
\label{sec:NPConstants}

The original constants of Newman and Penrose \cite{NewPen68} are a set
of quantities computed by performing 2-surface integrals at null
infinity of the certain derivatives of the field.  These quantities
are conserved ---hence the name constants--- in the sense that their
value is independent of the cut of null infinity on which they are
computed. In \cite{NewPen68} it was shown for the case of spin-1 and
spin-2 fields (Faraday and linearised Weyl tensors) in Minkowski
spacetime that there is an infinite hierarchy of conserved quantities
while for the full non-linear theory (Weyl tensor) only 10 quantities
remain conserved. In both linear (spin-1 and spin-2) and non-linear
(gravitational) cases, the NP constants arise from a set of asymptotic
conservation laws. For the spin-0 field in flat spacetime ---see
\cite{Keh21_a}--- one has
\begin{align}\label{eq:cons_laws}
 \uL (\tilde{\rho}^{-2\ell}L (\bme^{+})^{\ell}\phi_{\ell m}) = 0,
 \qquad L(\tilde{\rho}^{-2\ell} \uL (\bmue^{-})^{\ell}\phi_{\ell m}) =
 0
\end{align}
where $\phi_{\ell m}= \int_{\mathbb{S}^2} \phi \; Y_{\ell m} \;
d\sigma$ with $d\sigma$ denoting the area element in $\mathbb{S}^2$.
A derivation of a slight generalisation of these these identities is
given in Appendix \ref{App:A}.  Equation \eqref{eq:cons_laws} gives an
infinite hierarchy of exact conservation laws.  Moreover, one can
define the $f(\tilde{\rho})$-modified NP-constants ---see
\cite{GajKeh22}, as follows:
\begin{align}\label{eq:DefModifiedNP}
    {}^{f}\mathcal{N}^{+}_{\ell,m}:= f(\tilde{\rho})L (\bme^{+})^{\ell}\phi_{\ell
      m} \Big|_{\mathcal{C}^{+}}, \\ {}^{f}\mathcal{N}^{-}_{\ell,m}:=
    f(\tilde{\rho})\uL (\bmue^{-})^{\ell}\phi_{\ell
      m}\Big|_{\mathcal{C}^{-}},
  \end{align}
 where $\mathcal{C^{\pm}} \approx \mathbb{S}^2$ is a cut of
 $\mathscr{I}^{\pm}$.  For the case $f(\tilde{\rho})=\tilde{\rho}^2$
 these are called the \emph{classical NP-constants}, denoted as
 $\mathcal{N}^{\pm}_{\ell,m}$, and can be written succinctly as
\begin{align}
\mathcal{N}^{+}_{\ell,m}:= (\bme^{+})^{\ell+1}\phi_{\ell m}
\Big|_{\mathcal{C}^{+}},\label{eq:classicalNP}\\ \mathcal{N}^{-}_{\ell,m}:=
(\bmue^{-})^{\ell+1}\phi_{\ell m} \Big|_{\mathcal{C}^{-}}.
\label{eq:classicalNPMinus}
\end{align}
In the notation for ${}^{f}\mathcal{N}^{+}_{\ell,m}$, in the following,
it will be implicitly understood that $m=-\ell, \dots,0,\dots, \ell$.

\subsection{The classical NP-constants}\label{sec:ClassicalNP_main}

In this section the classical NP constants are computed in terms of
the initial data expressed through the constant parameters appearing
in Lemma \ref{Lemma:Sol_Jacobi_and_Logs}.  It is instructive to
compute the first few NP-constants before discussing the general case.
 \subsubsection{The first few NP constants}
In this subsection the $\ell=0$, $\ell=1$
classical NP-constants at $\mathscr{I}^{+}$ are computed.
From expression \eqref{eq:Ansatz} it follows that
\begin{align}\label{eq:exp_phi_lm}
\phi_{\ell m}=
\sum_{p=\ell}^{\infty}\frac{1}{p!}a_{p;\ell,m}(\tau)\rho^{p}.
\end{align}
Hence, for $\ell=0$ it is enough to compute $\bme^{+}(\phi_{00})$.
Using Proposition \ref{Prop:NPtoFgauge} and equation \eqref{eq:Fframe}
one has
\begin{align}\label{eq:bmeplus1philmraw}
  \bme^{+}(\phi_{\ell m})=
  4(\Lambda_{+})^{2}\sum_{p=0}^{\infty}\frac{1}{p!}\bme(a_{p;\ell,m}(\tau)\rho^{p})
  = 4 \rho^{-1}(1+\tau)^{-2}\sum_{p=0}^{\infty} \frac{1}{p!}\rho^p
  ((1+\tau)\dot{a}_{p;\ell,m}-p a_{p;\ell,m}).
\end{align}
Let
\begin{align}\label{eq:defQ0}
  Q^{0}_{p;\ell,m}(\tau):=(1+\tau)\dot{a}_{p;\ell,m}-p a_{p;\ell,m}.
\end{align}
Using this definition one can express $\bme^{+}(\phi_{\ell m})$ as
\begin{align}\label{eq:bmeplus1philm}
  \bme^{+}(\phi_{\ell m}) = 4 (\Lambda_{+})^{2}\sum_{p=0}^{\infty}
  \frac{1}{p!}\rho^{p}Q^{0}_{p,\ell,m}(\tau).
\end{align}
To obtain the $\ell=0$ NP-constant at $\mathscr{I}^{+}$ one needs to
evaluate $\bme^{+}(\phi_{00})$ at a cut $\mathcal{C}^{+} \subset
\mathscr{I}^{+}$.  Using equation \eqref{eq:bmeplus1philm}, Lemma
\ref{Lemma:Sol_Jacobi_and_Logs} gives
\begin{align}\label{N00constExpression}
  \mathcal{N}^{+}_{0,0}= \lim_{\substack{\rho \to \rho_{\star} \\ \tau
      \to 1}} \bme^{+}(\phi_{00}) = \sum_{p=0}^{\infty}
  \frac{1}{p!}\rho^{p-1}_{\star}Q^{0}_{p,0,0}|_{\mathscr{I}^{+}},
\end{align}
where $\rho_{\star}$ is a constant that parametrises the cut
$\mathcal{C}^{+}$ and $Q^{0}_{p,\ell,m}|_{\mathscr{I}^{+}}:=\lim_{\tau
  \to 1}Q^{0}_{p,\ell,m}(\tau)$.  In particular, $\rho_{\star}=0$
corresponds to the choice $\mathcal{C}^{+}= I^{+}$.  A direct
calculation using Lemma \ref{Lemma:Sol_Jacobi_and_Logs} shows that in
fact $Q^{0}_{0,0,0}(\tau)=D_{000}/(1-\tau)$. Therefore, if the
regularity condition of Remark \ref{Remark:logfreeRemark} is not
satisfied, then the classical NP constants are not well-defined
regardless of the cut at which they are evaluated.  Hence, to compute
the classical $\ell=0$ NP-constant the regularity condition needs to
be imposed.  Moreover, once the regularity condition is imposed, then
the value of the classical $\ell=0$ NP constant is independent of the
cut.  To see this, we have the following
\begin{IndRemark}\label{rem:l0}
   A direct calculation
using Lemma \ref{Lemma:Sol_Jacobi_and_Logs}
gives
  \begin{subequations}\label{eq:rem:l0}
  \begin{align}
    Q^{0}_{0;0,0}(\tau)&=D_{000}(1-\tau)^{-1},\label{rem:l0:eq1} \\
    Q^{0}_{p;0,0} (\tau)&=-2^{1-p}pA_{p,0,m}(1-\tau)^{p-1} \qquad \text{ for}\qquad p\neq 0.
    \label{rem:l0:eq2}
  \end{align}
 \end{subequations}
\end{IndRemark}
Hence,
using Remark \ref{rem:l0} and assuming the regularity condition is satisfied, then
one has
\begin{align}
\mathcal{N}^{+}_{0,0}=-A_{1,0,0}
\end{align}

The fact that the regularity condition needs to be imposed to have
well-defined (classical) NP constants is expected since a similar
observation holds in the spin-1 and spin-2 cases where the analogous
regularity condition ($D_{\ell \ell m}=0$ in the present case) was
imposed via ``Assumption 2'' in \cite{GasVal20} ---see also
\cite{Val98, Val99a}.  Moreover, observe that Remark \ref{rem:l0}
shows that the only contributing term is that with $p=1$.  As it will
be shown later, for the next constant $\mathcal{N}^{+}_{1,m}$ only the
$p=2$ term contributes. A similar pattern was observed in
\cite{GasVal20} for the spin-1 and spin-2 cases.  Therefore, this
behaviour was expected. Additionally, if other than the $p=1$ term
contributes to equation \eqref{N00constExpression} that would mean
that $\mathcal{N}^{+}_{0,0}$ depends on the value of $\rho_{\star}$
and hence the cut where it is evaluated and thus contradicting the
constancy of this quantity ---see \cite{NewPen68}.  In other words, if
one assumes that the NP constants are \emph{well-defined} then the
detailed form of the expressions in Remark \ref{rem:l0} is not
required to assert which part of the initial data parameters of Lemma
\ref{Lemma:Sol_Jacobi_and_Logs} determines the value of
$\mathcal{N}^{+}_{0,0}$ as long as only the parameter $A_{p,\ell,m}$
---and not $B_{p,\ell,m}$--- appears in equation \eqref{rem:l0:eq2}.

\medskip

To determine the $\ell=1$- NP constants we need to compute
$(\bme^{+})^{2}(\phi_{1m})$.  Using Proposition \ref{Prop:NPtoFgauge},
and assisted with the previous calculation, one has
\begin{align}\label{eq:bmeplus2philmraw}
  (\bme^{+})^2(\phi_{\ell m})= 4^2(\Lambda_{+})^{2}\bme \Big(
  (\Lambda_{+}{})^2 \sum_{p=\ell}^{\infty}
  \frac{1}{p!}\rho^{p}Q^{0}_{p,\ell,m}(\tau)\Big).
\end{align}
An explicit calculation, using equations \eqref{eq:Fframe} and
\eqref{eq:LorentzTransf} gives
$\bme((\Lambda_{+})^{2})=-(\Lambda_{+})^{2}$.  Using the latter
renders
\begin{align}\label{eq:defQ1}
  (\bme^{+})^2(\phi_{\ell m})= 4^2(\Lambda_{+})^{4}
  \sum_{p=\ell}^{\infty} \frac{1}{p!}\rho^{p}
  \big((1+\tau)\dot{Q}^{0}_{p,\ell,m}(\tau)-(p+1){Q}^{0}_{p,\ell,m}(\tau)\big).
\end{align}
Defining $Q^{1}_{p,\ell,m}(\tau):=
\big((1+\tau)\dot{Q}^{0}_{p,\ell,m}(\tau)-(p+1){Q}^{0}_{p,\ell,m}(\tau)\big)$
and using equation \eqref{eq:defQ0} gives
\begin{align}\label{eq:DefQ1IntermsQ0}
  Q^{1}_{p,\ell,m}=(1+\tau)^2\ddot{a}_{p;\ell,m}-2p(1+\tau)\dot{a}_{p;\ell,m}
  + p(p+1)a_{p;\ell,m}.
\end{align}
Hence one can compute $\mathcal{N}^{+}_{1,m}$ as
\begin{align}\label{NP1_precomputation}
  \mathcal{N}^{+}_{1,m}= \lim_{\substack{\rho \to \rho_{\star} \\ \tau
      \to 1}} (\bme^{+})^2(\phi_{1m}) =
   \sum_{p=1}^{\infty}\frac{1}{p!}\rho_{\star}^{p-2}Q^{1}_{p,1,m}|_{\mathscr{I}^{+}}.
\end{align}
As before, one has the following :
\begin{IndRemark}\label{rem:l1}
   A direct calculation
using Lemma \ref{Lemma:Sol_Jacobi_and_Logs} gives
  \begin{align}
    Q^{1}_{1;1,m}(\tau) &=-2^{-1}D_{11m}(1-\tau)^{-1},\\
    Q^{1}_{p;1,m}(\tau) &=2^{2-p}(p-1)p(p+1)A_{2,p,m}(1-\tau)^{p-2} \qquad
    \text{for}\qquad p\neq 1.
  \end{align}
\end{IndRemark}
Using Remark \ref{rem:l1} one notices that only the $p=2$ term in
equation \eqref {NP1_precomputation} contributes to the sum.
Direct evaluation then gives
 \begin{align}
\mathcal{N}^{+}_{1,m}=3A_{2,1,m}.
\end{align}

 \subsection{The $i^0$-cylinder logarithmic NP constants}\label{sec:LogarithmicNP_main}

In subsection \ref{sec:ClassicalNP_main} it was shown that if the regularity
condition of Remark \ref{Remark:logfreeRemark} is not satisfied then the
classical NP constants are not well-defined.  Nonetheless, the
existence of analogous constants arising for polyhomogeneous 
expansions for the gravitational field (Weyl scalars) have been found
using the Newman-Penrose formalism ---see \cite{Val98, Val99a}.  Hence,
it is natural to ask whether there exist a choice of $f(\tilde{\rho})$
for which the calculation of the associated modified NP-constants (for
the spin-0 field) does not require imposing the regularity
condition. In this section it is shown that the initial data constants
$D_{p;p,m}$ defining the regularity condition are precisely the
$f(\tilde{\rho})=\tilde{\rho}$-modified NP constants.

\subsubsection{The first few $i^0$-cylinder logarithmic NP constants}

In this subsection the $\ell=0$ and $\ell=1$ modified NP constants
${}^{f}\mathcal{N}^{+}_{\ell,m}$ for the spin-0 field are computed for
$f(\tilde{\rho})=\tilde{\rho}$.
Using Proposition \ref{Prop:NPtoFgauge} to express $L$ 
in terms of derivatives respect to the $F$-coordinates gives
\begin{align}
  \tilde{\rho}L (\phi_{\ell m})= \varkappa^{-1}\bme (\phi_{\ell m}) =
  \varkappa^{-1}\sum_{p=0}^{\infty}\bme(a_{p;\ell,m}(\tau)\rho^p) =
  \varkappa^{-1}\sum_{p=0}^{\infty}Q^{0}_{p;\ell,m}(\tau)\rho^p.
\end{align}
Hence, for $\ell=0$ one has
\begin{align}
  \mathcal{}^{\tilde{\rho}}\mathcal{N}^{+}_{0,0} =
   \lim_{\substack{\rho \to \rho_{\star} \\ \tau \to 1}} \; \varkappa^{-1}\bme(\phi_{00}) =
  \sum_{p=0}^{\infty}\rho_{\star}^p(\varkappa^{-1}Q^{0}_{p;0,0})|_{\mathscr{I}^{+}}.
\end{align}
Using Remark \ref{rem:l0} and evaluating at the critical set $I^{+}$
one gets
\begin{align}
\mathcal{}^{\tilde{\rho}}\mathcal{N}^{+}_{0,0} = \frac{1}{2}D_{0,0,0}.
\end{align}
Similarly, for the $\ell=1$ the relevant quantity to evaluate is
\begin{align}
  \tilde{\rho} L (\bme^{+}) \phi_{1m}= \varkappa^{-1}\bme (\bme^{+})
  \phi_{1m} = 4^{-1}\varkappa^{-1}(\Lambda_{+})^{-2}
  (\bme^{+})^2\phi_{1m}=
  4\varkappa^{-1}(\Lambda_{+})^{2}\sum_{p=1}^{\infty}
  \frac{1}{p!}\rho^pQ^{1}_{p;\ell,m},
\end{align}
where in the last equality we have used  equations \eqref{eq:defQ1} and
\eqref{eq:DefQ1IntermsQ0}.
Hence, for $\ell=1$ one has
\begin{align}
  \mathcal{}^{\tilde{\rho}}\mathcal{N}^{+}_{1,m} =
   \lim_{\substack{\rho \to \rho_{\star} \\ \tau \to 1}} \; \varkappa^{-1}\bme (\bme^{+}) \phi_{1m} =
  \sum_{p=1}^{\infty}\frac{1}{p!}\rho_{\star}^{p-1}(\varkappa^{-1}Q^{1}_{p;1,m})|_{\mathscr{I}^{+}}.
\end{align}
Using Remark \ref{rem:l1} and evaluating at the critical set $I^{+}$
one gets
\begin{align}
\mathcal{}^{\tilde{\rho}}\mathcal{N}^{+}_{1,m} =- \frac{1}{4}D_{1,1,m}.
\end{align}
\subsection{The induction argument}
\label{sec:induction_argument}
In this section the $\ell$-NP constants are computed inductively.
\begin{proposition}\label{lemmaMainNPplus}
\begin{align}\label{eq:bmelplusonephilm}
  (\bme^{+})^{n+1}\phi_{\ell m}=
  (2\Lambda_{+})^{2(n+1)}\sum_{p=\ell}^{\infty}\frac{1}{p!}Q^{n}_{p;\ell,m}(\tau)\rho^{p}
\end{align}
with
\begin{align}\label{eq:Qn}
  Q^{n}_{p;\ell,m}(\tau)= \sum_{q=0}^{n+1}(-1)^q p^{[q]} {n+1 \choose
    q}(1+\tau)^{n+1-q}a_{p;\ell,m}^{(n+1-q)}(\tau)
\end{align}
where $a_{p;\ell,m}^{(n)}(\tau):= d^{n}(a_{p;\ell,m}(\tau))/d\tau^n$
denotes the $n$-th derivative and $p^{[q]}= (p+q-1)!/(p-1)!$ is the
rising factorial.
 \end{proposition}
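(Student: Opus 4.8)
The plan is to prove the statement by induction on $n$. The base case $n=0$ is precisely equation \eqref{eq:bmeplus1philm} together with the definition \eqref{eq:defQ0} of $Q^{0}_{p;\ell,m}$: setting $n=0$ in \eqref{eq:Qn} and using $p^{[0]}=1$, $p^{[1]}=p$ recovers $Q^{0}_{p;\ell,m}=(1+\tau)\dot a_{p;\ell,m}-p\,a_{p;\ell,m}$, and the prefactor $(2\Lambda_{+})^{2}=4(\Lambda_{+})^{2}$ agrees. The $n=1$ case, already obtained in \eqref{eq:defQ1}--\eqref{eq:DefQ1IntermsQ0}, provides a useful consistency check of both the recursion and the closed form.

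For the inductive step I would first isolate the action of $\bme^{+}$ on the conformal prefactor. Since $\bme^{+}=4(\Lambda_{+})^{2}\bme$ and, as established in the text, $\bme((\Lambda_{+})^{2})=-(\Lambda_{+})^{2}$, the Leibniz rule gives $\bme((\Lambda_{+})^{2k})=-k(\Lambda_{+})^{2k}$. Applying $\bme^{+}$ to the inductive hypothesis \eqref{eq:bmelplusonephilm}, and noting that $\bme=(1+\tau)\bm\partial_{\tau}-\rho\bm\partial_{\rho}$ acts on $Q^{n}_{p;\ell,m}(\tau)\rho^{p}$ as $[(1+\tau)\dot Q^{n}_{p;\ell,m}-p\,Q^{n}_{p;\ell,m}]\rho^{p}$, the factor $-(n+1)$ produced by differentiating $(\Lambda_{+})^{2(n+1)}$ combines with the $-p$ to yield
\[
 (\bme^{+})^{n+2}\phi_{\ell m}=(2\Lambda_{+})^{2(n+2)}\sum_{p=\ell}^{\infty}\frac{1}{p!}Q^{n+1}_{p;\ell,m}(\tau)\rho^{p}, \qquad Q^{n+1}_{p;\ell,m}=(1+\tau)\dot Q^{n}_{p;\ell,m}-(p+n+1)Q^{n}_{p;\ell,m}.
\]
All the powers of $4$ and of $(\Lambda_{+})^{2}$ reassemble into $(2\Lambda_{+})^{2(n+2)}$, so the prefactor takes care of itself and the entire problem reduces to showing that this scalar recursion propagates the closed form \eqref{eq:Qn}.

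The substance of the argument is this combinatorial step. Writing $s:=1+\tau$ and inserting the hypothesis into the recursion, the operator $s\tfrac{d}{d\tau}$ simultaneously raises the derivative order and the power of $s$ in each term, while $-(p+n+1)$ merely rescales; collecting the coefficient of $s^{n+2-r}a^{(n+2-r)}_{p;\ell,m}$ one finds it equals $c_{r}-(p+r-1)c_{r-1}$, where $c_{q}=(-1)^{q}p^{[q]}\binom{n+1}{q}$. The key identity is that the rising factorial absorbs the shift, $(p+r-1)p^{[r-1]}=p^{[r]}$, so this coefficient becomes $(-1)^{r}p^{[r]}\bigl(\binom{n+1}{r}+\binom{n+1}{r-1}\bigr)$, whereupon Pascal's rule collapses the binomials to $\binom{n+2}{r}$. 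This is exactly the $(n+1)$-th instance of \eqref{eq:Qn}, closing the induction. I would check the endpoints $r=0$ and $r=n+2$ separately, where one of the two contributions drops out, to confirm they match (they do, using $p^{[0]}=1$ and $(p+n+1)p^{[n+1]}=p^{[n+2]}$).

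The only genuinely delicate point is the bookkeeping in the last paragraph: keeping track of the coupled shift in the power of $(1+\tau)$ and the order of differentiation, and recognising that the rising-factorial identity together with Pascal's rule is precisely what is required to regenerate the binomial coefficient at level $n+1$. Everything else---the prefactor recursion and the base case---is routine once $\bme((\Lambda_{+})^{2})=-(\Lambda_{+})^{2}$ is in hand.
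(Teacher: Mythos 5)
Your proposal is correct and follows essentially the same route as the paper's own proof: induction on $n$, the identity $\bme((\Lambda_{+})^{2k})=-k(\Lambda_{+})^{2k}$ to reduce the step to the scalar recursion $Q^{n+1}_{p;\ell,m}=(1+\tau)\dot{Q}^{n}_{p;\ell,m}-(p+n+1)Q^{n}_{p;\ell,m}$, and then the rising-factorial identity $(p+r-1)p^{[r-1]}=p^{[r]}$ combined with Pascal's rule to regenerate the closed form at level $n+1$. The only cosmetic difference is that you collect the coefficient of each $(1+\tau)^{n+2-r}a^{(n+2-r)}$ directly, whereas the paper strips the endpoint terms and relabels the two sums, but these are the same computation.
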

\begin{proof}
To prove Proposition \ref{lemmaMainNPplus} one proceeds inductively as
follows.  Equations \eqref{eq:bmeplus1philmraw}, \eqref{eq:defQ0}, 
\eqref{eq:bmeplus2philmraw} and \eqref{eq:defQ1}  are recovered when
substituting $\ell=0$ and $\ell=1$ in expressions
\eqref{eq:bmelplusonephilm} and \eqref{eq:Qn}.  This constitutes the
basis of induction.  Assuming that equations
\eqref{eq:bmelplusonephilm} and \eqref{eq:Qn} are valid (induction
hypothesis) and applying $\bme^{+}$ to equation
\eqref{eq:bmelplusonephilm} renders
\begin{align}\label{eq:inductionNp2step1}
  (\bme^+)^{n+2}\phi_{\ell m}= \bme^{+}((\bme^{+})^{n+1}\phi_{\ell
    m})=2^2
  (\Lambda_{+})^{2}\bme\Big((2\Lambda_{+})^{2(n+1)}\sum_{p=\ell}^{\infty}
  \rho^p Q^{n}_{p;\ell,m}(\tau)\Big).
\end{align}
Using equations \eqref{eq:Fframe} and \eqref{eq:LorentzTransf} a direct
calculation gives
\begin{align}\label{eq:bmederLambda}
\bme(\Lambda_{+})^{2n}=-n(\Lambda_{+})^{2n}.
\end{align}
Expanding expression \eqref{eq:inductionNp2step1} assisted with
equation \eqref{eq:bmederLambda} one obtains
\begin{align}\label{eq:inductionNp2step2}
  (\bme^+)^{n+2}\phi_{\ell m}=  
  (2\Lambda_{+})^{2(n+2)}\Big(\sum_{p=\ell}^{\infty} \bme(\rho^p
  Q^{n}_{p;\ell,m}) -(n+1)\rho^{p}Q^{n}_{p;\ell,m}\Big)=
  (2\Lambda_{+})^{2(n+2)}\sum_{p=\ell}^{\infty}\rho^{p}R^{n}_{p;\ell, m}
\end{align}
where
\begin{align}\label{eq:QNrecursive}
  R^{n}_{p;\ell,m}=(1+\tau)\dot{Q}^{n}_{p;\ell,m}-(p+n+1)Q^{n}_{p;\ell,m}.
\end{align}
Substituting equation \eqref{eq:Qn} (induction hypothesis) and
rearranging gives
\begin{align}
  & R^{n}_{p;\ell,m} = (1+\tau)\frac{d}{d\tau}\Bigg[
    \sum_{q=0}^{n+1}(-1)^q p^{[q]} {n+1 \choose
      q}(1+\tau)^{n-q+1}a^{(n-q+1)} \Bigg] \nonumber \\ & \qquad
  \qquad \qquad \qquad \qquad \qquad \qquad \qquad
  -(p+n+1)\sum_{q=0}^{n+1}(-1)^q p^{[q]} {n+1 \choose
    q}(1+\tau)^{n-q+1}a^{(n-q+1)} \nonumber \\ =
  &\sum_{q=0}^{n+1}(-1)^q p^{[q]} {n+1 \choose q}
  (1+\tau)^{n-q+2}a^{(n-q+2)} + \sum_{q=0}^{n+1}(-1)^{q+1}p^{[q]} {n+1
    \choose q} (p+q)(1+\tau)^{n-q+1}a^{(n-q+1)}. \nonumber
\end{align}
In the last expression we have shorten the notation by denoting
$a^{(n)}_{p;\ell,m}(\tau)$ simply as $a^{(n)}$.  Stripping out the
first element in the sum ($q=0$) in the first term and the last
element of the sum ($q=n+1$) in the second term gives
\begin{align}
  R^{n}&_{p;\ell,m} = (1+\tau)^{n+2}a^{(n+2)} + \sum_{q=1}^{n+1}(-1)^q
  p^{[q]} {n+1 \choose q} (1+\tau)^{n-q+2}a^{(n-q+2)} \nonumber \\ & +
  \sum_{q=0}^{n}(-1)^{q+1} p^{[q]}{n+1 \choose
    q}(p+q)(1+\tau)^{n-q+1}a^{(n-q+1)} + (-1)^{n+2}p^{[n+1]}(p+n+1)a.
\end{align}
It follows from the definition of the rising factorial that
$p^{[i+1]}=(p+i)p^{[i]}$, in particular $p^{[n+2]}=(p+n+1)p^{[n+1]}$
. Using the latter fact and relabelling indices so that both sums start at
$q=0$ and end at $q=n$ gives
\begin{align}
  R^{n}_{p;\ell,m} = & (1+\tau)^{n+2}a^{(n+2)} +
  \sum_{q=0}^{n}(-1)^{q+1} p^{[q+1]} {n+1 \choose q+1}
  (1+\tau)^{n-q+1}a^{(n-q+1)} \nonumber \\ & \qquad \qquad +
  \sum_{q=0}^{n}(-1)^{q+1} p^{[q]} {n+1 \choose q}(p+q)
  (1+\tau)^{n-q+1}a^{(n-q+1)} + (-1)^{n+2}p^{[n+2]}a.
\end{align}
Using that $p^{[q+1]}=(p+q)p^{[q]}$ and that
\begin{align}
 {n+1 \choose q+1} + {n+1 \choose q}= {n+2 \choose q+1},
\end{align}
then one gets
\begin{align}
  R^{n}_{p;\ell,m} = & (1+\tau)^{n+2}a^{(n+2)} +
  \sum_{q=0}^{n}(-1)^{q+1} p^{[q+1]} {n+2 \choose q+1}
  (1+\tau)^{n-q+1}a^{(n-q+1)} + (-1)^{n+2}p^{[n+2]}a.
\end{align}
Observing that latter can be rewritten more compactly as
\begin{align}
  R^{n}_{p;\ell,m} = \sum_{q=0}^{n+2}(-1)^{q} p^{[q]} {n+2 \choose q}
  (1+\tau)^{n-q+2}a^{(n-q+2)},
\end{align}
and recalling equation \eqref{eq:Qn} then one concludes that
\begin{align}\label{eq:RnIsQnplusone}
R^{n}_ {p;\ell,m}=Q^{n+1}_{p;\ell,m}.
\end{align}
\end{proof}
Notice that Proposition \ref{lemmaMainNPplus} provides, via equations
 \eqref{eq:QNrecursive} and \eqref{eq:RnIsQnplusone}, with a recursive
way to compute $Q^{n}_{p;\ell,m}$ as recorded in the following:
\begin{corollary}
  $Q^{n}_{p;\ell,m}$ satisfies
  \begin{align}\label{eq:Qnrecursion_coro}
Q^{n+1}_{p;\ell,m}=(1+\tau)\dot{Q}^{n}_{p;\ell,m}
-(p+n+1)Q^{n}_{p;\ell,m}.
  \end{align}
\end{corollary}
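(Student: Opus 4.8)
The plan is to recognise that the recursion in \eqref{eq:Qnrecursion_coro} is not an independent computation but an immediate by-product of two identities already secured inside the proof of Proposition \ref{lemmaMainNPplus}; essentially all the labour has been done, and the corollary merely records it in self-contained form.

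First I would recall the auxiliary quantity $R^{n}_{p;\ell,m}$ appearing in that proof. By its defining relation \eqref{eq:QNrecursive} one has
\begin{align*}
  R^{n}_{p;\ell,m}=(1+\tau)\dot{Q}^{n}_{p;\ell,m}-(p+n+1)Q^{n}_{p;\ell,m},
\end{align*}
which is exactly the right-hand side of the asserted recursion. Hence the entire task reduces to identifying $R^{n}_{p;\ell,m}$ with $Q^{n+1}_{p;\ell,m}$. Second, I would invoke \eqref{eq:RnIsQnplusone}, namely $R^{n}_{p;\ell,m}=Q^{n+1}_{p;\ell,m}$, which was established at the close of the proof of Proposition \ref{lemmaMainNPplus} after expanding the Leibniz derivative in \eqref{eq:inductionNp2step2}, stripping the extremal terms ($q=0$ from the first sum and $q=n+1$ from the second) and recombining the remaining sums by means of the rising-factorial identity $p^{[q+1]}=(p+q)p^{[q]}$ together with the Pascal relation ${n+1 \choose q+1}+{n+1 \choose q}={n+2 \choose q+1}$. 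Substituting this identification into the display above yields \eqref{eq:Qnrecursion_coro} verbatim.

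There is no genuine obstacle to overcome here. The only delicate bookkeeping---matching rising factorials and binomial coefficients so that the two intermediate sums collapse into a single sum of the form \eqref{eq:Qn} with $n$ replaced by $n+1$---has already been carried out in proving \eqref{eq:RnIsQnplusone}. Consequently the corollary follows simply by equating the two equations \eqref{eq:QNrecursive} and \eqref{eq:RnIsQnplusone}, and no further analytic input (in particular, no appeal to the explicit Jacobi-polynomial solutions of Lemma \ref{Lemma:Sol_Jacobi_and_Logs}) is required. The statement is best regarded as extracting, for later use in the inductive evaluation of the NP constants, the first-order three-term recurrence that is implicit in the closed form \eqref{eq:Qn}.
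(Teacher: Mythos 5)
Your proposal is correct and matches the paper's own reasoning exactly: the paper presents this corollary as an immediate consequence of Proposition \ref{lemmaMainNPplus}, obtained by combining the definition \eqref{eq:QNrecursive} of $R^{n}_{p;\ell,m}$ with the identification \eqref{eq:RnIsQnplusone}, $R^{n}_{p;\ell,m}=Q^{n+1}_{p;\ell,m}$, established at the end of that proof. No further input is needed, just as you argue.
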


\begin{IndRemark}
The rising $x^{[n]}$ and falling $x_{[n]}$ factorials are defined as
\begin{align}
 x^{[n]}:= \prod_{i=1}^{n}(x+i-1), \qquad
 x_{[n]}:=\prod_{i=1}^{n}(x-i+1).
\end{align}
We record here as well that
\begin{subequations}\label{eq:firstDerJac}
\begin{align}
\frac{d}{d\tau}J_{\ell-k}^{(-p+k,p+k)}= \frac{1}{2}(\ell +k
+1)J_{\ell-k-1}^{(-p+k+1,p+k+1)} ,
\end{align}
\begin{align}
\frac{d}{d\tau}J_{\ell-k}^{(p+k,-p+k)}= \frac{1}{2}(\ell +k
+1)J_{\ell-k-1}^{(p+k+1,-p+k+1)}.
\end{align}
\end{subequations}
\end{IndRemark}

In principle, with equation \eqref{eq:Qn} and the solution
$a_{p;\ell,m}(\tau)$ given in Lemma \ref{Lemma:Sol_Jacobi_and_Logs}
one can obtain $Q^{n}_{p;\ell,m}(\tau)$ explicitly for any arbitrary
parameters $n,p,\ell,m$. However, for computing the NP-constants only
the $n=\ell$ case is relevant.  In the following we investigate the
structure of $Q^{\ell}_{p;\ell,m}$ in terms of the initial data
constants $A_{p,\ell,m}$, $B_{p,\ell,m}$, $C_{p,\ell,m}$ and
$D_{p,\ell,m}$.

\begin{proposition}
For $p \neq \ell$
  \begin{align}\label{eq:ind_Qpnotl}
    Q^{n}_{p;\ell,m}(\tau) =
    (\ell+1)^{[n+1]}\Big(\frac{1+\tau}{2}\Big)^{p+n+1}
    B_{p,\ell,m}J_{\ell-n-1}^{(-p+n+1,p+n+1)}(\tau)+\sum_{k=0}^{n+1}
    A_{p,\ell,m}h^{n}_{k}(\tau)J_{\ell-k}^{(p+k,-p+k)}(\tau)
  \end{align}
  where $h^{n}_{k}(\tau)$ are polynomials in $\tau$ which can be
  determined recursively as follows:
   \begin{align}
h_{k}^{n+1}:= \Big(\frac{1+\tau}{2}\Big)(\ell+ k+ 1)h_{k-1}^{n} +
(1+\tau)\dot{h}_{k}^{n}-(p+n+1)h_{k}^{n},
   \end{align}
   starting with
   \begin{align}
     h_{0}^{0}(\tau) =-p \Big(\frac{1-\tau}{2}\Big)^{p-1}, \qquad
     h_{1}^0(\tau) =(\ell+1) \Big(\frac{1-\tau}{2}\Big)^{p}
     \Big(\frac{1+\tau}{2}\Big),\qquad h_{-1}^{n}=h^{n}_{n+2}=0.
  \end{align}
\end{proposition}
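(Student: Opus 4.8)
The plan is to prove \eqref{eq:ind_Qpnotl} by induction on $n$, driving everything with the one-step recursion \eqref{eq:Qnrecursion_coro}, namely $Q^{n+1}_{p;\ell,m}=(1+\tau)\dot Q^{n}_{p;\ell,m}-(p+n+1)Q^{n}_{p;\ell,m}$, together with the explicit solution \eqref{eq:Sol_jac_poly} of Lemma \ref{Lemma:Sol_Jacobi_and_Logs}. Since we are in the regime $p\neq\ell$ (and $p\geq\ell$ in the expansion \eqref{eq:exp_phi_lm}, hence $p\geq\ell+1$), the coefficient $a_{p;\ell,m}$ is the two-parameter combination \eqref{eq:Sol_jac_poly} and the log-generating branch \eqref{eq:Sol_highestharmonic} never enters; this is exactly why only $A_{p,\ell,m}$ and $B_{p,\ell,m}$ (and no $C,D$) can appear. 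Because \eqref{eq:Qn} is linear in $a_{p;\ell,m}$, I split $Q^{n}_{p;\ell,m}$ into a $B$-part and an $A$-part and track them independently. The sole computational tool throughout is the pair of Jacobi derivative identities \eqref{eq:firstDerJac}.

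For the base case $n=0$ I evaluate $Q^{0}_{p;\ell,m}=(1+\tau)\dot a_{p;\ell,m}-p\,a_{p;\ell,m}$ directly. Differentiating the $B$-term $B_{p,\ell,m}\big(\tfrac{1+\tau}{2}\big)^{p}J_{\ell}^{(-p,p)}$ produces one contribution from the power and one from the Jacobi factor (the first identity in \eqref{eq:firstDerJac} at $k=0$); writing $1+\tau=2\cdot\tfrac{1+\tau}{2}$, the power-derivative contribution, which still carries $J_{\ell}^{(-p,p)}$, cancels exactly against $-p\,a_{p;\ell,m}$, leaving the single surviving term $(\ell+1)\big(\tfrac{1+\tau}{2}\big)^{p+1}B_{p,\ell,m}J_{\ell-1}^{(-p+1,p+1)}$, which is the $B$-part of \eqref{eq:ind_Qpnotl} at $n=0$. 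The $A$-term is treated identically, except that the two pieces proportional to $J_{\ell}^{(p,-p)}$ combine using $\tfrac{1+\tau}{2}+\tfrac{1-\tau}{2}=1$ to produce precisely the stated $h^{0}_{0}$ and $h^{0}_{1}$.

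For the inductive step I apply \eqref{eq:Qnrecursion_coro} to \eqref{eq:ind_Qpnotl}. The $B$-part is the clean half: differentiating $\big(\tfrac{1+\tau}{2}\big)^{p+n+1}J_{\ell-n-1}^{(-p+n+1,p+n+1)}$, applying \eqref{eq:firstDerJac} at index $k=n+1$, and multiplying by $(1+\tau)$ reproduces the same telescoping cancellation as in the base case, so that the term carrying $J_{\ell-n-1}$ is annihilated by $-(p+n+1)Q^{n}$ and only the lowered term survives, now carrying an extra factor $(\ell+n+2)$. Since $(\ell+1)^{[n+1]}(\ell+n+2)=(\ell+1)^{[n+2]}$, this is exactly the $B$-part of \eqref{eq:ind_Qpnotl} at $n+1$. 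For the $A$-part I differentiate $\sum_{k}h^{n}_{k}J_{\ell-k}^{(p+k,-p+k)}$ term by term; each derivative identity sends $J_{\ell-k}$ to $J_{\ell-k-1}$, so after re-indexing $k\mapsto k-1$ the result is again a single sum over $J_{\ell-k}^{(p+k,-p+k)}$. Reading off the coefficient of $J_{\ell-k}$ yields three contributions — one from $(1+\tau)\dot h^{n}_{k}$, one from $-(p+n+1)h^{n}_{k}$, and one from the re-indexed derivative acting on $h^{n}_{k-1}$ — which assemble into the claimed recursion for $h^{n+1}_{k}$.

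The main obstacle is bookkeeping in the $A$-part rather than any genuine analytic difficulty: one must keep the summation ranges straight across the index shift and confirm the endpoint behaviour. The $k=0$ coefficient receives no contribution from $h^{n}_{-1}$, and the top coefficient $k=n+2$ receives only the shifted-derivative contribution from $h^{n}_{n+1}$, which is why the conventions $h^{n}_{-1}=h^{n}_{n+2}=0$ are imposed; I would verify that with these conventions the single recursion covers $k=0,\dots,n+2$ with no spurious terms. The one place genuinely requiring care is the exact prefactor multiplying $h^{n}_{k-1}$: it must be read directly off \eqref{eq:firstDerJac} applied to the Jacobi polynomial $J^{(p+k-1,-p+k-1)}_{\ell-k+1}$ that $h^{n}_{k-1}$ multiplies. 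Before trusting the general recursion I would cross-check this constant against the explicit $n=1$ case, computing $Q^{1}_{p;\ell,m}$ from \eqref{eq:Qn} and matching the coefficients of $J_{\ell}^{(p,-p)}$, $J_{\ell-1}^{(p+1,-p+1)}$ and $J_{\ell-2}^{(p+2,-p+2)}$.
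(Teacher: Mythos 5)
Your plan follows essentially the same route as the paper's proof: induction on $n$ driven by the recursion \eqref{eq:Qnrecursion_coro}, base case read off from Lemma \ref{Lemma:Sol_Jacobi_and_Logs}, the $B$-branch handled by the telescoping cancellation (only $B\dot{J}$ survives) together with $(\ell+1)^{[n+1]}(\ell+n+2)=(\ell+1)^{[n+2]}$, and the $A$-branch by term-by-term differentiation with \eqref{eq:firstDerJac} followed by re-indexing, with the conventions $h^{n}_{-1}=h^{n}_{n+2}=0$ closing the ranges.

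However, the one point you single out as requiring care --- the exact prefactor of $h^{n}_{k-1}$ --- is precisely where the stated recursion goes wrong, and your prescription resolves it differently from the proposition. Reading the factor off \eqref{eq:firstDerJac} applied to $J^{(p+k-1,-p+k-1)}_{\ell-k+1}$ (the polynomial that $h^{n}_{k-1}$ multiplies), as you propose, gives $\tfrac{1}{2}(\ell+k)$, so your argument yields
\begin{align*}
h^{n+1}_{k}=\Big(\frac{1+\tau}{2}\Big)(\ell+k)\,h^{n}_{k-1}+(1+\tau)\dot{h}^{n}_{k}-(p+n+1)h^{n}_{k},
\end{align*}
with $(\ell+k)$ in place of the $(\ell+k+1)$ appearing in the statement. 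The paper's own proof derives the correct intermediate display --- the coefficient $\big(\tfrac{1+\tau}{2}\big)(\ell+k+1)h^{n}_{k}$ multiplying $J^{(p+k+1,-p+k+1)}_{\ell-k-1}$ --- but when it ``rearranges the sum'' it shifts $h^{n}_{k}\to h^{n}_{k-1}$ without shifting the numerical factor, an off-by-one slip. The $n=1$ cross-check you propose exposes this: for $\ell=1$, $p\neq 1$, the $(\ell+k)$ recursion gives $Q^{1}_{p;1,m}=p(p^{2}-1)A_{p,1,m}\big(\tfrac{1-\tau}{2}\big)^{p-2}$, in agreement with Remark \ref{rem:l1}, whereas the $(\ell+k+1)$ version produces spurious terms proportional to $\big(\tfrac{1-\tau}{2}\big)^{p-1}$ and $\big(\tfrac{1-\tau}{2}\big)^{p}$. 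So your proof, carried out as described, is correct and proves the proposition with the corrected coefficient; the structural conclusion that matters downstream (the $A$- and $B$-parts remain separated, which is all Corollary \ref{coro:propNPplusQ} uses) is unaffected.
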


\begin{proof}
  One proceeds inductively as follows. A direct calculation using
  Lemma \ref{Lemma:Sol_Jacobi_and_Logs} for the case $p \neq \ell$, gives
\begin{align}
  Q^{0}_{p;\ell,m}= (1+\tau)\dot{a}_{p;\ell,m} = B_{p,\ell,m}(\ell+1)
  \Big(\frac{1+\tau}{2}\Big)^{p+1}J_{\ell-1}^{(-p+1,p+1)}-A_{p,\ell,m}p
  \Big(\frac{1-\tau}{2}\Big)^{p-1} J_{\ell}^{(p,-p)}\nonumber
  \\ +(\ell+1)A_{p,\ell,m} \Big(\frac{1-\tau}{2}\Big)^{p}
  \Big(\frac{1+\tau}{2}\Big)J_{\ell-1}^{(p+1,-p+1)}.
\end{align}
This proves the case $n=0$ with
\begin{align}\label{eq:h0andh1}
  h_{0}^{0}(\tau)=-p \Big(\frac{1-\tau}{2}\Big)^{p-1}, \qquad
  h_{1}^0(\tau)=(\ell+1) \Big(\frac{1-\tau}{2}\Big)^{p}
  \Big(\frac{1+\tau}{2}\Big),
\end{align}
which constitutes the basis of induction.  Now, assume equation
\eqref{eq:ind_Qpnotl} is valid as the induction hypothesis.  To prove
the induction step we need to compute $Q^{n+1}_{p;\ell,m}$. Using
equation \eqref{eq:Qnrecursion_coro} gives
\begin{align}
  Q^{n+1}_{p;\ell,m}=\sum_{k=0}^{n+1}A_{p,\ell,m} \Big[(1+\tau)\Big(
    h^{n}_{k} \dot{J}_{\ell-k}^{(p+k,-p+k)} +
    J_{\ell-k}^{(p+k,-p+k)}\dot{h}^{n}_{k} \Big) -(p+n+1)h^{n}_{k}
    J_{\ell-k}^{(p+k,-p+k)}\Big] \nonumber \\ +2 B_{p,\ell,m}
  (\ell+1)^{[n+1]}
  \Big(\frac{1+\tau}{2}\Big)^{p+n+2}\dot{J}_{\ell-n-1}^{(-p+n+1,p+n+1)}.
\end{align}
Notice from the previous expression that, in those terms with coefficient
$B$ only $\dot{J}$ appears as the terms of the form $B J$ cancel
out.

\begin{IndRemark}
  As we will see in Corollary \ref{coro:propNPplusQ}, this
  cancellation is, ultimately, the responsible for the classical NP
  constants at $\mathscr{I}^{+}$ to be determined \emph{only by
  $A_{p;\ell,m}$} and not a combination of $A_{p;\ell,m}$ and
  $B_{p;\ell,m}$  as one could expect in general.  
\end{IndRemark}

Using identity \eqref{eq:firstDerJac} to compute the derivatives
of the Jacobi polynomials and using that
$(\ell+n+2)(\ell+1)^{[n+1]}=(\ell+1)^{[n+2]}$ gives
\begin{align}
  Q^{n+1}_{p;\ell,m}=\sum_{k=0}^{n+1}A_{p,\ell,m} \Big[
    \Big(\frac{1+\tau}{2}\Big)(\ell+k+1)h^{n}_{k} J_{\ell-k-1}^{(p+k+1,-p+k+1)}
    +\Big((1+\tau) \dot{h}^{n}_{k} -(p+n+1)h^{n}_{k}\Big)
    J_{\ell-k}^{(p+k,-p+k)} \Big] \nonumber \\ + B_{p,\ell,m}
  (\ell+1)^{[n+2]} \Big(\frac{1+\tau}{2}\Big)^{p+n+2}
  J_{\ell-n-2}^{(-p+n+2,p+n+2)}.
\end{align}
Rearranging the sum, the latter expression can be rewritten as
\begin{align}
    Q^{n+1}_{p;\ell,m}(\tau) =
    (\ell+1)^{[n+2]}\Big(\frac{1+\tau}{2}\Big)^{p+n+2}
    B_{p,\ell,m}J_{\ell-n-2}^{(-p+n+2,p+n+2)}(\tau)+\sum_{k=0}^{n+2}
    A_{p,\ell,m}h^{n+1}_{k}(\tau)J_{\ell-k}^{(p+k,-p+k)}(\tau)
  \end{align}
   where
  \begin{align}
h_{k}^{n+1}:= \Big(\frac{1+\tau}{2}\Big)(\ell+ k+ 1)h_{k-1}^{n} +
(1+\tau)\dot{h}_{k}^{n}-(p+n+1)h_{k}^{n} 
  \end{align}
  with $h^n_{-1}=h^{n}_{n+2}=0$ and the iteration starts with
  $h_{0}^{0}$ and $h_{1}^{0}$ as given in equation \eqref{eq:h0andh1}.
\end{proof}

\begin{corollary}\label{coro:propNPplusQ}
  For $p\neq l$ and $n=l$ one has
    \begin{align}\label{eq:ind_Qpnotl_pequaln}
    Q^{n}_{p;n,m}(\tau) =\sum_{k=0}^{n+1}
    A_{p,n,m}h^{n}_{k}(\tau)J_{n-k}^{(p+k,-p+k)}(\tau).
  \end{align}
\end{corollary}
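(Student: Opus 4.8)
The plan is to obtain the corollary as an immediate specialisation of the preceding proposition, with no fresh computation required. Recall that equation \eqref{eq:ind_Qpnotl} gives, for $p \neq \ell$, the decomposition of $Q^{n}_{p;\ell,m}(\tau)$ into a single $B_{p,\ell,m}$-term carrying the Jacobi polynomial $J_{\ell-n-1}^{(-p+n+1,p+n+1)}(\tau)$, plus an $A_{p,\ell,m}$-sum over $k=0,\dots,n+1$ of the polynomials $h^{n}_{k}(\tau)$ against $J_{\ell-k}^{(p+k,-p+k)}(\tau)$. First I would set $\ell = n$ throughout this identity and track the effect of this substitution on the degree indices of the Jacobi polynomials appearing in each term.

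The key observation is that the $B$-term disappears. Upon setting $\ell = n$, its Jacobi polynomial becomes $J_{\ell-n-1}^{(-p+n+1,p+n+1)} = J_{-1}^{(-p+n+1,p+n+1)}$, that is, a Jacobi polynomial of degree $-1$. By the standard convention that $J_{\gamma}^{(\alpha,\beta)} \equiv 0$ whenever $\gamma<0$, this factor vanishes identically, so the whole $B_{p,n,m}$-contribution drops out irrespective of the value of $B_{p,n,m}$. What survives is precisely the $A$-sum $\sum_{k=0}^{n+1} A_{p,n,m} h^{n}_{k}(\tau) J_{n-k}^{(p+k,-p+k)}(\tau)$, which is the claimed expression \eqref{eq:ind_Qpnotl_pequaln}. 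This step is exactly the mechanism flagged in the remark following the proposition: the cancellation of the $B\,J$ terms is what ultimately makes the NP constants at $\mathscr{I}^{+}$ depend only on the $A_{p;\ell,m}$.

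Finally, for completeness I would note that the top term $k=n+1$ of the retained sum likewise carries a degree-$(-1)$ polynomial, namely $J_{n-(n+1)}^{(p+n+1,-p+n+1)} = J_{-1}^{(p+n+1,-p+n+1)} = 0$, and hence contributes nothing; it is nevertheless kept in the stated sum merely to match the index range $k=0,\dots,n+1$ inherited from \eqref{eq:ind_Qpnotl}, and its presence is harmless. The only point requiring any care --- and the sole, minor, obstacle --- is the degree convention that Jacobi polynomials of negative order vanish. Once this convention is fixed, the corollary follows from \eqref{eq:ind_Qpnotl} by the single substitution $\ell = n$.
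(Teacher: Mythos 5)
Your proposal is correct and follows exactly the paper's own route: specialise Proposition's formula \eqref{eq:ind_Qpnotl} to $\ell=n$ and observe that the Jacobi polynomial $J_{\ell-n-1}^{(-p+n+1,p+n+1)}$ multiplying $B_{p,n,m}$ becomes the degree-$(-1)$ polynomial, which vanishes by convention, leaving only the $A$-sum. Your additional remark that the $k=n+1$ term of the retained sum also vanishes for the same reason is a harmless (and accurate) refinement beyond what the paper states.
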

\begin{proof}
  This follows by noticing that the Jacobi polynomial multiplying the
  term $B_{p,l,m}$ in equation \eqref{eq:ind_Qpnotl}
vanishes if $\ell=n$.
\end{proof}

\begin{proposition}\label{prop:pequallQn}
  \begin{align}\label{eq:prop:pequallQn}
Q^{n}_{p;p;m}(\tau)=g^n_{p}(\tau)D_{p,p,m} +
(-1)^{n+1}p_{[n+1]}\Big(\frac{1+\tau}{2}\Big)^{p+n+1}\Big(\frac{1-\tau}{2}\Big)^{p-n-1}\Bigg(C_{p,p,m}
+D_{p,p,m} \int_{0}^{\tau} \frac{ds}{(1-s^2)^{p+1}}\Bigg)
  \end{align}
  where $g^n_p(\tau)$ is a polynomial in $\tau$ determined recursively
  by
\begin{align}
  g^{n+1}_p &
  =(1+\tau)\dot{g}^{n}_p+(-1)^{n+1}2^{-2p}p_{[n+1]}(1+\tau)^{n+1}(1-\tau)^{-n-2}-(p+n+1)g^{n}_p,
  \\ g^{0}_p & =2^{-2p}(1-\tau)^{-1}.
\end{align}

\begin{proof}
  To simplify the notation let
  $F(\tau):=\int_{0}^{\tau}(1-s^2)^{-p-1}ds$, so that
  $\dot{F}=(1-\tau)^{-p-1}(1+\tau)^{-p-1}$. Using this notation, a
  direct calculation using Lemma \ref{Lemma:Sol_Jacobi_and_Logs} for the
  case $p = \ell$, gives
\begin{align}
Q^{0}_{p;p,m}=(1+\tau)\dot{a}_{p,p,m}=2^{-2p}D_{p,p,m}(1-\tau)^{-1} -
p\Big(\frac{1+\tau}{2}\Big)^{p+1}\Big(\frac{1-\tau}{2}\Big)^{p-1}(C_{p,p,m}
+D_{p,p,m}F).
\end{align}
This shows the $n=0$-case of equation \eqref{eq:prop:pequallQn} with
$g^0_p=2^{-2p}(1-\tau)^{-1}$.  The general case is proven by induction
as follows. Assume, as induction hypothesis, that
\eqref{eq:prop:pequallQn} holds. To compute $Q^{n+1}_{p;p,m}$ we
will use equation \eqref{eq:Qnrecursion_coro}.
To do this calculation in steps, first observe that
\begin{align}
\dot{Q}^{n}_{p,p,m} =
D_{p,p,m}G^{n}_p+(-1)^{n+1}p_{[n+1]}2^{-2p}(C_{p,p,m}
+D_{p,p,m}F)(1+\tau)^{p+n}(1-\tau)^{p-n-2}2(n+1-p\tau)
\end{align}
where $
G^{n}_p:=\dot{g}^{n}_p+(-1)^{n+1}2^{-2p}p_{[n+1]}(1+\tau)^{n}(1-\tau)^{-n-2}$
assisted with equation \eqref{eq:Qnrecursion_coro} a calculation
renders
\begin{align}
  Q^{n+1}_{p,p,m} & =\Big((1+\tau)G^{n}_p-(p+n+1)g^{n}_p\Big)D_{p,p,m} +
  \nonumber \\ & \qquad \quad (-1)^{n+1}p_{[n+1]}2^{-2p}(C_{p,p,m}
  +D_{p,p,m}F)(1+\tau)^{p+n+2}(1-\tau)^{p-n-2}(-p+n+1).
\end{align}
Using that $(-p+n+1)p_{[n+1]}=-p_{[n+2]}$ one finally obtains
\begin{align}
  Q^{n+1}_{p,p,m} =g^{n+1}_p D_{p,p,m} +
  (-1)^{n+2}p_{[n+2]}\Big(\frac{1+\tau}{2}\Big)^{p+n+2}
  \Big(\frac{1-\tau}{2}\Big)^{p-n-2}\Bigg(C_{p,p,m} +D_{p,p,m} F\Bigg)
\end{align}
with
\begin{align}
g^{n+1}_p=(1+\tau)\dot{g}^{n}_p+(-1)^{n+1}2^{-2p}p_{[n+1]}(1+\tau)^{n+1}(1-\tau)^{-n-2}-(p+n+1)g^{n}_p.
\end{align}
\end{proof}
\end{proposition}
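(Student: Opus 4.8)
The plan is to prove Proposition~\ref{prop:pequallQn} by induction on $n$, exactly paralleling the structure already established for the $p\neq\ell$ case, but tracking the logarithmic sector through the integral $F(\tau):=\int_0^\tau(1-s^2)^{-p-1}\,ds$. The key difference from the $p\neq\ell$ case is that the solution $a_{p;p,m}(\tau)$ comes from equation~\eqref{eq:Sol_highestharmonic} rather than the Jacobi-polynomial branch, so the relevant generating data are $C_{p,p,m}$ and $D_{p,p,m}$ instead of $A_{p,\ell,m}$ and $B_{p,\ell,m}$.

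Let me sketch the steps in order. First I would establish the base case $n=0$: starting from $Q^0_{p;p,m}=(1+\tau)\dot{a}_{p,p,m}$ and differentiating~\eqref{eq:Sol_highestharmonic}, the derivative of the prefactor $\left(\tfrac{1-\tau}{2}\right)^p\left(\tfrac{1+\tau}{2}\right)^p$ produces the polynomial piece carrying $D_{p,p,m}$, while differentiating the integral $F$ against its own prefactor and using $\dot F=(1-\tau)^{-p-1}(1+\tau)^{-p-1}$ collapses the singular powers to give the $2^{-2p}D_{p,p,m}(1-\tau)^{-1}$ term plus the $(C_{p,p,m}+D_{p,p,m}F)$ term. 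This pins down $g^0_p=2^{-2p}(1-\tau)^{-1}$. For the inductive step I would assume~\eqref{eq:prop:pequallQn} and feed it into the recursion~\eqref{eq:Qnrecursion_coro}, namely $Q^{n+1}_{p;p,m}=(1+\tau)\dot Q^n_{p;p,m}-(p+n+1)Q^n_{p;p,m}$. The cleanest bookkeeping is to compute $\dot Q^n_{p,p,m}$ first, separating the $D$-only polynomial contribution from $g^n_p$ from the contribution coming from differentiating the product $(1+\tau)^{p+n+1}(1-\tau)^{p-n-1}(C_{p,p,m}+D_{p,p,m}F)$; the latter requires both the product rule on the two powers of $(1\pm\tau)$ and the derivative $\dot F$, which again cancels a power to feed a pure-polynomial term back into the $g$-sector.

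The main obstacle will be the careful algebraic shepherding of exponents when differentiating the mixed term. Differentiating $(1+\tau)^{p+n+1}(1-\tau)^{p-n-1}$ yields a factor proportional to $2(n+1-p\tau)$ once one factors out $(1+\tau)^{p+n}(1-\tau)^{p-n-2}$, and simultaneously the $D_{p,p,m}\dot F$ piece produces $(1+\tau)^{p+n+1-p-1}(1-\tau)^{p-n-1-p-1}=(1+\tau)^{n}(1-\tau)^{-n-2}$ multiplied by $2^{-2p}$, which is precisely the non-$\dot g$ term folded into the auxiliary quantity $G^n_p:=\dot g^n_p+(-1)^{n+1}2^{-2p}p_{[n+1]}(1+\tau)^{n}(1-\tau)^{-n-2}$. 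Introducing $G^n_p$ is the device that keeps the $(C+D F)$-coefficient clean: after applying the recursion, the combination $(1+\tau)G^n_p-(p+n+1)g^n_p$ becomes the new $D$-polynomial coefficient $g^{n+1}_p$, while the $(C+D F)$-term picks up the scalar factor $(-p+n+1)p_{[n+1]}$.

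The final step is the index arithmetic on the falling factorial: using $(-p+n+1)p_{[n+1]}=-p_{[n+2]}$ (which follows from $p_{[n+2]}=(p-n-1)p_{[n+1]}$) converts $(-1)^{n+1}p_{[n+1]}(-p+n+1)$ into $(-1)^{n+2}p_{[n+2]}$ and raises the powers to $(1+\tau)^{p+n+2}(1-\tau)^{p-n-2}$, reproducing exactly the claimed form at level $n+1$ and simultaneously reading off the recursion $g^{n+1}_p=(1+\tau)\dot g^n_p+(-1)^{n+1}2^{-2p}p_{[n+1]}(1+\tau)^{n+1}(1-\tau)^{-n-2}-(p+n+1)g^n_p$. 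This closes the induction. I expect no conceptual difficulty beyond this exponent-tracking, since the cancellation that keeps the $(C+DF)$-sector proportional to a single conserved combination is automatic from $\dot F$ having precisely the reciprocal prefactor, mirroring how the $BJ$ terms cancelled in the $p\neq\ell$ analysis.
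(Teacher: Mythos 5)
Your proposal is correct and follows essentially the same route as the paper's own proof: induction in $n$ with the base case $Q^0_{p;p,m}=(1+\tau)\dot a_{p,p,m}$ fixing $g^0_p=2^{-2p}(1-\tau)^{-1}$, the recursion $Q^{n+1}_{p;p,m}=(1+\tau)\dot Q^n_{p;p,m}-(p+n+1)Q^n_{p;p,m}$, the same auxiliary quantity $G^n_p$ absorbing the $\dot F$ contribution into the $D$-sector, and the identity $(-p+n+1)p_{[n+1]}=-p_{[n+2]}$ to close the induction. All the key algebraic ingredients (the $2(n+1-p\tau)$ factor, the collapse of $\dot F$ against the prefactor to $2^{-2p}(1+\tau)^n(1-\tau)^{-n-2}$) match the paper's calculation.
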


\newpage

\begin{corollary}\label{coro:propNPplusQlog}
  For $p=l=n$ one has
    \begin{align}\label{eq:ind_Qpequall_pequaln}
    Q^{n}_{n;n,m}(\tau) =g^n_n(\tau)D_{n,n,m}.
  \end{align}
\end{corollary}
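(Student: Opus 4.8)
The plan is to obtain this directly as a specialisation of Proposition \ref{prop:pequallQn}, which already furnishes the full closed form of $Q^{n}_{p;p,m}(\tau)$ for every $p=\ell$. Setting $p=n$ in equation \eqref{eq:prop:pequallQn} leaves two contributions: the polynomial piece $g^n_n(\tau)D_{n,n,m}$, which is exactly the claimed answer, and the second term carrying the prefactor $(-1)^{n+1}p_{[n+1]}$ together with the combination $C_{p,p,m}+D_{p,p,m}F$. The entire content of the corollary is therefore the assertion that this second term drops out when $p=n$.

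The key observation I would make is that the falling factorial $p_{[n+1]}$ vanishes identically at $p=n$. Using the definition $x_{[k]}=\prod_{i=1}^{k}(x-i+1)$ recorded in the Remark above, one has
\begin{align*}
  n_{[n+1]} = \prod_{i=1}^{n+1}(n-i+1) = n\,(n-1)\cdots 1 \cdot 0 = 0,
\end{align*}
since the factor corresponding to $i=n+1$ is $n-(n+1)+1=0$. Consequently the coefficient $(-1)^{n+1}p_{[n+1]}$ of the second term is zero, and that term vanishes as a function of $\tau$, leaving $Q^{n}_{n;n,m}(\tau)=g^n_n(\tau)D_{n,n,m}$.

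The only point requiring a brief comment is that the second term in \eqref{eq:prop:pequallQn} also contains the factor $\big(\tfrac{1-\tau}{2}\big)^{p-n-1}$, which at $p=n$ becomes $\big(\tfrac{1-\tau}{2}\big)^{-1}$, together with the integral $F(\tau)=\int_0^\tau(1-s^2)^{-p-1}\,ds$; both are singular as $\tau\to 1$. I would stress that no indeterminate $0\cdot\infty$ arises here, because $p_{[n+1]}$ is a constant that is \emph{exactly} zero rather than a limit: the term is identically zero for all $\tau$ away from the critical set, so its value (and its limit towards $\mathscr{I}^{+}$) is zero. There is no real obstacle in this argument; the corollary is an immediate algebraic consequence of the vanishing of the falling factorial, and the proof amounts to exhibiting that single cancellation.
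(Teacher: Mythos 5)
Your proof is correct and matches the paper's own argument exactly: both specialise Proposition \ref{prop:pequallQn} to $p=n$ and observe that the falling factorial $p_{[n+1]}$ vanishes there, killing the $C_{p,p,m}+D_{p,p,m}F$ term. Your additional remark that no $0\cdot\infty$ ambiguity arises from the singular factors $\big(\tfrac{1-\tau}{2}\big)^{-1}$ and $F(\tau)$ is a sensible clarification the paper leaves implicit.
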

\begin{proof}
  This follows from equation \eqref{eq:prop:pequallQn} by noticing
  that $p_{[n+1]}$ vanishes if $p=n$.
\end{proof}

\begin{IndRemark}\label{Remark:Qngen}
   Although $h_{k}^{n}(\tau)$ and $g^{n}_{p}(\tau)$ can be
  computed recursively, as shown above, solving this recursion
  relation to obtain an explicit expression for any $(n,p)$ is not an
  easy task.  Nonetheless, assuming the NP constants are well-defined
  (i.e. assuming that the limit defining the NP constants does exist), for
  the purposes of identifying which parameters in the initial data (or
  rather, which ones do not) determine the value of the NP constants,
  Corollaries \ref{coro:propNPplusQ} and \ref{coro:propNPplusQlog} are
  enough. Furthermore, since the formulae derived in this section are
  written in terms of special functions and recursive relations one
  can easily compute $Q^{n}_{p;n,m}$ explicitly for arbitrary large $n$
  assisted with computer algebra programs such as {\tt{Mathematica}}. Some
  experimentation reveals that in fact one expects that for general
  $(n=\ell,p)$:
  \begin{subequations}\label{rem:n}
  \begin{align}
    Q^{n}_{n;n,m}(\tau) &=\alpha_{n}D_{n,n,m}(1-\tau)^{-1},
    \\ Q^{n}_{p;n,m}(\tau) &=\beta_{n,p}\prod_{i=0}^{2n} (p+n-i)
    A_{n,p,m}(1-\tau)^{p-n-1} \qquad \text{for}\qquad p\neq n,
  \end{align}
  \end{subequations}
  where $\alpha_{n}$ and $\beta_{n,p}$ are some constants.  
\end{IndRemark}

\subsection{The NP constants in terms of initial data}

In this subsection the results of the induction arguments of
subsection \ref{sec:induction_argument} are put together to assess how
the NP constants at $\mathscr{I}^{\pm}$ are determined in terms of the
initial data parameters $A_{p,\ell,m}$, $B_{p,\ell,m}$,
$C_{p,\ell,m}$, and $D_{p,\ell,m}$.

\begin{proposition}
Let $\tilde{\phi}$ be a spin-0 field propagating in Minkowski spacetime
with initial data close to $i^0$ as encoded in equation \eqref{eq:ID_field}
for which the  regularity condition is satisfied $D_{p,p,m}=0$.
If the classical
NP constants at $\mathscr{I}^+$ are well-defined then
 \begin{align}
   \mathcal{N}^{+}_{\ell,m}=q^{+}(\ell) \;A_{\ell+1,\ell,m},
 \end{align}
 where $q^{+}(\ell)$ is some numerical factor.
 In other words, the
  $\mathscr{I}^+$ classical NP constants
 depend only on the initial data parameter $A_{\ell+1,\ell,m}$.
\end{proposition}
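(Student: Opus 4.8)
The plan is to specialise Proposition \ref{lemmaMainNPplus} to $n=\ell$, insert the explicit boost factor, and evaluate on a cut of $\mathscr{I}^{+}$, reading off the surviving initial-data constant. First I would substitute $(\Lambda_{+})^{2}=\rho^{-1}(1+\tau)^{-2}$ from \eqref{eq:LorentzTransf} into \eqref{eq:bmelplusonephilm} with $n=\ell$, obtaining
\begin{align*}
  (\bme^{+})^{\ell+1}\phi_{\ell m}=4^{\ell+1}(1+\tau)^{-2(\ell+1)}\sum_{p=\ell}^{\infty}\frac{1}{p!}\,Q^{\ell}_{p;\ell,m}(\tau)\,\rho^{\,p-\ell-1}.
\end{align*}
As $\tau\to 1$ the prefactor $4^{\ell+1}(1+\tau)^{-2(\ell+1)}$ collapses to $4^{\ell+1}2^{-2(\ell+1)}=1$, so that by the definition \eqref{eq:classicalNP} of the classical NP constant, evaluated on a cut $\mathcal{C}^{+}$ parametrised by $\rho_{\star}$,
\begin{align*}
  \mathcal{N}^{+}_{\ell,m}=\sum_{p=\ell}^{\infty}\frac{1}{p!}\,\rho_{\star}^{\,p-\ell-1}\,Q^{\ell}_{p;\ell,m}\big|_{\mathscr{I}^{+}}.
\end{align*}

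The structural content now enters through the two corollaries. By Corollary \ref{coro:propNPplusQlog} the $p=\ell$ term equals $g^{\ell}_{\ell}(\tau)D_{\ell,\ell,m}$, which vanishes identically under the regularity hypothesis $D_{p,p,m}=0$; hence the would-be divergent $\rho_{\star}^{-1}$ contribution is absent and the sum effectively begins at $p=\ell+1$. For every remaining index $p\neq\ell$, Corollary \ref{coro:propNPplusQ} shows that $Q^{\ell}_{p;\ell,m}$ is assembled entirely from $A_{p,\ell,m}$, the $B_{p,\ell,m}$ piece having cancelled in the induction step, so that no $B$-parameter can appear anywhere in $\mathcal{N}^{+}_{\ell,m}$.

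Finally I would use the hypothesis that the NP constant is well-defined, which (as stressed in the discussion following \eqref{N00constExpression}) means that its value is independent of the cut, i.e. of $\rho_{\star}$. In the displayed sum the index $p=\ell+1$ supplies the unique $\rho_{\star}^{0}$ term, while every $p\geq\ell+2$ carries a strictly positive power $\rho_{\star}^{\,p-\ell-1}$; cut-independence therefore forces all those higher coefficients to vanish in the limit, leaving
\begin{align*}
  \mathcal{N}^{+}_{\ell,m}=\frac{1}{(\ell+1)!}\,Q^{\ell}_{\ell+1;\ell,m}\big|_{\mathscr{I}^{+}}=q^{+}(\ell)\,A_{\ell+1,\ell,m},
\end{align*}
where the proportionality to $A_{\ell+1,\ell,m}$ alone is guaranteed by Corollary \ref{coro:propNPplusQ}; the factor $q^{+}(\ell)$ is just the numerical value $Q^{\ell}_{\ell+1;\ell,m}|_{\mathscr{I}^{+}}/((\ell+1)!\,A_{\ell+1,\ell,m})$ obtained from the $h^{\ell}_{k}$ and the Jacobi-polynomial values at $\tau=1$.

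The step I expect to be the main obstacle is the justification that the terms with $p\geq\ell+2$ truly drop out. The economical route is precisely the cut-independence argument above, which uses well-definedness as a black box and thereby avoids any explicit evaluation of $Q^{\ell}_{p;\ell,m}$ for large $p$. The more self-contained alternative is to prove directly that $Q^{\ell}_{p;\ell,m}|_{\mathscr{I}^{+}}=0$ for $p\geq\ell+2$ by establishing the $(1-\tau)^{\,p-\ell-1}$ behaviour anticipated in Remark \ref{Remark:Qngen}; this would require tracking the powers of $(1-\tau)$ through the recursion defining $h^{n}_{k}$ together with the boundary values $J^{(\alpha,\beta)}_{\nu}(1)={\nu+\alpha \choose \nu}$, an elementary but bookkeeping-heavy computation.
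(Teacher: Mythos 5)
Your proof is correct and follows essentially the same route as the paper's: expand $(\bme^{+})^{\ell+1}\phi_{\ell m}$ via Proposition \ref{lemmaMainNPplus}, kill the $p=\ell$ term with the regularity condition and Corollary \ref{coro:propNPplusQlog}, use well-definedness to discard the $p\geq\ell+2$ terms, and read off the pure $A_{\ell+1,\ell,m}$ dependence from Corollary \ref{coro:propNPplusQ}. The only cosmetic difference is that the paper implements cut-independence by evaluating directly at the critical set $I^{+}$ (i.e.\ $\rho_{\star}=0$), whereas you argue that the coefficients of the positive powers of $\rho_{\star}$ must drop out; these are the same use of the hypothesis.
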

   
\begin{proof}
Using equations \eqref {eq:classicalNP} and \eqref{eq:bmelplusonephilm} gives
  \begin{align}
\mathcal{N}^{+}_{\ell,m}= (\bme^{+})^{\ell+1}\phi_{\ell m}|_{\mathcal{C}^{+}} =
(2\Lambda_{+})^{2(\ell+1)}\sum_{p=\ell}^{\infty}\frac{1}{p!}
Q^{\ell}_{p;\ell,m}(\tau)\rho^{p}|_{\mathcal{C}^{+}}
=\sum_{p=\ell}^{\infty}\frac{1}{p!}\rho_{\star}^{p-(\ell+1)}Q^{\ell}_{p;\ell,m}|_{\mathscr{I}^{+}}.
  \end{align}
  Imposing the regularity condition and assuming the NP constants
  are well-defined, evaluation at the critical set $I^{+}$ gives 
 \begin{align}
\mathcal{N}^{+}_{\ell,m}
=\frac{1}{(\ell+1)!}Q^{\ell}_{\ell+1;\ell,m}|_{\mathscr{I}^{+}}.
 \end{align}
 Inspection of $Q^{\ell}_{\ell+1;\ell,m}|_{\mathscr{I}^{+}}$ using
 Corollary \ref{coro:propNPplusQ} then renders
  \begin{align}
   \mathcal{N}^{+}_{\ell,m}=q^{+}(\ell) \;A_{\ell+1, \ell,m},
  \end{align}
  where $q^{+}(\ell)$ is a numerical factor.
\end{proof}

\begin{proposition}
Let $\tilde{\phi}$ be a spin-0 field propagating in Minkowski
spacetime with initial data close to $i^0$
as encoded in equation \eqref{eq:ID_field}.  If the
$f(\tilde{\rho})=\tilde{\rho}$ modified NP constants at
$\mathscr{I}^+$ are well-defined then
\begin{align}
   {}^{\tilde{\rho}}\mathcal{N}^{+}_{\ell,m}=q^{+}(\ell) \;D_{\ell,\ell,m}
 \end{align}
 where $q^{+}(\ell)$ is some numerical factor.  In other words, the
 $i^0$-cylinder logarithmic NP constants at $\mathscr{I}^{+}$
 depend only on the initial
 data parameter $D_{\ell,\ell,m}$.
\end{proposition}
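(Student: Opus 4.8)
The plan is to mirror the proof of the preceding proposition for the classical constants, the only essential difference being that the weight $f(\tilde\rho)=\tilde\rho$ (rather than $\tilde\rho^{2}$) shifts which term of the $\rho$-expansion survives at the critical set. First I would re-express the operator $\tilde\rho L$ acting on $(\bme^{+})^{\ell}\phi_{\ell m}$ entirely in $F$-gauge quantities. By Proposition~\ref{Prop:NPtoFgauge} one has $\tilde\rho L=\varkappa^{-1}\bme$, and since the same proposition gives $\bme=\tfrac14(\Lambda_{+})^{-2}\bme^{+}$, it follows that
\begin{align*}
 {}^{\tilde\rho}\mathcal{N}^{+}_{\ell,m}
 =\lim_{\substack{\rho\to\rho_\star\\ \tau\to 1}}\tilde\rho L(\bme^{+})^{\ell}\phi_{\ell m}
 =\lim_{\substack{\rho\to\rho_\star\\ \tau\to 1}}\tfrac14\varkappa^{-1}(\Lambda_{+})^{-2}(\bme^{+})^{\ell+1}\phi_{\ell m}.
\end{align*}
Into this I would insert the closed form for $(\bme^{+})^{\ell+1}\phi_{\ell m}$ supplied by Proposition~\ref{lemmaMainNPplus}.

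The decisive bookkeeping step is to collect the powers of $\Lambda_{+}$ and $\rho$. Substituting $(\bme^{+})^{\ell+1}\phi_{\ell m}=(2\Lambda_{+})^{2(\ell+1)}\sum_{p=\ell}^{\infty}\tfrac{1}{p!}Q^{\ell}_{p;\ell,m}(\tau)\rho^{p}$ and using $(\Lambda_{+})^{2}=\rho^{-1}(1+\tau)^{-2}$, the prefactor $\tfrac14\varkappa^{-1}(\Lambda_{+})^{-2}(2\Lambda_{+})^{2(\ell+1)}$ reduces to $\varkappa^{-1}4^{\ell}\rho^{-\ell}(1+\tau)^{-2\ell}$, so that the summand scales as $\rho^{p-\ell}$ rather than the $\rho^{p-(\ell+1)}$ of the classical case; the extra factor $(\Lambda_{+})^{-2}\propto\rho$ brought in by $f(\tilde\rho)=\tilde\rho$ is precisely what produces this shift. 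Evaluating at the critical set $I^{+}$ (that is, $\rho_\star=0$ together with $\tau\to1$) then annihilates every term with $p>\ell$, while the sum begins at $p=\ell$ by \eqref{eq:exp_phi_lm}; hence only the $p=\ell$ term contributes and
\begin{align*}
 {}^{\tilde\rho}\mathcal{N}^{+}_{\ell,m}
 =\frac{4^{\ell}}{\ell!}\lim_{\tau\to 1}\Big(\varkappa^{-1}(1+\tau)^{-2\ell}\,Q^{\ell}_{\ell;\ell,m}(\tau)\Big).
\end{align*}

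Finally I would invoke Corollary~\ref{coro:propNPplusQlog}, which states that $Q^{\ell}_{\ell;\ell,m}(\tau)=g^{\ell}_{\ell}(\tau)D_{\ell,\ell,m}$, the $C_{\ell,\ell,m}$-contribution having dropped out because the falling factorial $p_{[\ell+1]}$ vanishes at $p=\ell$. Thus the surviving coefficient is proportional to $D_{\ell,\ell,m}$ alone, giving ${}^{\tilde\rho}\mathcal{N}^{+}_{\ell,m}=q^{+}(\ell)\,D_{\ell,\ell,m}$ for a numerical factor $q^{+}(\ell)$, as claimed; the $\ell=0,1$ values computed earlier ($\tfrac12 D_{0,0,0}$ and $-\tfrac14 D_{1,1,m}$) serve as a check. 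The one point needing care — and the structural reason this modified constant, unlike the classical one, requires no regularity condition — is finiteness of the $\tau\to1$ limit: by Remark~\ref{Remark:Qngen} the polynomial $g^{\ell}_{\ell}(\tau)$ diverges like $(1-\tau)^{-1}$, which is exactly cancelled by $\varkappa^{-1}=(1-\tau)/(1+\tau)$, leaving a finite value. Since the statement already assumes the constants are well-defined, this cancellation need only be checked to confirm that the $p=\ell$ term is the one that survives and that its coefficient does not vanish; the explicit value of $q^{+}(\ell)$ then follows from $g^{\ell}_{\ell}$ but is immaterial to the assertion.
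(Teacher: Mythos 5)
Your proposal is correct and follows essentially the same route as the paper's proof: express $\tilde{\rho}L(\bme^{+})^{\ell}$ via Proposition~\ref{Prop:NPtoFgauge} as $\varkappa^{-1}(2\Lambda_{+})^{-2}(\bme^{+})^{\ell+1}$, insert the closed form of Proposition~\ref{lemmaMainNPplus} so the summand scales as $\rho^{p-\ell}$, evaluate at the critical set $I^{+}$ to isolate the $p=\ell$ term, and invoke Corollary~\ref{coro:propNPplusQlog} to conclude proportionality to $D_{\ell,\ell,m}$. Your additional remarks --- the explicit tracking of the factor $4^{\ell}(1+\tau)^{-2\ell}$ (which tends to $1$ at $\mathscr{I}^{+}$) and the cancellation of the $(1-\tau)^{-1}$ divergence of $g^{\ell}_{\ell}$ against $\varkappa^{-1}$ --- are consistent with, and slightly more explicit than, the paper's argument, which subsumes these points under the well-definedness assumption.
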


\begin{proof}             
Using equations \eqref{eq:DefModifiedNP} and
\eqref{eq:bmelplusonephilm} gives
  \begin{align}
   {}^{\tilde{\rho}}\mathcal{N}^{+}_{\ell,m}= \tilde{\rho}L
   (\bme^{+})^{\ell}\phi_{\ell m}|_{\mathcal{C}^{+}}=
   \varkappa^{-1}(2\Lambda_+)^{-2}(\bme^{+})^{\ell+1}\phi_{\ell
     m}|_{\mathcal{C}^{+}}
   =\sum_{p=\ell}^{\infty}\frac{1}{p!}\rho_{\star}^{p-\ell}
   (\varkappa^{-1}Q^{\ell}_{p;\ell,m})|_{\mathscr{I}^{+}}.
  \end{align}
  Assuming that the $i^0$-cylinder logarithmic NP constants are well-defined,
  evaluation at the critical set $I^{+}$ gives
  \begin{align}
   {}^{\tilde{\rho}}\mathcal{N}^{+}_{\ell,m}
   =\frac{1}{\ell!} (\varkappa^{-1}Q^{\ell}_{\ell;\ell,m})|_{\mathscr{I}^{+}}
  \end{align}
  Inspection of $Q^{\ell}_{\ell;\ell,m}|_{\mathscr{I}^{+}}$ using
  Corollary \ref{eq:ind_Qpequall_pequaln} then renders
   \begin{align}
   {}^{\tilde{\rho}}\mathcal{N}^{+}_{\ell,m}=q^{+}(\ell) \;D_{\ell,\ell,m}
 \end{align}
 where $q^{+}(\ell)$ is some numerical factor.
\end{proof}

With minimal changes, using Proposition \ref{Prop:NPtoFgauge}, and
equation \eqref{eq:classicalNPMinus} the classical NP constants can be
computed at $\mathscr{I}^{-}$. Analogous results of those
of subsections \ref{sec:ClassicalNP_main}, \ref{sec:LogarithmicNP_main}
and \ref{sec:induction_argument} can be derived and
\emph{mutatis mutandis}, the time-dual propositions are:

\begin{proposition}
Let $\tilde{\phi}$ be a spin-0 field propagating in Minkowski
spacetime with initial data close to $i^0$ as encoded in
equation \eqref{eq:ID_field} for which the regularity condition is
satisfied $D_{p,p,m}=0$.  If the classical NP constants at
$\mathscr{I}^-$ are well-defined then
 \begin{align}
   \mathcal{N}^{-}_{\ell,m}=q^{-}(\ell) \;B_{\ell+1,\ell,m}
 \end{align}
 where $q^{+}(\ell)$ is some numerical factor.  In other words, \emph{
 $\mathscr{I}^-$ classical NP constants} depend only on the initial
 data parameter $B_{\ell+1,\ell,m}$.
\end{proposition}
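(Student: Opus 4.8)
The plan is to deduce this $\mathscr{I}^-$ statement from its already-established $\mathscr{I}^+$ counterpart by exploiting the discrete time-reflection $R:\tau\mapsto-\tau$ (with $\rho,\vartheta^A$ fixed), rather than repeating the full induction. First I would note that the unphysical wave equation \eqref{eq:UnphysicalWaveExplicit} is invariant under $R$: every term carries matching numbers of $\tau$-factors and $\partial_\tau$-factors, so the two sign changes cancel and $\phi^R(\tau,\rho,\vartheta):=\phi(-\tau,\rho,\vartheta)$ is again a solution. Since $R$ interchanges $\mathscr{I}^+$ ($\tau=1$) with $\mathscr{I}^-$ ($\tau=-1$) and the critical sets $I^+\leftrightarrow I^-$, it is the natural tool for converting the plus-statement into the minus-statement.

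Next I would track the frames. From \eqref{eq:Fframe} a short computation gives $\bmue\,\phi^R=-(\bme\,\phi)^R$, while from \eqref{eq:LorentzTransf} one has $((\Lambda_+)^2)^R=(\Lambda_-)^2$; combining these with Proposition \ref{Prop:NPtoFgauge} yields $\bmue^-(\phi^R)=-(\bme^+\phi)^R$, and iterating, $(\bmue^-)^{\ell+1}(\phi^R)=(-1)^{\ell+1}((\bme^+)^{\ell+1}\phi)^R$. Evaluating at $\tau=-1$ and using that $(\cdot)^R|_{\tau=-1}=(\cdot)|_{\tau=1}$, this identity rewrites as $\mathcal{N}^-_{\ell,m}[\phi]=(-1)^{\ell+1}\mathcal{N}^+_{\ell,m}[\phi^R]$, reducing the claim to computing the $\mathscr{I}^+$ constant of the reflected solution.

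The key step is then the action of $R$ on the initial-data constants of Lemma \ref{Lemma:Sol_Jacobi_and_Logs}. Inserting the Jacobi reflection identity $J^{(\alpha,\beta)}_n(-\tau)=(-1)^nJ^{(\beta,\alpha)}_n(\tau)$ into \eqref{eq:Sol_jac_poly} turns $(\tfrac{1-\tau}{2})^pJ^{(p,-p)}_\ell(-\tau)$ into $(-1)^\ell(\tfrac{1+\tau}{2})^pJ^{(-p,p)}_\ell(\tau)$ and vice versa, so that reading off the coefficients of $a_{p;\ell,m}(-\tau)$ gives $A_{p,\ell,m}[\phi^R]=(-1)^\ell B_{p,\ell,m}[\phi]$ and $B_{p,\ell,m}[\phi^R]=(-1)^\ell A_{p,\ell,m}[\phi]$. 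In the highest-harmonic sector \eqref{eq:Sol_highestharmonic} the prefactor is even while $\int_0^\tau(1-s^2)^{-p-1}ds$ is odd, so $C_{p,p,m}$ is preserved and $D_{p,p,m}\mapsto-D_{p,p,m}$; hence the regularity condition $D_{p,p,m}=0$ and the well-definedness (existence of the limit) are $R$-invariant, and the hypotheses transform to themselves. Feeding $A_{\ell+1,\ell,m}[\phi^R]=(-1)^\ell B_{\ell+1,\ell,m}[\phi]$ into the $\mathscr{I}^+$ proposition gives $\mathcal{N}^-_{\ell,m}=(-1)^{\ell+1}q^+(\ell)(-1)^\ell B_{\ell+1,\ell,m}=-q^+(\ell)\,B_{\ell+1,\ell,m}$, so one may take $q^-(\ell)=-q^+(\ell)$.

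I expect the main obstacle to be the disciplined bookkeeping of the $(-1)^\ell$ factors and the verification that $R$ acts on the data exactly as the swap $A\leftrightarrow B$ (up to sign) together with $D\mapsto-D$; everything else is then forced by symmetry. A reader preferring to avoid the reflection argument can instead mirror Proposition \ref{lemmaMainNPplus} and Corollary \ref{coro:propNPplusQ} verbatim: define dual quantities $\tilde{Q}^n_{p;\ell,m}$ through $(\bmue^-)^{n+1}\phi_{\ell m}=(2\Lambda_-)^{2(n+1)}\sum_{p}\tfrac{1}{p!}\tilde{Q}^n_{p;\ell,m}(\tau)\rho^p$, establish the recursion $\tilde{Q}^{n+1}_{p;\ell,m}=(1-\tau)\dot{\tilde{Q}}^n_{p;\ell,m}-(p+n+1)\tilde{Q}^n_{p;\ell,m}$, and check that the dual of Corollary \ref{coro:propNPplusQ} now eliminates the $A$-contribution (its accompanying Jacobi polynomial vanishing at $\ell=n$), so that $\tilde{Q}^\ell_{\ell+1;\ell,m}|_{\mathscr{I}^-}$ is proportional to $B_{\ell+1,\ell,m}$ alone.
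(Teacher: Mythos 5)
Your proof is correct, and your main argument takes a genuinely different route from the paper. The paper establishes the $\mathscr{I}^{-}$ proposition by simply repeating, \emph{mutatis mutandis}, the induction machinery of subsections \ref{sec:ClassicalNP_main}--\ref{sec:induction_argument} with $\bme^{+},(\Lambda_{+})^2$ replaced by $\bmue^{-},(\Lambda_{-})^2$ --- i.e.\ the fallback route sketched in your last paragraph is essentially the paper's proof. Your reflection argument instead deduces the minus statement directly from the plus statement: the wave equation \eqref{eq:UnphysicalWaveExplicit} and the ODE \eqref{eq:ODE_wave_JacobiPoly} are invariant under $R:\tau\mapsto-\tau$; the frame identity $\bmue^{-}(\phi^R)=-(\bme^{+}\phi)^R$ follows from \eqref{eq:Fframe}, \eqref{eq:LorentzTransf} and Proposition \ref{Prop:NPtoFgauge} since $((\Lambda_{+})^2)^R=(\Lambda_{-})^2$; and the Jacobi symmetry $J^{(\alpha,\beta)}_n(-\tau)=(-1)^nJ^{(\beta,\alpha)}_n(\tau)$ acts on the data of Lemma \ref{Lemma:Sol_Jacobi_and_Logs} as $A\leftrightarrow(-1)^{\ell}B$, $C\mapsto C$, $D\mapsto-D$, so both hypotheses (regularity and well-definedness) are $R$-stable and transfer correctly from $\phi$ to $\phi^R$. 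What your route buys: no recomputation of the induction, and a sharper conclusion --- the explicit relation $q^{-}(\ell)=-q^{+}(\ell)$, which the paper leaves unspecified; it checks out at $\ell=0$, where $q^{+}(0)=-1$ and a direct evaluation of $\bmue^{-}(\phi_{00})$ at $\tau=-1$ indeed gives $\mathcal{N}^{-}_{0,0}=B_{1,0,0}$. What the paper's route buys: explicit dual formulas for the coefficient functions valid at all $\tau$ (the analogues of Proposition \ref{lemmaMainNPplus} and Corollary \ref{coro:propNPplusQ}), rather than only a statement about the constants.

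One small slip in your closing remark on the alternative route: the dual recursion carries a plus sign, $\tilde{Q}^{n+1}_{p;\ell,m}=(1-\tau)\dot{\tilde{Q}}^{n}_{p;\ell,m}+(p+n+1)\tilde{Q}^{n}_{p;\ell,m}$, because $\bmue\big((\Lambda_{-})^{2n}\big)=+n(\Lambda_{-})^{2n}$, opposite in sign to \eqref{eq:bmederLambda}; equivalently, this follows by conjugating \eqref{eq:Qnrecursion_coro} with $R$, under which $\tilde{Q}^{n}[\phi](\tau)=(-1)^{n+1}Q^{n}[\phi^R](-\tau)$. This does not affect your main argument, and the mechanism you cite for eliminating the $A$-contribution (the accompanying Jacobi polynomial acquiring negative degree at $\ell=n$) is the correct dual of the cancellation used in Corollary \ref{coro:propNPplusQ}.
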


\begin{proposition}
Let $\tilde{\phi}$ be a spin-0 field propagating in Minkowski
spacetime with initial data close to $i^0$
as in equation \eqref{eq:ID_field}.  If the
$f(\tilde{\rho})=\tilde{\rho}$ modified NP constants at
$\mathscr{I}^-$ are well-defined then
\begin{align}
   {}^{\tilde{\rho}}\mathcal{N}^{-}_{\ell,m}=q^{-}(\ell) \;D_{\ell,\ell,m}
 \end{align}
 where $q^{+}(\ell)$ is some numerical factor.  In other words, the
$i^0$-cylinder NP constants at $\mathscr{I}^{-}$ depend only on the initial
 data parameter $D_{\ell,\ell,m}$.
\end{proposition}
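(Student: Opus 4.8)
The plan is to run, under the discrete time reflection $\tau\mapsto-\tau$, the exact analogue of the computation that produced ${}^{\tilde{\rho}}\mathcal{N}^{+}_{\ell,m}$. First I would record that the wave equation \eqref{eq:UnphysicalWaveExplicit}, equivalently the ODE \eqref{eq:ODE_wave_JacobiPoly}, is invariant under $\tau\mapsto-\tau$, that this reflection exchanges $\mathscr{I}^{+}\leftrightarrow\mathscr{I}^{-}$, and that it sends the $F$-frame vector $\bme$ to $-\bmue$ (so that $\bme^{+}$ is carried to $-\bmue^{-}$, since $(\Lambda_{+})^{2}\mapsto(\Lambda_{-})^{2}$ and $\varkappa\mapsto\varkappa^{-1}$). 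On the solution parameters of Lemma \ref{Lemma:Sol_Jacobi_and_Logs} the reflection acts, through $J^{(\alpha,\beta)}_{\ell}(-\tau)=(-1)^{\ell}J^{(\beta,\alpha)}_{\ell}(\tau)$, by interchanging $A_{p,\ell,m}\leftrightarrow B_{p,\ell,m}$ up to a sign while fixing $C_{p,p,m}$ and sending $D_{p,p,m}\mapsto-D_{p,p,m}$; the essential point is that $D$ is mapped to a nonzero multiple of itself.

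Concretely, I would first establish the time-dual of Proposition \ref{lemmaMainNPplus}, namely
\begin{align*}
(\bmue^{-})^{n+1}\phi_{\ell m}=(2\Lambda_{-})^{2(n+1)}\sum_{p=\ell}^{\infty}\frac{1}{p!}\,\hat{Q}^{n}_{p;\ell,m}(\tau)\,\rho^{p},
\end{align*}
where $\hat{Q}^{n}_{p;\ell,m}$ obeys the recursion \eqref{eq:Qnrecursion_coro} with $(1+\tau)$ replaced by $(1-\tau)$ and with $\bmue(\Lambda_{-})^{2n}=-n(\Lambda_{-})^{2n}$ playing the role of \eqref{eq:bmederLambda}; the induction is word-for-word that of Proposition \ref{lemmaMainNPplus}. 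I would then prove the time-dual of Corollary \ref{coro:propNPplusQlog}: for $p=\ell=n$ one has $\hat{Q}^{n}_{n;n,m}(\tau)=\hat{g}^{n}_{n}(\tau)\,D_{n,n,m}$. This rests on the fact that in the branch $p=\ell$ the solution \eqref{eq:Sol_highestharmonic} involves only $C_{p,p,m}$ and $D_{p,p,m}$, and that, exactly as in Proposition \ref{prop:pequallQn}, the falling factorial $p_{[n+1]}$ multiplying the $C$-contribution vanishes at $p=n$, so that only $D_{n,n,m}$ survives.

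With these two ingredients the result follows at once. Using \eqref{eq:DefModifiedNP} together with the $\mathscr{I}^{-}$ relation of Proposition \ref{Prop:NPtoFgauge}, which gives $\tilde{\rho}\,\uL=\varkappa\,\bmue$ (the dual of $\tilde{\rho}\,L=\varkappa^{-1}\bme$), I would write
\begin{align*}
{}^{\tilde{\rho}}\mathcal{N}^{-}_{\ell,m}
&=\tilde{\rho}\,\uL\,(\bmue^{-})^{\ell}\phi_{\ell m}\big|_{\mathcal{C}^{-}}
=\varkappa\,(2\Lambda_{-})^{-2}(\bmue^{-})^{\ell+1}\phi_{\ell m}\big|_{\mathcal{C}^{-}}\\
&=\sum_{p=\ell}^{\infty}\frac{1}{p!}\,\rho_{\star}^{\,p-\ell}\big(\varkappa\,\hat{Q}^{\ell}_{p;\ell,m}\big)\big|_{\mathscr{I}^{-}}.
\end{align*}
Assuming the constant is well-defined, evaluation at the critical set $I^{-}$ (that is $\rho_{\star}=0$) kills every $p>\ell$ term, each of which carries a positive power of $\rho_{\star}$, leaving ${}^{\tilde{\rho}}\mathcal{N}^{-}_{\ell,m}=\tfrac{1}{\ell!}\big(\varkappa\,\hat{Q}^{\ell}_{\ell;\ell,m}\big)|_{\mathscr{I}^{-}}$. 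The time-dual corollary then yields ${}^{\tilde{\rho}}\mathcal{N}^{-}_{\ell,m}=q^{-}(\ell)\,D_{\ell,\ell,m}$.

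The main obstacle is purely the bookkeeping behind the two time-dual statements: one must verify that $\tau\mapsto-\tau$ really carries the highest-harmonic solution \eqref{eq:Sol_highestharmonic} to itself with $C\mapsto C$ and $D\mapsto-D$, and --- most importantly --- that the cancellation selecting $D$ while discarding $C$ (through $p_{[n+1]}\big|_{p=n}=0$) is preserved under the reflection, so that $D_{\ell,\ell,m}$ alone, and not a combination with $C_{\ell,\ell,m}$, appears. Once this is checked, everything else is the induction of Section \ref{sec:induction_argument} run with $(1-\tau)$ in place of $(1+\tau)$, and the well-definedness hypothesis is exactly what licenses discarding all $p>\ell$ contributions at $I^{-}$.
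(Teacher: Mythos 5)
Your proposal takes the same route as the paper: the paper proves the $\mathscr{I}^{-}$ statements only by declaring them the \emph{mutatis mutandis} time-duals of the $\mathscr{I}^{+}$ arguments, and your plan --- rerunning the induction of Section \ref{sec:induction_argument} for $\bmue^{-}$, establishing duals of Proposition \ref{lemmaMainNPplus} and Corollary \ref{coro:propNPplusQlog}, and evaluating at $I^{-}$ via the (correct) relation $\tilde{\rho}\,\uL=\varkappa\,\bmue$ --- is precisely what that duality amounts to. You also correctly isolate the essential mechanism: $C_{p,p,m}$ only ever enters multiplied by $p_{[n+1]}$, which vanishes at $p=n$, and this survives the reflection $\tau\mapsto-\tau$ (which sends $(C_{p,p,m},D_{p,p,m})\mapsto(C_{p,p,m},-D_{p,p,m})$ and $\int_0^{\tau}(1-s^2)^{-p-1}ds\mapsto-\int_0^{\tau}(1-s^2)^{-p-1}ds$), so only $D_{\ell,\ell,m}$ can appear in the end.

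However, two of your auxiliary identities carry wrong signs, so the induction would not go through ``word-for-word'' as stated. With $\bmue=(1-\tau)\bm\partial_{\tau}+\rho\bm\partial_{\rho}$ and $(\Lambda_{-})^{2}=\rho^{-1}(1-\tau)^{-2}$ a direct computation gives
\begin{align*}
\bmue(\Lambda_{-})^{2n}=+\,n\,(\Lambda_{-})^{2n},
\end{align*}
not $-n(\Lambda_{-})^{2n}$: the $(1-\tau)\bm\partial_{\tau}$ part contributes $+2n$ and the $\rho\bm\partial_{\rho}$ part $-n$ (equivalently, reflect \eqref{eq:bmederLambda} remembering that $\tau\mapsto-\tau$ sends $\bme\mapsto-\bmue$, which flips the sign). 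Consequently the dual recursion is \emph{not} \eqref{eq:Qnrecursion_coro} with $(1+\tau)\to(1-\tau)$; the sign of the zeroth-order term flips as well,
\begin{align*}
\hat{Q}^{n+1}_{p;\ell,m}=(1-\tau)\dot{\hat{Q}}^{n}_{p;\ell,m}+(p+n+1)\hat{Q}^{n}_{p;\ell,m},
\qquad
\hat{Q}^{0}_{p;\ell,m}=(1-\tau)\dot{a}_{p;\ell,m}+p\,a_{p;\ell,m},
\end{align*}
and your two stated identities are not even mutually consistent (your $\bmue(\Lambda_{-})^{2n}$ claim would produce $(p-n-1)$, not $-(p+n+1)$, as the coefficient). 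Running the induction literally as you describe would already clash with a direct computation of $(\bmue^{-})^{2}\phi_{\ell m}$. These are correlated, fixable slips rather than a structural flaw: with the corrected recursion, the dual of Proposition \ref{prop:pequallQn} still carries the factor $p_{[n+1]}$ on the $C_{p,p,m}$-contribution, Corollary \ref{coro:propNPplusQlog} dualises as you claim, and the conclusion ${}^{\tilde{\rho}}\mathcal{N}^{-}_{\ell,m}=q^{-}(\ell)\,D_{\ell,\ell,m}$ follows, with all signs absorbed into the unspecified factor $q^{-}(\ell)$.
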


Comparing the results at $\mathscr{I}^{\pm}$
one can summarise the discussion of this
section in the form of the following 

\begin{corollary}
  Let $\tilde{\phi}$ be a spin-0 field propagating in Minkowski
  spacetime with initial data 
  close to $i^0$ as encoded in equation \eqref{eq:ID_field}:
  
  \begin{itemize}
  \item  If the regularity condition is not satisfied then the classical
    NP constants are not well-defined.
   \item  If the regularity condition is satisfied and
    the classical NP constants are well-defined, then
    the classical NP constants at $\mathscr{I}^{\pm}$
    generically do not coincide. 
  \item If the logarithmic NP constants at $\mathscr{I}^{\pm}$ are well-defined
    then they arise from
    the same part of the initial data and hence, up to a numerical constant,
     coincide.
  \end{itemize}
  \end{corollary}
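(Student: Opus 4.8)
The plan is to assemble the four Propositions established above in this subsection together with one additional observation about divergences; the Corollary is essentially their synthesis, organised by bullet.

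For the \emph{first bullet} I would isolate the leading term of the series defining the classical constant. Writing $\mathcal{N}^{+}_{\ell,m}=\sum_{p=\ell}^{\infty}\frac{1}{p!}\rho_{\star}^{\,p-(\ell+1)}Q^{\ell}_{p;\ell,m}|_{\mathscr{I}^{+}}$, the $p=\ell$ term carries the coefficient $Q^{\ell}_{\ell;\ell,m}$, which by Corollary \ref{coro:propNPplusQlog} equals $g^{\ell}_{\ell}(\tau)D_{\ell,\ell,m}$. The base case $g^{0}_{0}=(1-\tau)^{-1}$ of Proposition \ref{prop:pequallQn}, already visible in Remark \ref{rem:l0}, shows that this coefficient carries a genuine simple pole at $\tau=1$; hence whenever $D_{\ell,\ell,m}\neq 0$ the limit $\tau\to 1$ diverges and the classical constant fails to exist, \emph{irrespective of the cut parameter} $\rho_{\star}$. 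This is exactly the mechanism noted for $\ell=0$ around equation \eqref{N00constExpression}. The time-dual computation at $\mathscr{I}^{-}$, obtained by replacing $\bme^{+}$ with $\bmue^{-}$, gives the same conclusion governed by the same $D_{\ell,\ell,m}$.

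For the \emph{second bullet} I would combine the two classical-constant Propositions: once the regularity condition $D_{p,p,m}=0$ is imposed one has $\mathcal{N}^{+}_{\ell,m}=q^{+}(\ell)A_{\ell+1,\ell,m}$ and $\mathcal{N}^{-}_{\ell,m}=q^{-}(\ell)B_{\ell+1,\ell,m}$. Since Lemma \ref{Lemma:Sol_Jacobi_and_Logs} exhibits $A_{p,\ell,m}$ and $B_{p,\ell,m}$ as the coefficients of the two linearly independent solutions of the ODE \eqref{eq:ODE_wave_JacobiPoly}, they are independent functionals of the Cauchy data \eqref{eq:ID_field}; one may therefore choose data with $q^{+}(\ell)A_{\ell+1,\ell,m}\neq q^{-}(\ell)B_{\ell+1,\ell,m}$, so the two families differ for generic data. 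For the \emph{third bullet} I would instead combine the two $f(\tilde{\rho})=\tilde{\rho}$-modified Propositions, which give ${}^{\tilde{\rho}}\mathcal{N}^{+}_{\ell,m}=q^{+}(\ell)D_{\ell,\ell,m}$ and ${}^{\tilde{\rho}}\mathcal{N}^{-}_{\ell,m}=q^{-}(\ell)D_{\ell,\ell,m}$; both being numerical multiples of the single parameter $D_{\ell,\ell,m}$, they coincide up to the ratio $q^{+}(\ell)/q^{-}(\ell)$, which is the claimed proportionality.

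The main obstacle is the first bullet: one must be certain that the simple pole in $g^{\ell}_{\ell}(\tau)$ does not cancel, i.e.\ that its residue $\alpha_{\ell}$ (in the notation of Remark \ref{Remark:Qngen}) is nonzero. This is immediate for $\ell=0,1$ from Remarks \ref{rem:l0} and \ref{rem:l1}, and the pattern $Q^{n}_{n;n,m}=\alpha_{n}D_{n,n,m}(1-\tau)^{-1}$ is recorded in Remark \ref{Remark:Qngen}; but upgrading the observed nonvanishing of $\alpha_{\ell}$ from an experimental pattern to a proof for all $\ell$ is the delicate point. I would close this by an induction on the residue of $g^{n}_{n}$ at $\tau=1$, tracking the leading $(1-\tau)^{-1}$ coefficient through the recursion for $g^{n+1}_{p}$ in Proposition \ref{prop:pequallQn} and verifying it never vanishes.
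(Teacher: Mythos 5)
Your proposal is correct and is essentially the paper's own argument: the Corollary is presented there with no separate proof, being precisely the synthesis of the four preceding Propositions (classical constants determined by $A_{\ell+1,\ell,m}$ at $\mathscr{I}^{+}$ and $B_{\ell+1,\ell,m}$ at $\mathscr{I}^{-}$, $\tilde{\rho}$-modified constants by $D_{\ell,\ell,m}$ at both) together with the divergence mechanism $Q^{\ell}_{\ell;\ell,m}=g^{\ell}_{\ell}(\tau)D_{\ell,\ell,m}$ of Corollary \ref{coro:propNPplusQlog}. On the first bullet you are in fact more careful than the paper, which leaves the nonvanishing of the residue $\alpha_{\ell}$ as an experimental observation (Remark \ref{Remark:Qngen}); the induction you sketch does close that gap, since the coefficient $c^{n}$ of $(1-\tau)^{-1}$ in $g^{n}_{p}$ obeys $c^{n+1}=-(p+n+2)\,c^{n}+2^{-2p}p_{[n+1]}$ with $c^{0}=2^{-2p}$, and one checks inductively that $(-1)^{n}c^{n}>0$ (odd steps add two contributions of the same sign, while at even steps the homogeneous term dominates the inhomogeneous one because $p+n+2>p-n$), so $g^{\ell}_{\ell}$ indeed retains a nonzero simple pole at $\tau=1$.
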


\section{Higher-order asymptotic expansions close to $i^0$ and $\mathscr{I}^{+}$}

In~\cite{DuaFenGas22} the peeling behaviour of the Weyl tensor was
studied exploiting an asymptotic expansion introduced
in~\cite{DuaFenGasHil21}. The method introduced
in~\cite{DuaFenGasHil21} is based on a heuristic generalisation of
H\"ormander's asymptotic system ~\cite{LinRod03, GasHil18} and was
developed for a formulation of the Einstein field equations in
generalised harmonic gauge designed for hyperboloidal numerical
evolutions ---see \cite{DuaFenGasHil22a}.  Interestingly, the
polyhomogeneous expansions for the Weyl tensor of \cite{DuaFenGas22}
look similar to those obtained using the conformal Einstein field
equations in \cite{Fri98a, Fri03, Val07, GasVal18, Fri18}.  However,
as shown in \cite{DuaFenGasHil22b}, the expansions of
\cite{DuaFenGas22} cannot encode the logs related to the critical sets
$I^{\pm}$---which will be called $i^0$-cylinder-logs for short. To see
this, it is enough to examine (the simplest scenario) the solutions to
the wave equation in flat spacetime. In \cite{DuaFenGasHil21} the
expansion for the solution for the source-free wave equation (or
good-field in the naming conventions of \cite{DuaFenGasHil21}) is
log-free while using the $i^0$-cylinder framework one sees this is not
the case unless the initial data is fine-tuned (regularity condition)
---see in Lemma \ref{Lemma:Sol_Jacobi_and_Logs} and Remark
\ref{Remark:logfreeRemark}.  Nonetheless, in \cite{DuaFenGasHil22b} it
was discussed how with a more careful reading of H\"ormander's
asymptotic system (leading order asymptotic system) one can recover
the ``leading log'' in the physical field and determine a condition
(at the level of initial data) that controls the appearance of the
leading log. This condition corresponds, in the unphysical set-up, to
choosing initial data so that $D_{0,0,0}=0$ and, in the
physical-asymptotic-system approach, to the condition that
\begin{align}\label{no-leading-log-physicalgood}
\partial_{\tilde{v}}(\tilde{\rho}\tilde{\phi}) \simeq
\mathcal{O}(\tilde{\rho}^{-2}) \qquad \text{on}\qquad \mathcal{S}.
\end{align}
However, it is clear that even with this improvement, the
asymptotic-system approach discussed in section 3.5 (revisiting the
asymptotic system) of \cite{DuaFenGasHil22b} cannot recover the
\emph{higher-order $i^0$-cylinder-logs} controlled by the coefficient
$D_{n,n,m}$ with $n \geq 1$. This is because the appearance of logs
depend not only on the rough decay of the field at infinity ---see
Proposition 3 in \cite{DuaFenGasHil22b}--- but on the initial data for
the highest spherical harmonic mode at fixed order in $\rho$.  Since
the asymptotic system heuristics and its current higher-order version
in \cite{DuaFenGasHil21, DuaFenGas22, DuaFenGasHil22a} do not make
such harmonic decomposition, those asymptotic expansions are
insensitive to the presence of the higher-order
$i^0$-cylinder-logs. In this section it is shown how the identities in
equation \eqref{eq:cons_laws}---see also Appendix \ref{App:A}---
provide a potential path for a modification of the higher-order
asymptotic system approach of \cite{DuaFenGasHil21, DuaFenGas22,
  DuaFenGasHil22a} which is sensitive to the $i^0$-cylinder-logs, at
least, in flat spacetime.  Whether this heuristic method can be
extended for asymptotically flat spacetimes ---in a spirit similar to
that of \cite{DuaFenGasHil21}--- is left for future work.

\medskip

To start the discussion, observe from equation \eqref{eq:exp_phi_lm}
that $\phi_{\ell m}=\phi_{\ell m}(\tau,\rho)$.  Using that $\phi=
\Theta \tilde{\phi}$, then the expansion \eqref{eq:exp_phi_lm} can be
thought as an expansion in the physical set-up via
\begin{align}\label{eq:readPhysFieldFromUnphys}
  \tilde{\phi}_{\ell m} = \tilde{\rho}^{-1}\phi_{\ell m}
  (\tau(\tilde{t},\tilde{\rho}),\rho(\tilde{t},\tilde{\rho})).
\end{align}
In other words, one can implicitly infer an expansion for the physical
field in terms of its unphysical version. Notice that for the source-free
wave equation, Corollary
\ref{coro:main_commutation} for $\ell=0$ implies
\begin{align}
\partial_{\tilde{u}} \partial_{\tilde{v}} \phi_{00}=0,
\end{align}
which can be thought as the \emph{leading order asymptotic system}.
Integrating once renders
\begin{align}
\partial_{\tilde{v}} \phi_{00}=f_{00}(\tilde{v}),
\end{align}
where
\begin{align}
  f_{00} (\tilde{v})=
  \partial_{\tilde{v}}\phi_{00}|_{\tilde{u}=\tilde{u}_{\star}}.
\end{align}
The last equation expressed in terms of the physical field reads $
f_{00}(\tilde{v})=\partial_{\tilde{v}}(\tilde{\rho}\tilde{\phi}_{00})|_{\tilde{u}=\tilde{u}_{\star}}$
---compare with equation \eqref{no-leading-log-physicalgood}.
Integrating again and using equation
\eqref{eq:readPhysFieldFromUnphys} gives the following expression for
the leading order physical field
\begin{align}
\tilde{\phi}_{00}= \frac{1}{\tilde{\rho}}(F_{00}(\tilde{v})
+G_{00}(\tilde{u}))
\end{align}
where $G_{00} = \phi_{00}|_{\tilde{v}=\tilde{v}_{\star}}$ and $F_{00}=
\int_{\tilde{v}_{\star}}^{\tilde{v}}f_{00}(s)ds$.  This discussion
recovers that of \cite{DuaFenGasHil22b} however, it does that in a way
in which it can be generalised to higher-$\ell$. In other words, the
infinite hierarchy of conservation laws of Corollaries
\ref{coro:main_commutation} and \ref{coro:main_commutation_minus} can
be regarded as the \emph{($i^0$-cylinder-log sensitive) higher-order
  asymptotic system}. To see this is the case,  notice that for the
$\mathscr{I}^{+}$-adapted set-up, for the source-free wave equation one has:
\begin{align}
\uL (\tilde{\rho}^{-2\ell} L (\bme^{+})^{\ell}\phi_{\ell m}) =0.
\end{align}
Integrating once gives 
\begin{align}
\tilde{\rho}^{-2\ell}L (\bme^{+})^{\ell}\phi_{\ell m} = f_{\ell m} (\tilde{v}),
\end{align}
which can be written more conveniently as
\begin{align}\label{eq:high-order-asymptotic-system-good}
\tilde{\rho}^{-2\ell-2} (\bme^{+})^{\ell+1}\phi_{\ell m} = f_{\ell m}
(\tilde{v}).
\end{align}
To see that this notion of higher order asymptotic system
is indeed sensitive to
all (higher-order) $i^0$-cylinder-logs it is enough to compute
$f_{\ell m} (\tilde{v})$ for the exact solution (arising from analytic
initial data in a neighbourhood of $i^0$) given in Lemma
\ref{Lemma:Sol_Jacobi_and_Logs}.  For $\ell=0$, expressing equation
\eqref{eq:high-order-asymptotic-system-good} in terms of the
$F$-frame and using Proposition \ref{Prop:NPtoFgauge} gives
\begin{align}
  f_{00}= L \phi_{00}= \Theta^2 \bme^{+} (\phi_{00})=
  2^2 \sum_{p=0}^{\infty}\frac{1}{p!}(1-\tau)^2Q {}^{0}_{p;0,0}(\tau)\rho^{p+1}.
\end{align}
Using equations \eqref{eq:rem:l0} and \eqref{eq:UnphysPhysAdvRet}
one obtains,
\begin{align}\label{eq:f00}
  f_{00}= \frac{2^2D_{000}}{\tilde{v}} - \sum_{p=1}^{\infty}
  \frac{ 2^{2-p}}{(p-1)!}\frac{A_{p,0,m}}{\tilde{v}^{p+1}}.
\end{align}
Hence, the condition $f_{00}(\tilde{v})=\mathcal{O}(\tilde{v}^{-2})$
coincides with expression \eqref{no-leading-log-physicalgood} on $\mathcal{S}$.
This condition, in turn, implies that
$D_{0,0,0}=0$, which is the regularity condition at order $p=0$.  Moreover,
for general $\ell$, using Proposition \ref{lemmaMainNPplus} gives
\begin{align}
  f_{\ell m} = ( 2\Theta \Lambda_{+})^{2(\ell +1)}
  \sum_{p=\ell}^{\infty}\frac{1}{p!}{}Q
      {}^{\ell}_{p;\ell,m}(\tau)\rho^{p} = 2^{2(\ell+1)} 
      \sum_{p=\ell}^{\infty}\frac{1}{p!}{}(1-\tau)^{2(\ell+1)}Q
      {}^{\ell}_{p;\ell,m}(\tau)\rho^{p+\ell +1}.
\end{align}      
One can compute $f_{\ell m}$ explicitly using equation \eqref{eq:Qn}
and Lemma \ref{Lemma:Sol_Jacobi_and_Logs} for fixed $\ell$ to see
that, just as before, $f_{\ell m}$ is given by a power expansion in
$\tilde{v}$ where the slowest decaying term is that corresponding to
$D_{p,p,m}$.  Moreover, using equation \eqref{rem:n} ---see Remark
\ref{Remark:Qngen}--- one realises that the $f_{\ell m}$ associated to the
solution of Lemma \ref{Lemma:Sol_Jacobi_and_Logs} is of the form:
\begin{align}\label{eq:fvAnsatz}
  f_{\ell m}(\tilde{v})= \frac{\alpha_{\ell}D_{\ell, \ell, m
  }}{\tilde{v}^{2\ell +1}} + \sum_{q=0}^{\infty}\beta_{q}
  \frac{A_{\ell +1 +q, \ell, m}}{\tilde{v}^{2(\ell+1)+q}},
\end{align}
for some constants $\alpha_\ell$ and $\beta_{q}$.  Hence, the higher
order no-log condition $D_{\ell, \ell, m }=0$, is equivalent to
require that
\begin{align}
f_{\ell m}(\tilde{v}) \simeq \mathcal{O}(\tilde{v}^{-2(\ell+1)}).
\end{align}
Moreover, one can use the latter to obtain, heuristically, an
asymptotic expansion for the physical field $\tilde{\phi}$ by
integrating equation \eqref{eq:high-order-asymptotic-system-good} for
any assumed form of $f_{\ell m}(\tilde{v})$.  Although not exactly the
same, the choice of $f_{\ell m}(\tilde{v})$ can be thought as the
asymptotic-system counterpart of the choice of Ansatz for $\phi$
---say, as that of equation \eqref{eq:Ansatz}. See Remark
\ref{Remark:AnsatzAndF} for a discussion on the limits of this
analogy. It follows from equation
\eqref{eq:high-order-asymptotic-system-good}
that to obtain $\phi_{\ell m}$ (for any given $f_{\ell m}
(\tilde{v})$) one has to formally integrate $\ell+1$ times the
equation:
\begin{align}\label{eq:high-order-asymptsys-NP}
 (\bme^{+})^{\ell+1}\phi_{\ell m} = \tilde{\rho}^{2(\ell+1)}f_{\ell m}
  (\tilde{v}).
\end{align}

One can do this, in the same spirit of \cite{GasHil18,DuaFenGasHil21},
by considering the field only on outgoing null curves
$\bm\gamma=(\tilde{u}_{\star}, \tilde{v}(\lambda), \theta^A_{\star})$
with $\frac{d\tilde{v}}{d\lambda}=
\frac{1}{4}(\tilde{v}-\tilde{u}_{\star})^{2}$, so that along $\gamma$
one has
\begin{align}
\bme^{+}\phi_{\ell m}(\lambda)= \frac{d}{d\lambda}\phi_{\ell
  m}(\lambda).
\end{align}
Direct integration of $\tilde{v}(\lambda)$ gives
\begin{align}
\tilde{v}=\tilde{u}_{\star} - \frac{4c_{\star}}{2+c_{\star}\lambda},
\end{align}
with $c_{\star}=\sqrt{\tilde{v}'(\lambda_{\star})}$ where ${}'$ denotes $d/d\lambda$.
Hence $\lambda
\sim \tilde{v}^{-1}$ thus along $\bm\gamma$ one can write
\begin{align}\label{eq:high-order-asymptsys-NP-alonggammaGeneral}
  \Big(\frac{d}{d\lambda}\Big)^{\ell+1}\phi_{\ell m}(\lambda) \simeq
  \lambda^{-2(\ell+1)}f_{\ell m} (\lambda),
\end{align}
which can be formally integrated once an specific form of $f_{\ell m}$
is assumed.  In the case of equation \eqref{eq:fvAnsatz} one has
\begin{align}\label{eq:high-order-asymptsys-NP-alonggamma}
  \Big(\frac{d}{d\lambda}\Big)^{\ell+1}\phi_{\ell m}(\lambda) \simeq
  \frac{\alpha_{\ell}D_{\ell \ell m }}{\lambda} +
  \sum_{q=0}^{\infty}\beta_{q} A_{\ell +1 +q, \ell, m}\lambda^q.
\end{align}
Integration shows that the term with $D_{\ell, \ell, m}$ will
generate log-terms.  Integrating $\ell+1$ times and recalling that
$\tilde{\phi}=\Theta \phi$  gives the following expansion for the
physical field
\begin{align}
  \tilde{\phi}_{\ell m}(\lambda) \simeq \frac{1}{\tilde{\rho}} \Bigg(
  \alpha_{\ell}D_{\ell \ell m}\lambda^\ell \ln |\lambda| +
  \sum_{q=0}^{\infty}\beta_{q} A_{\ell +1 +q, \ell,
    m}\lambda^{q+1}\Bigg).
\end{align}
Recalling that $\lambda \sim \tilde{v}^{-1}$ and that $\tilde{\rho}
\sim \tilde{v}$ along $\gamma$ one concludes that
\begin{align}\label{eq:asympt_exp_good}
  \tilde{\phi}_{\ell m} \simeq \;\; \alpha_{\ell}D_{\ell \ell m}
  \frac{\ln |\tilde{\rho}|}{\tilde{\rho}^{\ell + 1}} +
  \sum_{q=0}^{\infty}\frac{\beta_{q} A_{\ell +1 +q, \ell,
      m}}{\tilde{\rho}^{q+1}}.
\end{align}
Observe that the logarithmic terms appear linearly and this is
consistent with translating directly the unphysical solution to
the physical set-up and evaluating at $\tilde{u}=\tilde{u}_{\star}$
as done in \cite{DuaFenGasHil22b}.

\begin{IndRemark}\label{Remark:AnsatzAndF}
As pointed out before, although specifying a form functional form for
$f_{\ell m} (\tilde{v})$ can be thought as the asymptotic-system
analogue of the Ansatz \eqref{eq:Ansatz}, notice that with the 
method of \cite{MinMacVal22} leading to Lemma
\ref{Lemma:Sol_Jacobi_and_Logs} one obtains the solution in a
spacetime neighbourhood of $i^0$ and not only restricted to the curves
$\bm\gamma$ as in the asymptotic system approach. In other words, despite
that equation \eqref{eq:asympt_exp_good} recovers the behaviour of the
solution \emph{towards} $\mathscr{I}^{+}$ as parametrised by $\lambda
\sim \tilde{v}^{-1} \sim \tilde{\rho}^{-1}$, the behaviour
\emph{along} $\mathscr{I}^{+}$ (say the $\tilde{u}$ dependence) of the
solution is lost.
\end{IndRemark}
With the caveat of Remark \ref{Remark:AnsatzAndF} in mind, it is
nonetheless interesting to note that changing the ``$f_{\ell
  m}$-Ansatz'' to
\begin{align}
  f_{\ell m}(\tilde{v})= \frac{ (\ln \tilde{v})^n}{\tilde{v}^{2\ell
      +1}} \qquad \text{for} \qquad n \geq 1,
\end{align}
and applying the asymptotic system heuristics described before,
integrating $\ell+1$-times equation
\eqref{eq:high-order-asymptsys-NP-alonggammaGeneral},
would lead to an expansion
of the form
\begin{align}\label{eq:logToPower}
\tilde{\phi}_{\ell m} \simeq \frac{1}{\tilde{\rho}^{\ell+1}}
\sum_{i=0}^{n}\alpha_i (\ln \tilde{\rho})^i,
\end{align}
where, in contrast to expansion \eqref{eq:asympt_exp_good}, the logs
enter non-linearly.  It would be interesting to see whether such type
of solution can be obtained by modifying the Ansatz
\eqref{eq:Ansatz} appropriately and hence obtaining an (exact) solution
in a spacetime neighbourhood of $i^0$ and $\mathscr{I}^{\pm}$ as that of Lemma
\eqref{Lemma:Sol_Jacobi_and_Logs}.  Whether such solution can indeed
be obtained ---say, with a similar method to that of \cite{MinMacVal22}---
hence  justifying the heuristics leading to equation
\eqref{eq:logToPower}, will be discussed elsewhere.

\section{Conclusions}

In this article the NP constants for a spin-0 field propagating close
to spatial and null infinity are computed in terms of prescribed
analytic initial data close to $i^0$. This was performed for a spin-0
field propagating in Minkowski spacetime and hence an infinite
hierarchy of conserved charges were obtained.  To do so, the framework
of the $i^0$-cylinder was employed. The relation between the NP
constants at future and past null infinity was investigated by
identifying the part of the initial data that determines the NP
constants. As discussed in the main text ---see also
\cite{MinMacVal22, DuaFenGasHil22b}, even if analytic initial data
close to $i^0$ is prescribed, the solution looses regularity at the
critical sets $I^{\pm}$ where $i^0$ and $\mathscr{I}$ meet. This is
controlled by a constant $D_{p,p,m}$ appearing in the parametrisation
of the initial data. It was shown that, for initial data that does not
satisfy the regularity condition (namely initial data with $D_{p,p,m}
\neq 0$) the classical NP constants are not well-defined. If the
regularity condition is satisfied (i.e. $D_{p,p,m} = 0$), the
classical NP constant at $\mathscr{I}^{\pm}$ stem from independent
parts of the initial data ---parametrised by the constants
$A_{\ell+1,\ell,m}$ and $B_{\ell+1;\ell,m}$, respectively--- and hence
there is no correspondence between them. Nonetheless, it was also
shown that using the $f(\tilde{\rho})$-modified NP constants of
\cite{Keh21_a} for $f(\tilde{\rho})=\tilde{\rho}$ gives rise to
conserved quantities for which the regularity condition is not needed
and, in fact, correspond precisely the terms in the initial data
controlling the regularity of the field: $D_{\ell,\ell,m}$.  In
summary, one concludes that, unlike the classical NP constants, the
\emph{$i^0$-cylinder NP constants} at $\mathscr{I}^{\pm}$, up-to a
numerical factor, coincide as they stem from the same piece of initial
data. Finally, it was shown how the main identities used in the
definition of the NP constants provide a path for a modification of
the asymptotic-system heuristic expansion of~\cite{DuaFenGasHil21}
which is sensitive to the presence of the (higher-order)
$i^0$-cylinder-logs, at least, in flat spacetime.  Generalisations of
this heuristic approach for asymptotically flat spacetimes in a
similar approach to that of ~\cite{DuaFenGasHil21} will be left for
future work.

\subsection*{Acknowledgements}
 
EG holds a FCT (Portugal) investigator grant 2020.03845.CEECIND.  EG
acknowledges funding of the Exploratory Research Project
2022.01390.PTDC by FCT.  We have profited from scientific discussions
with D. Hilditch, J. Feng and M. Duarte.

\newpage

\appendix

\section{Conservation laws}\label{App:A}

The generalisation of the conservation laws \eqref{eq:cons_laws} in a
Schwarzschild background have been obtained in \cite{Keh21_a},
consequently the flat space version follows immediately from them.
Here, for convenience of the reader and to have a self-contained
discussion, the conservation laws in flat space are derived with the
addition of a source term.

\begin{proposition}\label{prop:main_commutation}
Let $\tilde{\phi}$ be a solution to
$\tilde{\square}\tilde{\phi}=\tilde{S}$.  Let $\phi =
\tilde{\rho}\tilde{\phi}$ and $S=\tilde{\rho}^3\tilde{S}$ then $\phi$
  satisfies
 \begin{align}\label{eq:main_commutation}
   \uL L (\bme^{+})^n \phi = -\frac{2n}{\tilde{\rho}}L (\bme^{+})^n
   \phi + \frac{1}{\tilde{\rho}^2}(n(n+1) + \Delta_{\mathbb{S}^2})(\bme^+)^n \phi -
   \frac{1}{\tilde{\rho}^2}(\bme^+)^nS.
 \end{align}
\end{proposition}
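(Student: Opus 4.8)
The plan is to prove the identity by induction on $n$, with the base case $n=0$ being nothing more than the source-carrying wave equation rewritten in terms of the radiation field and a null frame, and the inductive step being a commutator computation that exploits the fact that, in physical coordinates, the NP vector takes the simple form $\bme^{+}=\Theta^{-2}L=\tilde{\rho}^{2}L$ (Proposition \ref{Prop:NPtoFgauge} together with $\Theta=\tilde{\rho}^{-1}$). Throughout I would use that $L=\partial_{\tilde{t}}+\partial_{\tilde{\rho}}$ and $\uL=\partial_{\tilde{t}}-\partial_{\tilde{\rho}}$ are coordinate vector fields with constant coefficients, so $[L,\uL]=0$, both commute with $\Delta_{\mathbb{S}^2}$ and with multiplication by any function of $\tilde{\rho}$, and $L(\tilde{\rho})=1$, $\uL(\tilde{\rho})=-1$.

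For the base case I would write the flat d'Alembertian in spherical polar coordinates, $\tilde{\square}\tilde{\phi}=-\partial_{\tilde{t}}^{2}\tilde{\phi}+\partial_{\tilde{\rho}}^{2}\tilde{\phi}+\tfrac{2}{\tilde{\rho}}\partial_{\tilde{\rho}}\tilde{\phi}+\tilde{\rho}^{-2}\Delta_{\mathbb{S}^2}\tilde{\phi}$, and substitute $\tilde{\phi}=\phi/\tilde{\rho}$. The classical radiation-field reduction collapses the radial part to $\partial_{\tilde{\rho}}^{2}\tilde{\phi}+\tfrac{2}{\tilde{\rho}}\partial_{\tilde{\rho}}\tilde{\phi}=\tilde{\rho}^{-1}\partial_{\tilde{\rho}}^{2}\phi$, while the $(t,\rho)$ part factorises as $-\partial_{\tilde{t}}^{2}+\partial_{\tilde{\rho}}^{2}=-\uL L$. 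Multiplying $\tilde{\square}\tilde{\phi}=\tilde{S}$ by $\tilde{\rho}^{3}$ and recalling $S=\tilde{\rho}^{3}\tilde{S}$ then yields $\uL L\phi=\tilde{\rho}^{-2}(\Delta_{\mathbb{S}^2}\phi-S)$, which is exactly \eqref{eq:main_commutation} at $n=0$.

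For the inductive step I set $\psi_{n}:=(\bme^{+})^{n}\phi$ and $T_{n}:=(\bme^{+})^{n}S$, so that $\psi_{n+1}=\tilde{\rho}^{2}L\psi_{n}$ and $T_{n+1}=\tilde{\rho}^{2}LT_{n}$. I would apply $\uL L$ to $\psi_{n+1}=\tilde{\rho}^{2}L\psi_{n}$, expanding by Leibniz using $L(\tilde{\rho})=1$ and $\uL(\tilde{\rho})=-1$; the commutation $[L,\uL]=0$ lets me rewrite the top-order piece as $\uL L^{2}\psi_{n}=L(\uL L\psi_{n})$. At this point the inductive hypothesis is substituted for $\uL L\psi_{n}$, reducing the whole expression to derivatives of $\psi_{n}$, $L\psi_{n}$, $L^{2}\psi_{n}$ and $T_{n}$. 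Finally I would re-express these through $L\psi_{n}=\tilde{\rho}^{-2}\psi_{n+1}$, $\tilde{\rho}^{2}L^{2}\psi_{n}=L\psi_{n+1}-2\tilde{\rho}^{-1}\psi_{n+1}$ and $LT_{n}=\tilde{\rho}^{-2}T_{n+1}$.

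The only genuine work, and the step I expect to be the main obstacle, is the coefficient bookkeeping at the end: one must verify that the various contributions to the $\tilde{\rho}^{-2}\psi_{n+1}$ term add up to $(n+1)(n+2)$. Concretely, the hypothesis together with the Leibniz prefactors supplies a coefficient $n^{2}-n-2$, while re-expressing $L^{2}\psi_{n}$ through $L\psi_{n+1}$ and $\psi_{n+1}$ feeds in a further $4(n+1)$, and $n^{2}-n-2+4(n+1)=(n+1)(n+2)$; simultaneously the $\tfrac{1}{\tilde{\rho}}L\psi_{n+1}$ coefficient assembles to $-2(n+1)$, and the angular and source contributions reproduce $\tilde{\rho}^{-2}\Delta_{\mathbb{S}^2}\psi_{n+1}$ and $-\tilde{\rho}^{-2}T_{n+1}$ (here it is essential that $\Delta_{\mathbb{S}^2}$ commutes with $L$ and with functions of $\tilde{\rho}$, so the angular term passes through untouched). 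Since $(n+1)(n+2)=(n+1)\big((n+1)+1\big)$, this is precisely \eqref{eq:main_commutation} with $n$ replaced by $n+1$, closing the induction.
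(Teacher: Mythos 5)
Your proposal is correct and follows essentially the same route as the paper's own proof: induction on $n$, writing $\bme^{+}=\tilde{\rho}^{2}L$, using the flat-space commutation $[L,\uL]=0$ together with Leibniz expansion via $L(\tilde{\rho})=1$ and $\uL(\tilde{\rho})=-1$, substituting the inductive hypothesis, and reassembling in terms of $(\bme^{+})^{n+1}\phi$; your coefficient bookkeeping ($n^{2}-n-2+4(n+1)=(n+1)(n+2)$ and the $-2(n+1)$ prefactor) checks out against the paper's $(N+1)(N+2)$ step. One small caveat: the claim that $L$ and $\uL$ ``commute with multiplication by any function of $\tilde{\rho}$'' is false as literally written (those non-vanishing commutators are precisely what your Leibniz corrections account for), but since your computation never actually uses it, this is a wording slip rather than a gap.
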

\begin{proof}
  This identity can be proven by induction. In terms of $\uL$ and $L$
  and $\phi$, the equation $\tilde{\square}\tilde{\phi}=\tilde{S}$
  simply reads
  \begin{align}\label{eq:uLLphiToSource}
    \uL L \phi = \frac{1}{\tilde{\rho}^2}(\Delta_{\mathbb{S}^2} \phi - S).
\end{align}
  Equation \eqref{eq:uLLphiToSource} corresponds case $n=0$ of
  expression \eqref{eq:main_commutation} and the basis of induction.
  Let's assume expression \eqref{eq:main_commutation} is valid for
  $n=N$.  For the induction step, one computes $\uL L (\bme^+)^{n+1}
  \phi$ as follows:
  \begin{flalign}
    \uL L (\bme^+)& ^{N+1} \phi = L \uL (\bme^+)^{N+1} \phi = L (\uL(
    (\tilde{\rho}^2L)(\bme^+)^{N} \phi)) \nonumber \\ & = L
    (\tilde{\rho}^2 \uL L (\bme^{+})^N \phi) + L(-2\tilde{\rho}L
    (\bme^{+})^N \phi) \nonumber \\ & = L (-2N \tilde{\rho} L
    (\bme^{+})^N \phi + (N(N+1) + \Delta_{\mathbb{S}^2}) (\bme^{+})^N \phi -
    (\bme^{+})^N S ) + L(-2\tilde{\rho}L (\bme^{+})^N \phi) \nonumber
    \\ & = -\frac{2(N+1)}{\tilde{\rho}}L (\bme^{+})^{N+1} \phi +
    \frac{1}{\tilde{\rho}^2}((N+1)(N+2) + \Delta_{\mathbb{S}^2})(\bme^+)^{N+1} \phi -
    \frac{1}{\tilde{\rho}^2}(\bme^+)^{N+1}S. \label{eq:inductionstep_commutation}
  \end{flalign}
   where the induction hypothesis (case $n=N$) was employed in the
   third line.  Noticing that equation
   \eqref{eq:inductionstep_commutation} is the same as expression
   \eqref{eq:main_commutation} with $n=N+1$ finishes the proof.
\end{proof}
Additionally, observe that since this calculation is done in flat
spacetime hence trivial commutations of $\uL$, $L$ and $\Delta_{\mathbb{S}^2}$ were
made in the first and fourth lines without the introduction of further
terms.
Using Proposition
\ref{prop:main_commutation} for $\ell=n$ and rearranging gives the
following:
\begin{corollary}\label{coro:main_commutation}
Let $\tilde{\phi}$ be a solution to
$\tilde{\square}\tilde{\phi}=\tilde{S}$ and let $\phi =
\tilde{\rho}\tilde{\phi}$ and $S=\tilde{\rho}^3\tilde{S}$.
then
\begin{align}\label{eq:coromain_commutation}
\uL (\tilde{\rho}^{-2\ell} L (\bme^{+})^{\ell}\phi_{\ell m}) =
 -\tilde{\rho}^{-2(\ell +1)}(\bme^+)^{\ell}S_{\ell m}
\end{align}
where $\phi_{\ell m}= \int_{\mathbb{S}^2} \phi \; Y_{\ell m} \;
d\sigma$ and $S_{\ell m}= \int_{\mathbb{S}^2} S \; Y_{\ell m}
\; d\sigma$ with $d\sigma$ denoting the area element in
$\mathbb{S}^2$.
\end{corollary}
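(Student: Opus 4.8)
The plan is to obtain the Corollary directly from Proposition \ref{prop:main_commutation} in three moves: specialise to $n=\ell$, project onto a single spherical-harmonic mode, and then recognise the result as a total $\uL$-derivative. First I would set $n=\ell$ in equation \eqref{eq:main_commutation}, which gives
\begin{align*}
  \uL L (\bme^{+})^{\ell}\phi = -\frac{2\ell}{\tilde{\rho}}L(\bme^{+})^{\ell}\phi
  + \frac{1}{\tilde{\rho}^2}\big(\ell(\ell+1)+\Delta_{\mathbb{S}^2}\big)(\bme^+)^{\ell}\phi
  - \frac{1}{\tilde{\rho}^2}(\bme^+)^{\ell}S.
\end{align*}

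The crucial observation is that the operators $L$, $\uL$ and $\bme^{+}$ differentiate only the coordinates $\tilde{t},\tilde{\rho}$ (equivalently $\tau,\rho$), carry only spherically symmetric weights, and hence commute with integration against $Y_{\ell m}$ over $\mathbb{S}^2$; since $\tilde{\rho}$ is constant on the spheres, the radial prefactors pass through the projection as well. Therefore projecting onto the $(\ell,m)$-mode sends $(\bme^{+})^{\ell}\phi \mapsto (\bme^{+})^{\ell}\phi_{\ell m}$ and $(\bme^{+})^{\ell}S \mapsto (\bme^{+})^{\ell}S_{\ell m}$, with no mixing of angular modes. For the Laplacian term I would use that $Y_{\ell m}$ is an eigenfunction, $\Delta_{\mathbb{S}^2}Y_{\ell m}=-\ell(\ell+1)Y_{\ell m}$, together with the self-adjointness of $\Delta_{\mathbb{S}^2}$, so that the $(\ell,m)$-projection of the middle term becomes $\tfrac{1}{\tilde{\rho}^2}\big(\ell(\ell+1)-\ell(\ell+1)\big)(\bme^+)^{\ell}\phi_{\ell m}=0$. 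This is the heart of the argument: for $n=\ell$ the operator $n(n+1)+\Delta_{\mathbb{S}^2}$ annihilates exactly the $\ell$-mode, leaving the first-order relation
\begin{align*}
  \uL L (\bme^{+})^{\ell}\phi_{\ell m} = -\frac{2\ell}{\tilde{\rho}}L(\bme^{+})^{\ell}\phi_{\ell m}
  - \frac{1}{\tilde{\rho}^2}(\bme^+)^{\ell}S_{\ell m}.
\end{align*}

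Finally I would repackage the right-hand side as a total derivative. Setting $X:=L(\bme^{+})^{\ell}\phi_{\ell m}$ and computing $\uL(\tilde{\rho}^{-2\ell})=2\ell\,\tilde{\rho}^{-2\ell-1}$ from $\uL=\partial_{\tilde{t}}-\partial_{\tilde{\rho}}$, the Leibniz rule gives $\uL(\tilde{\rho}^{-2\ell}X)=\tilde{\rho}^{-2\ell}\uL X + 2\ell\,\tilde{\rho}^{-2\ell-1}X$. Substituting the previous display for $\uL X$, the two contributions proportional to $\ell\,\tilde{\rho}^{-2\ell-1}X$ cancel identically, yielding $\uL(\tilde{\rho}^{-2\ell}X)=-\tilde{\rho}^{-2(\ell+1)}(\bme^+)^{\ell}S_{\ell m}$, which is equation \eqref{eq:coromain_commutation}. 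I do not anticipate any serious obstacle: the only point requiring genuine care is verifying that $\bme^{+}$, $L$ and $\uL$ commute with the angular projection so that no off-diagonal mode coupling survives — which holds precisely because none of them differentiate the angular coordinates — while the remaining cancellation of the $\tfrac{2\ell}{\tilde{\rho}}$ term is exactly the algebraic identity engineered by the weight $\tilde{\rho}^{-2\ell}$.
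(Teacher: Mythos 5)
Your proposal is correct and follows exactly the paper's own route: the corollary is obtained from Proposition \ref{prop:main_commutation} by setting $n=\ell$, projecting onto the $(\ell,m)$-harmonic so that the $\ell(\ell+1)+\Delta_{\mathbb{S}^2}$ term is annihilated, and rearranging via the Leibniz rule with $\uL(\tilde{\rho}^{-2\ell})=2\ell\,\tilde{\rho}^{-2\ell-1}$. Your write-up merely makes explicit the details (commutation of $L$, $\uL$, $\bme^{+}$ with the angular projection and self-adjointness of $\Delta_{\mathbb{S}^2}$) that the paper compresses into the phrase ``using Proposition \ref{prop:main_commutation} for $\ell=n$ and rearranging.''
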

Then, \emph{mutatis mutandis}, the time-reversed version of Proposition
\ref{prop:main_commutation} and Corollary \ref{coro:main_commutation} read
\begin{proposition}\label{prop:main_commutation_minus}
  Let $\tilde{\phi}$ be a solution to
  $\tilde{\square}\tilde{\phi}=\tilde{S}$.  Let $\phi =
  \tilde{\rho}\tilde{\phi}$ and $S=\tilde{\rho}^{3}\tilde{S}$ then $\phi$
  satisfies
 \begin{align}\label{eq:main_commutation_minus}
   L \uL (\bmue^{-})^n \phi = -\frac{2n}{\tilde{\rho}}\uL
   (\bmue^{-})^n \phi + \frac{1}{\tilde{\rho}^2}(n(n+1) +
   \Delta_{\mathbb{S}^2})(\bmue^-)^n \phi - \frac{1}{\tilde{\rho}^2}(\bmue^-)^nS.
 \end{align}
\end{proposition}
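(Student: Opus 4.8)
The plan is to prove Proposition~\ref{prop:main_commutation_minus} by induction on $n$, exactly paralleling the proof of Proposition~\ref{prop:main_commutation} but with the advanced/retarded vectors and the two NP frames interchanged, $L\leftrightarrow\uL$ and $\bme^{+}\leftrightarrow\bmue^{-}$. The starting point is the same reduction: rewriting $\tilde{\square}\tilde{\phi}=\tilde{S}$ in terms of $\phi=\tilde{\rho}\tilde{\phi}$ and $S=\tilde{\rho}^{3}\tilde{S}$ gives equation~\eqref{eq:uLLphiToSource}, $\uL L\phi=\tilde{\rho}^{-2}(\Delta_{\mathbb{S}^2}\phi-S)$, and since $L$ and $\uL$ commute on the flat background this equals $L\uL\phi=\tilde{\rho}^{-2}(\Delta_{\mathbb{S}^2}\phi-S)$, which is precisely the $n=0$ case and the basis of the induction.

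For the inductive step I would use the $\mathscr{I}^{-}$ relation of Proposition~\ref{Prop:NPtoFgauge}, namely $\bmue^{-}=\Theta^{-2}\uL=\tilde{\rho}^{2}\uL$, to write $(\bmue^{-})^{n+1}\phi=\tilde{\rho}^{2}\uL(\bmue^{-})^{n}\phi$. Substituting this into $L\uL(\bmue^{-})^{n+1}\phi$, freely commuting $L$, $\uL$ and the angular Laplacian $\Delta_{\mathbb{S}^2}$ (flat background), applying the Leibniz rule, and then inserting the induction hypothesis for $L\uL(\bmue^{-})^{n}\phi$ should reduce the whole expression to the three quantities $\uL(\bmue^{-})^{n+1}\phi$, $(\bmue^{-})^{n+1}\phi$ and $(\bmue^{-})^{n+1}S$, closing the induction in the same four lines as the proof of Proposition~\ref{prop:main_commutation}.

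A conceptually cleaner route, which also fixes all signs for free, is to deduce the statement directly from Proposition~\ref{prop:main_commutation} via the time-reflection isometry $\tilde{t}\mapsto-\tilde{t}$ of Minkowski spacetime. Under its pullback $\tilde{\rho}$, $\Delta_{\mathbb{S}^2}$ and $\tilde{\square}$ are invariant, while $L\mapsto-\uL$ and $\uL\mapsto-L$, so that $\bme^{+}=\tilde{\rho}^{2}L\mapsto-\tilde{\rho}^{2}\uL=-\bmue^{-}$ and $(\bme^{+})^{n}\mapsto(-1)^{n}(\bmue^{-})^{n}$. Conjugating the whole $\mathscr{I}^{+}$ identity by this isometry therefore produces the $\mathscr{I}^{-}$ analogue, the common factor $(-1)^{n}$ cancelling on both sides.

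The delicate part is the sign bookkeeping in the inductive step: because $L\tilde{\rho}=+1$ whereas $\uL\tilde{\rho}=-1$, the minus-case calculation is \emph{not} a blind symbol swap of the plus case, and the contributions from $\uL(\tilde{\rho}^{2})=-2\tilde{\rho}$, $L(\tilde{\rho}^{2})=+2\tilde{\rho}$ and from differentiating $\tilde{\rho}^{-1}$ and $\tilde{\rho}^{-2}$ must be tracked individually. Checking that, after using the wave equation to eliminate the mixed second derivative, these terms collapse to the stated quadratic coefficient (with $n(n+1)$ promoted to $(n+1)(n+2)$) and to the first-order term $\tfrac{2n}{\tilde{\rho}}\uL(\bmue^{-})^{n}\phi$ with the correct sign is exactly where the main care is needed; since this sign follows with no computation from the $(-1)^{n}$ cancellation above, the time-reflection argument is the most reliable way to pin it down.
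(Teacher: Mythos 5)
Your two routes (an induction mirroring that of Proposition \ref{prop:main_commutation}, or conjugation by the time reflection $\tilde{t}\mapsto-\tilde{t}$) are exactly the \emph{mutatis mutandis} argument the paper intends, but your sign bookkeeping fails at the decisive point. Under the reflection, \emph{each} application of $L$, $\uL$, $\bme^{+}$ or $\bmue^{-}$ contributes one factor of $(-1)$ (since $\tilde{\rho}$ is invariant), so the factor you call ``common'' is not common to all terms: the left-hand side $\uL L(\bme^{+})^{n}\phi$ and the terms $(\bme^{+})^{n}\phi$, $(\bme^{+})^{n}S$ contain $n+2$, $n$, $n$ null derivatives and pick up $(-1)^{n}$, whereas the first term on the right, $L(\bme^{+})^{n}\phi$, contains $n+1$ of them and picks up $(-1)^{n+1}$. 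Dividing the conjugated identity by $(-1)^{n}$ therefore flips the sign of that single term, and what the reflection argument actually yields is
\begin{align*}
L \uL (\bmue^{-})^n \phi = +\frac{2n}{\tilde{\rho}}\,\uL
   (\bmue^{-})^n \phi + \frac{1}{\tilde{\rho}^2}\big(n(n+1) +
   \Delta_{\mathbb{S}^2}\big)(\bmue^-)^n \phi - \frac{1}{\tilde{\rho}^2}(\bmue^-)^nS,
\end{align*}
i.e.\ with $+2n/\tilde{\rho}$, not the $-2n/\tilde{\rho}$ of the statement you set out to prove. Your inductive route, carried out honestly, says the same thing: because $L(\tilde{\rho}^{2})=+2\tilde{\rho}$ while $\uL(\tilde{\rho}^{2})=-2\tilde{\rho}$, the Leibniz term now combines with the induction hypothesis to give $+2(n+1)\tilde{\rho}\,\uL(\bmue^{-})^{n}\phi$ only if the hypothesis already carries the $+$ sign; with the printed $-$ sign one gets the coefficient $2(1-n)$ instead of $-2(n+1)$ and the induction fails already at the step $n=0\to n=1$.

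So the gap is not cosmetic: your argument, executed correctly, contradicts the statement as printed, and the resolution is that the printed proposition carries a sign error which your (corrected) argument would expose. A direct check confirms this: for $S=0$ and spherically symmetric $\phi=f(\tilde{u})$, one computes $L\uL\bmue^{-}\phi=8\tilde{\rho}f''-4f'$, which matches the $+$ version above, while the printed right-hand side gives $-8\tilde{\rho}f''+12f'$. Consistency with Corollary \ref{coro:main_commutation_minus} (the result the paper actually uses) also forces the $+$ sign, since expanding $L(\tilde{\rho}^{-2\ell}\uL(\bmue^{-})^{\ell}\phi_{\ell m})$ produces a term $-\tfrac{2\ell}{\tilde{\rho}}\uL(\bmue^{-})^{\ell}\phi_{\ell m}$ (as $L\tilde{\rho}^{-2\ell}=-2\ell\tilde{\rho}^{-2\ell-1}$) that must cancel against the first term of the proposition, and it cancels only in the $+$ version. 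You correctly flagged the sign tracking as the delicate point and correctly observed that the minus case is not a blind symbol swap of the plus case; but you then delegated the sign to a reflection argument executed incorrectly, so the proposal as written neither proves the printed statement (which is false as stated) nor identifies and states the corrected one.
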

  \begin{corollary}\label{coro:main_commutation_minus}
Let $\tilde{\phi}$ be a solution to
$\tilde{\square}\tilde{\phi}=\tilde{S}$ and let $\phi =
\tilde{\rho}\tilde{\phi}$ and $S=\tilde{\rho}^3\tilde{S}$.  Then
\begin{align}\label{eq:coromain_commutation_minus}
L (\tilde{\rho}^{-2\ell} \uL (\bmue^{-})^{\ell}\phi_{\ell m}) =
-\tilde{\rho}^{-2(\ell +1)}(\bmue^-)^{\ell}S_{\ell m}
\end{align}
where $\phi_{\ell m}= \int_{\mathbb{S}^2} \phi \; Y_{\ell m} \;
d\sigma$ and $S_{\ell m}= \int_{\mathbb{S}^2} S \; Y_{\ell m} \;
d\sigma$ with $d\sigma$ denoting the area element in $\mathbb{S}^2$.
\end{corollary}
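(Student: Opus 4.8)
The plan is to obtain Corollary \ref{coro:main_commutation_minus} from Proposition \ref{prop:main_commutation_minus} by exactly the two moves that take Proposition \ref{prop:main_commutation} to Corollary \ref{coro:main_commutation}: specialise to $n=\ell$ and project onto the $Y_{\ell m}$ mode. Indeed the whole $\mathscr{I}^{-}$ computation is the formal dual of the $\mathscr{I}^{+}$ one under $\tilde{u}\leftrightarrow\tilde{v}$ (equivalently $\tilde{\rho}\to-\tilde{\rho}$), which interchanges $L\leftrightarrow\uL$ and $\bme^{+}=\Theta^{-2}L\leftrightarrow\bmue^{-}=\Theta^{-2}\uL$, so one could alternatively just re-run the induction proving Proposition \ref{prop:main_commutation} with $L$ and $\uL$ swapped.

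First I would note that every operator occurring in Proposition \ref{prop:main_commutation_minus}---namely $L=\bm\partial_{\tilde{t}}+\bm\partial_{\tilde{\rho}}$, $\uL=\bm\partial_{\tilde{t}}-\bm\partial_{\tilde{\rho}}$ and, by Proposition \ref{Prop:NPtoFgauge}, $\bmue^{-}=\Theta^{-2}\uL=\tilde{\rho}^{2}\uL$---acts only on $(\tilde{t},\tilde{\rho})$ and contains no angular derivatives. Hence all of them commute with $\Delta_{\mathbb{S}^2}$ and pass through the projection $(\cdot)_{\ell m}=\int_{\mathbb{S}^2}(\cdot)\,Y_{\ell m}\,d\sigma$. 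Applying this projection to the $n=\ell$ instance of \eqref{eq:main_commutation_minus}, and using self-adjointness of $\Delta_{\mathbb{S}^2}$ together with $\Delta_{\mathbb{S}^2}Y_{\ell m}=-\ell(\ell+1)Y_{\ell m}$, turns the angular operator into the scalar $-\ell(\ell+1)$. The crucial cancellation is that for $n=\ell$ the potential factor $n(n+1)+\Delta_{\mathbb{S}^2}$ collapses to $\ell(\ell+1)-\ell(\ell+1)=0$ on the $\ell m$ mode, so that term disappears entirely and one is left with a purely first-order relation between $L\uL(\bmue^{-})^{\ell}\phi_{\ell m}$, a friction term $\propto\tilde{\rho}^{-1}\uL(\bmue^{-})^{\ell}\phi_{\ell m}$, and the source $\tilde{\rho}^{-2}(\bmue^{-})^{\ell}S_{\ell m}$.

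The last move is to recognise this relation as a total $L$-derivative. Setting $f:=\uL(\bmue^{-})^{\ell}\phi_{\ell m}$ and using $L(\tilde{\rho}^{-2\ell})=-2\ell\,\tilde{\rho}^{-2\ell-1}$, the product rule gives
\begin{align}
  L\big(\tilde{\rho}^{-2\ell}f\big)=\tilde{\rho}^{-2\ell}\Big(L f-\tfrac{2\ell}{\tilde{\rho}}\,f\Big),
\end{align}
so the weight $\tilde{\rho}^{-2\ell}$ is tailored precisely to swallow the friction term; substituting the projected identity then collapses the right-hand side to $-\tilde{\rho}^{-2(\ell+1)}(\bmue^{-})^{\ell}S_{\ell m}$, which is the assertion. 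The step that must be handled with care---and the only real obstacle---is the sign of the friction term: substituting the projected $n=\ell$ identity into the displayed product rule cancels the friction contribution only if that term enters with coefficient $+2\ell/\tilde{\rho}$, exactly annihilating the $-2\ell/\tilde{\rho}$ supplied by $L(\tilde{\rho}^{-2\ell})$. This is precisely the sign produced by the $\tilde{\rho}\to-\tilde{\rho}$ duality, which flips $1/\tilde{\rho}\to-1/\tilde{\rho}$ relative to the $\mathscr{I}^{+}$ case while simultaneously replacing the outer $\uL$-derivative of the weight (used for Corollary \ref{coro:main_commutation}) by an $L$-derivative here; verifying that these two sign flips are mutually compatible is all that distinguishes this argument from a verbatim copy of the proof of Corollary \ref{coro:main_commutation}.
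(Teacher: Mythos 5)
Your proposal is correct and follows essentially the paper's own route: the paper simply declares Proposition \ref{prop:main_commutation_minus} and Corollary \ref{coro:main_commutation_minus} to be the \emph{mutatis mutandis} time-reversed versions of Proposition \ref{prop:main_commutation} and Corollary \ref{coro:main_commutation}, and your two moves (set $n=\ell$, project onto $Y_{\ell m}$ so that $n(n+1)+\Delta_{\mathbb{S}^2}$ collapses to zero, then recognise a total $L$-derivative) are exactly the mechanism behind Corollary \ref{coro:main_commutation}.

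One point you flag deserves emphasis, because it is not pedantry: your requirement that the friction term enter with coefficient $+2\ell/\tilde{\rho}$ is genuinely incompatible with Proposition \ref{prop:main_commutation_minus} \emph{as printed}, which carries $-\tfrac{2n}{\tilde{\rho}}\uL(\bmue^{-})^{n}\phi$. With the printed sign, substituting into your product rule leaves an uncancelled term $-\tfrac{4\ell}{\tilde{\rho}^{2\ell+1}}\uL(\bmue^{-})^{\ell}\phi_{\ell m}$ and the corollary does not follow. Your duality argument gives the correct statement: the minus-case friction coefficient is $+\tfrac{2n}{\tilde{\rho}}$, i.e.\ the printed proposition has a sign typo. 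The structural reason is the one you give: in the plus case the weight derivative $\uL(\tilde{\rho}^{-2\ell})=+2\ell\,\tilde{\rho}^{-2\ell-1}$ must cancel a \emph{negative} friction term, whereas in the minus case $L(\tilde{\rho}^{-2\ell})=-2\ell\,\tilde{\rho}^{-2\ell-1}$ must cancel a \emph{positive} one. A direct check confirms this: for $n=1$, $S=0$ and $\phi=F(\tilde{u})$ one finds $\bmue^{-}\phi=2\tilde{\rho}^{2}F'$, $\uL\bmue^{-}\phi=4\tilde{\rho}^{2}F''-4\tilde{\rho}F'$ and $L\uL\bmue^{-}\phi=8\tilde{\rho}F''-4F'$, which equals $+\tfrac{2}{\tilde{\rho}}\uL\bmue^{-}\phi+\tfrac{2}{\tilde{\rho}^{2}}\bmue^{-}\phi$ but not the expression with the opposite friction sign. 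So your proof succeeds precisely because you re-derived the sign from the duality rather than quoting the proposition verbatim; the statement of Proposition \ref{prop:main_commutation_minus} should be corrected accordingly (Corollary \ref{coro:main_commutation_minus} itself, being the time-reverse of Corollary \ref{coro:main_commutation}, is stated correctly).
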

Notice that Propositions \ref{prop:main_commutation} and
\ref{prop:main_commutation_minus} along with their respective Corollaries
\ref{coro:main_commutation} and \ref{coro:main_commutation_minus}
could be employed for the analysis of the good-bad-ugly system in flat
spacetime of \cite{DuaFenGasHil22b}  by specifying the
source term $\tilde{S}$.  Such analysis and its generalisation to
asymptotically flat spacetimes will be discussed elsewhere.

\begin{IndRemark}\label{remark:sources}
  For the good-bad-ugly system in flat spacetime of
  \cite{DuaFenGasHil22b} one has
\begin{align}
       \tilde{S}_g =0 &\implies
       S_g=0,\\ \tilde{S}_{b}=(\partial_{\tilde{t}}\tilde{\phi})^2
       &\implies S_{b}= \frac{1}{4}\tilde{\rho}((L + \uL
       )\phi_g)^2,\\ \tilde{S}_{u}= 2\tilde{\rho}^{-2}
       \partial_{\tilde{t}}\tilde{\phi}_{u} &\implies S_u=
       (L+\uL)\phi_u,
\end{align}
and Corollaries \ref{coro:main_commutation} and
\ref{coro:main_commutation_minus} apply accordingly.
\end{IndRemark}


\begin{thebibliography}{10}

\bibitem{NewPen68}
E.~T. Newman and R.~Penrose.
\newblock New conservation laws for zero rest-mass fields in asymptotically
  flat space-time.
\newblock 305:175, 1968.

\bibitem{PenRin86}
R.~Penrose and W.~Rindler.
\newblock {\em Spinors and space-time. {V}olume 2. {S}pinor and twistor methods
  in space-time geometry}.
\newblock Cambridge University Press, 1986.

\bibitem{DaiVal02}
S~Dain and J~A Valiente-Kroon.
\newblock Conserved quantities in a black hole collision.
\newblock {\em Classical and Quantum Gravity}, 19(4):811--815, feb 2002.

\bibitem{Bac09}
Thomas Bäckdahl.
\newblock Relating the newman{\textendash}penrose constants to the
  geroch{\textendash}hansen multipole moments.
\newblock {\em Classical and Quantum Gravity}, 26(17):175021, aug 2009.

\bibitem{HawPerStro16}
Stephen~W. Hawking, Malcolm~J. Perry, and Andrew Strominger.
\newblock Soft hair on black holes.
\newblock {\em Physical Review Letters}, 116(23), jun 2016.

\bibitem{HawPerStro17}
Stephen~W. Hawking, Malcolm~J. Perry, and Andrew Strominger.
\newblock Superrotation charge and supertranslation hair on black holes.
\newblock {\em Journal of High Energy Physics}, 2017(5), may 2017.

\bibitem{HeLysMitStr15}
Temple He, Vyacheslav Lysov, Prahar Mitra, and Andrew Strominger.
\newblock {BMS} supertranslations and weinberg's soft graviton theorem.
\newblock {\em Journal of High Energy Physics}, 2015(5), may 2015.

\bibitem{Val07}
Juan Antonio~Valiente Kroon.
\newblock Asymptotic properties of the development of conformally flat data
  near spatial infinity.
\newblock {\em Classical and Quantum Gravity}, 24(11):3037--3053, may 2007.

\bibitem{GasVal20}
Edgar Gasperin and Juan Antonio~Valiente Valiente~Kroon.
\newblock {Zero rest-mass fields and the Newman-Penrose constants on flat
  space}.
\newblock {\em J. Math. Phys.}, 61(12):122503, 2020.

\bibitem{MinMacVal22}
Marica Minucci, Rodrigo~Panosso Macedo, and Juan A.~Valiente Kroon.
\newblock The maxwell-scalar field system near spatial infinity.
\newblock {\em Journal of Mathematical Physics}, 63(8):082501, aug 2022.

\bibitem{ValAli22}
Mariem Magdy~Ali Mohamed and Juan A.~Valiente Kroon.
\newblock Asymptotic charges for spin-1 and spin-2 fields at the critical sets
  of null infinity.
\newblock {\em Journal of Mathematical Physics}, 63(5):052502, may 2022.

\bibitem{DuaFenGasHil21}
Miguel {Duarte}, Justin {Feng}, Edgar {Gasper{\'\i}n}, and David {Hilditch}.
\newblock {High order asymptotic expansions of a
  good{\textendash}bad{\textendash}ugly wave equation}.
\newblock {\em Classical and Quantum Gravity}, 38(14):145015, July 2021.

\bibitem{DuaFenGasHil22b}
Miguel Duarte, Justin Feng, Edgar Gasper{\'{\i} }n, and David Hilditch.
\newblock The good-bad-ugly system near spatial infinity on flat spacetime.
\newblock {\em Classical and Quantum Gravity}, 40(5):055002, feb 2023.

\bibitem{Pen63}
R.~Penrose.
\newblock Asymptotic properties of fields and spacetimes.
\newblock {\em Phys. Rev. Lett.}, 10:66--68, 1963.

\bibitem{Val16}
Juan-Antonio Valiente-Kroon.
\newblock {\em Conformal Methods in General Relativity}.
\newblock Cambridge University Press, Cambridge, 2016.

\bibitem{Fra04}
J{\"o}rg Frauendiener.
\newblock Conformal infinity.
\newblock {\em Living Rev. Relativity}, 7(1), 2004.

\bibitem{Fri02CEE}
Helmut Friedrich.
\newblock {C}onformal {E}instein evolution.
\newblock {\em Lect. Notes Phys.}, 604:1--50, 2002.

\bibitem{Ste91}
J.~Stewart.
\newblock {\em Advanced general relativity}.
\newblock Cambridge University Press, 1991.

\bibitem{FenGas23}
Justin Feng and Edgar Gasperin.
\newblock Linearised conformal einstein field equations, 2023.

\bibitem{FriKan00}
H.~Friedrich and J.~K\'{a}nn\'{a}r.
\newblock Bondi-type systems near space-like infinity and the calculation of
  the {N}{P}-constants.
\newblock {\em J. Math. Phys.}, 41:2195, 2000.

\bibitem{NewPen62}
Ezra Newman and Roger Penrose.
\newblock An approach to gravitational radiation by a method of spin
  coefficients.
\newblock {\em Journal of Mathematical Physics}, 3(3):566--578, 1962.

\bibitem{ChrKla93}
Demetrios Christodoulou and Sergiu Klainerman.
\newblock {\em The Global Nonlinear Stability of the Minkowski Space (PMS-41)}.
\newblock Princeton University Press, Princeton, 1993.

\bibitem{Fri81}
H.~{Friedrich}.
\newblock {On the Regular and the Asymptotic Characteristic Initial Value
  Problem for Einstein's Vacuum Field Equations}.
\newblock {\em Proc. R. Soc. Lond. A}, 375(1761):169--184, March 1981.

\bibitem{Val03a}
J.~A. Valiente~Kroon.
\newblock Polyhomogeneous expansions close to null and spatial infinity.
\newblock In J.~Frauendiener and H.~Friedrich, editors, {\em The Conformal
  Structure of Spacetimes: Geometry, Numerics, Analysis}, Lecture Notes in
  Physics, page 135. Springer, 2002.

\bibitem{Mag07a}
Michele Maggiore.
\newblock {\em Gravitational Waves. Vol. 1: Theory and Experiments}.
\newblock Oxford University Press, Oxford, 2007.

\bibitem{Wal84a}
Robert~M. Wald.
\newblock {\em General Relativity}.
\newblock University of Chicago Press, Chicago, 1984.

\bibitem{VanHus16}
Alex Vañó-Viñuales and Sascha Husa.
\newblock Free hyperboloidal evolution in spherical symmetry, 2016.

\bibitem{DuaFenGasHil22a}
Miguel {Duarte}, Justin {C Feng}, Edgar {Gasper{\'\i}n}, and David {Hilditch}.
\newblock {Regularizing dual-frame generalized harmonic gauge at null
  infinity}.
\newblock {\em Classical and Quantum Gravity}, 40(2):025011, January 2023.

\bibitem{MonRin08}
Vincent Moncrief and Oliver Rinne.
\newblock {Regularity of the Einstein Equations at Future Null Infinity}.
\newblock {\em Class.Quant.Grav.}, 26:125010, 2009.

\bibitem{GriPod09}
J.~B. Griffiths and J.~Podolsk\'y.
\newblock {\em Exact space-times in Einstein's General Relativity}.
\newblock Cambridge University Press, 2009.

\bibitem{Val04}
Juan Antonio~Valiente Kroon.
\newblock A new class of obstructions to the smoothness of null infinity.
\newblock {\em Communications in Mathematical Physics}, 244(1):133--156, jan
  2004.

\bibitem{Fri98a}
Helmut Friedrich.
\newblock Gravitational fields near space-like and null infinity.
\newblock {\em Journal of Geometry and Physics}, 24(2):83 -- 163, 1998.

\bibitem{Keh21_a}
Leonhard M.~A. Kehrberger.
\newblock {The Case Against Smooth Null Infinity II: A Logarithmically Modified
  Price's Law}.
\newblock 5 2021.

\bibitem{GajKeh22}
Dejan Gajic and Leonhard M~A Kehrberger.
\newblock On the relation between asymptotic charges, the failure of peeling
  and late-time tails.
\newblock {\em Classical and Quantum Gravity}, 39(19):195006, aug 2022.

\bibitem{Val98}
Juan Antonio~Valiente Kroon.
\newblock Conserved quantities for polyhomogeneous spacetimes.
\newblock {\em Classical and Quantum Gravity}, 15(8):2479--2491, aug 1998.

\bibitem{Val99a}
Juan~Antonio Valiente-Kroon.
\newblock Logarithmic newman-penrose constants for arbitrary polyhomogeneous
  spacetimes.
\newblock {\em Classical and Quantum Gravity}, 16(5):1653--1665, jan 1999.

\bibitem{DuaFenGas22}
Miguel Duarte, Justin Feng, Edgar Gasper\'in, and David Hilditch.
\newblock Peeling in generalized harmonic gauge.
\newblock {\em Classical and Quantum Gravity}, 39(21):215003, oct 2022.

\bibitem{LinRod03}
Hans Lindblad and Igor Rodnianski.
\newblock The weak null condition for einstein's equations.
\newblock {\em Comptes Rendus Mathematique}, 336(11):901 -- 906, 2003.

\bibitem{GasHil18}
Edgar Gasper\'in and David Hilditch.
\newblock {The Weak Null Condition in Free-evolution Schemes for Numerical
  Relativity: Dual Foliation GHG with Constraint Damping}.
\newblock {\em Class. Quant. Grav.}, 36(19):195016, 2019.

\bibitem{Fri03}
Helmut Friedrich.
\newblock Smoothness at null infinity and the structure of initial data.
\newblock In Piotr~T. Chru{\'{s}}ciel and Helmut Friedrich, editors, {\em The
  Einstein Equations and the Large Scale Behavior of Gravitational Fields},
  pages 121--203, Basel, 2004. Birkh{\"a}user Basel.

\bibitem{GasVal18}
E.~{Gasper{\'\i}n} and J.~A. {Valiente Kroon}.
\newblock {Polyhomogeneous expansions from time symmetric initial data}.
\newblock {\em Classical and Quantum Gravity}, 34(19):195007, October 2017.

\bibitem{Fri18}
Helmut Friedrich.
\newblock Peeling or not peeling{\textemdash}is that the question?
\newblock {\em Classical and Quantum Gravity}, 35(8):083001, mar 2018.

\end{thebibliography}

\normalem
\bibliographystyle{unsrt}

\end{document}